\documentclass[10pt]{article}
\PassOptionsToPackage{table}{xcolor}
\PassOptionsToPackage{obeyspaces}{url}

\providecommand{\venue}{arxiv}
\edef\venue{\venue}

\newif\iftmlrlayout   
\newif\ifanonymous    

\def\venuetmlr{tmlr}
\def\venuetmlrfinal{tmlrfinal}
\ifx\venue\venuetmlr
  \tmlrlayouttrue  \anonymoustrue
  \usepackage{styles/tmlr}
\else\ifx\venue\venuetmlrfinal
  \tmlrlayouttrue  \anonymousfalse
  \usepackage[accepted]{styles/tmlr}
\else
  \tmlrlayoutfalse \anonymousfalse
  \usepackage[preprint]{styles/tmlr}
\fi\fi

\usepackage{hyperref}
\usepackage{url}
\usepackage{booktabs}
\usepackage{longtable}
\usepackage{graphicx}
\usepackage{amsmath,amssymb}
\usepackage{multirow}
\usepackage{xcolor}
\usepackage{enumitem}
\usepackage{array}
\usepackage{tikz}
\usetikzlibrary{arrows.meta,decorations.pathmorphing,decorations.pathreplacing}
\usepackage{pgfplots}
\pgfplotsset{compat=1.17}
\usepackage{colortbl}
\usepackage{lscape}

\definecolor{axblue}{HTML}{4C7E9F}    
\definecolor{axaqua}{HTML}{3D6455}    
\definecolor{axyellow}{HTML}{C29A4E}  
\definecolor{tlfound}{HTML}{D6DDE2}
\definecolor{tlagentic}{HTML}{BC5526}
\definecolor{tlsurvey}{HTML}{2C343D}

\definecolor{rgsingle}{HTML}{3E6F8E}
\definecolor{rgstep}{HTML}{1B8A6B}
\definecolor{rgmulti}{HTML}{B4683D}
\definecolor{rgsystem}{HTML}{6A4C93}
\DeclareRobustCommand{\regime}[2]{{\setlength{\fboxsep}{2pt}%
  \colorbox{#1}{\color{white}\fontsize{6.6}{7.6}\selectfont\sffamily\bfseries #2}}}
\DeclareRobustCommand{\rgST}{\regime{rgsingle}{single-turn}}
\DeclareRobustCommand{\rgSA}{\regime{rgstep}{one agent, many steps}}
\DeclareRobustCommand{\rgMA}{\regime{rgmulti}{multi-agent}}
\DeclareRobustCommand{\rgSY}{\regime{rgsystem}{whole trajectory}}
\DeclareRobustCommand{\rgALL}{\regime{black!45}{all four}}

\newcommand{\cmark}{$\checkmark$}
\newcommand{\xmark}{\textcolor{gray!60}{--}}
\newcommand{\pidx}[2]{\textcolor{#1}{\scriptsize\sffamily\bfseries #2}\,}
\newcommand{\nbar}[1]{\textcolor{axyellow!85}{\rule[-0.1ex]{#1mm}{5pt}}}
\newtheorem{definition}{Definition}
\newtheorem{proposition}{Proposition}

\newcolumntype{L}[1]{>{\raggedright\arraybackslash}p{#1}}
\newcommand{\thc}[1]{\textbf{#1}}
\newcommand{\rothc}[1]{\multicolumn{1}{l}{\makebox[0pt][l]{\rotatebox[origin=l]{90}{\textbf{#1}}}}}
\newcommand{\tband}[2]{\addlinespace[4pt]\multicolumn{#1}{@{}l}{\itshape #2}\\\addlinespace[2pt]}
\newcommand{\takeaway}[2][]{\par\medskip\noindent\fbox{\parbox{\dimexpr\linewidth-2\fboxsep-2\fboxrule\relax}{\small\textbf{Takeaway\if\relax\detokenize{#1}\relax\else\ (Section~\ref{#1})\fi.}\ #2}}\par\medskip}

\definecolor{titleblue}{HTML}{0645AD}

\title{Uncertainty Quantification for LLM Agents: A Taxonomy, an Evaluation Protocol, and an Empirical Study}

\author{\name Moule Lin \email moulel@tcd.ie \\
      \addr Trinity College Dublin, Ireland
      \AND
      \name Qizhen Lan \email Qizhen.Lan@uth.tmc.edu \\
      \addr University of Texas Health Science Center at Houston, USA
      \AND
      \name Shuhao Guan \email shuhao.guan@ucdconnect.ie \\
      \addr University College Dublin, Ireland
      \AND
      \name Weipeng Jing \email jwp@nefu.edu.cn \\
      \addr Northeast Forestry University, China
      \AND
      \name Jiexin Fan \email jifan@tcd.ie \\
      \addr Trinity College Dublin, Ireland
      \AND
      \name David Gregg \email david.gregg@cs.tcd.ie \\
      \addr Trinity College Dublin, Ireland
      \AND
      \name Goetz Botterweck \email goetz.botterweck@tcd.ie \\
      \addr Trinity College Dublin, Ireland}

\def\month{08}
\def\year{2026}
\def\openreview{\url{https://openreview.net/forum?id=XXXXXXXXXX}}

\newcommand{\link}[1]{\href{#1}{\textcolor{titleblue}{{\urlstyle{same}\nolinkurl{#1}}}}}
\newcommand{\shortlink}[2]{\href{#1}{\textcolor{titleblue}{{\urlstyle{same}\nolinkurl{#2}}}}}
\newcommand{\plainurl}[1]{{\urlstyle{same}\nolinkurl{#1}}}

\newcommand{\abstracttext}{%

Large language models (LLMs) are no longer deployed only for single-turn conversation but increasingly act as agents that plan, call tools, retrieve evidence, maintain memory, and interact over long horizons, often together with other agents through multi-turn conversations.
Therefore, knowing when to trust the agentic system is a prerequisite for safe deployment.
However, existing work on quantifying uncertainty for LLMs was built almost entirely for single-turn question answering.
This paper argues that errors and uncertainty arise from multi-turn conversations, environments, and tools rather than from a single-turn question answering setting.
It comes late, however, and is compounded in a single score that is too coarse to represent the unreliability.
We organize the literature with a three-axis taxonomy, (1) \textbf{what} the uncertainty is, (2) \textbf{how} it is estimated, and (3) \textbf{where} uncertainty arises during an agent pipeline.
We investigate step-level and trajectory-level calibration and show with a simple counterexample that the first does not imply the second.
Experiments on real agent traces across four models and up to a 50-step budget show that the proposed metric and reporting protocol (Trajectory-Checkpoint Expected Calibration Error, TC-ECE) can be computed and that step errors are coupled along a trajectory.
We find that confidence estimates from the agent's own responses do not consistently outperform a simple baseline.
The experiments also show that averaging all trajectories together can hide overconfidence at later stages, which becomes visible when results are analyzed across different horizons.
In simpler terms, this paper identifies where the uncertainty comes from in the agentic system pipeline, how to teach agents to know
when they are wrong, and why one confidence number is not enough.
}
\ifanonymous
  \hypersetup{pdfauthor={}}
\else
  \hypersetup{pdfauthor={Moule Lin, Qizhen Lan, Shuhao Guan, Weipeng Jing,
    Jiexin Fan, David Gregg, Goetz Botterweck}}
\fi
\hypersetup{
  pdftitle={Uncertainty Quantification for LLM Agents: A Taxonomy, an Evaluation Protocol, and an Empirical Study},
  colorlinks=true,
  linkcolor=titleblue,
  citecolor=titleblue,
  urlcolor=titleblue,
}

\iftmlrlayout
  
  \hypersetup{linkcolor=black, citecolor=black, urlcolor=black}
\else
  
\fi

\begin{document}

\iftmlrlayout
\maketitle

\begin{abstract}
\abstracttext
\end{abstract}

\else
\IfFileExists{figures/logos/Trinity-Main-Logo.jpg}{%
  \vspace*{-36pt}
  \noindent
  \raisebox{-0.5\height}{\includegraphics[height=1.15cm,trim=186 105 186 130,clip]{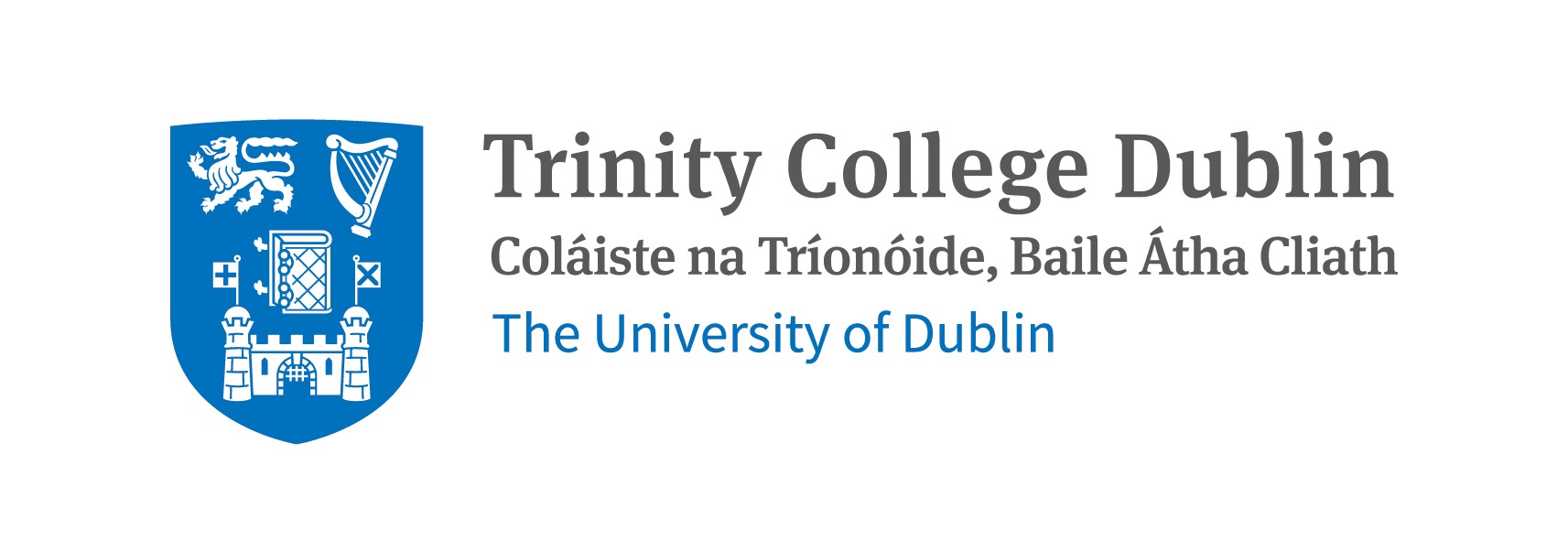}}%
  \hfill
  \IfFileExists{figures/logos/eu-funded.png}{%
    \raisebox{-0.5\height}{\includegraphics[height=1.1cm]{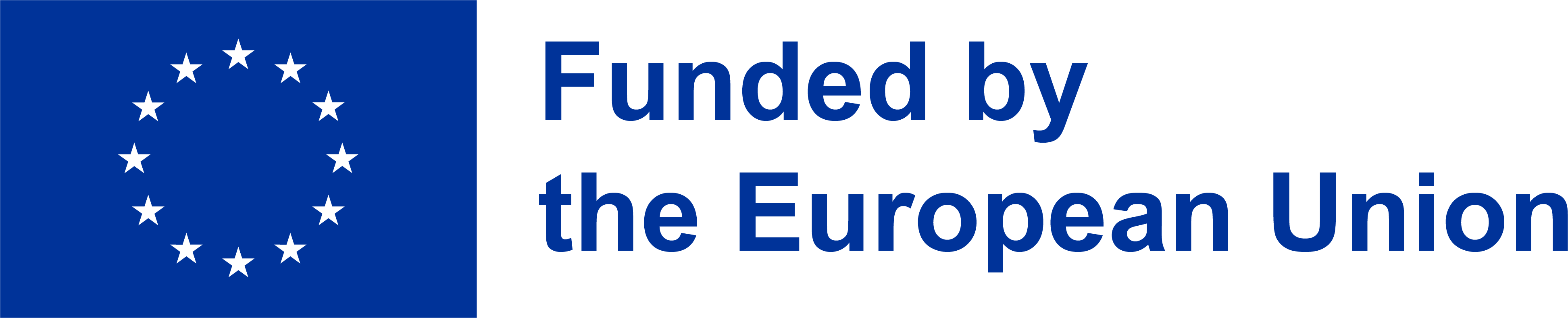}}}{}%
  \par
  \vspace{7pt}
}{\vspace*{-18pt}}
{\color{titleblue}\hrule height 1.2pt}
\vspace{20pt}

\begin{center}
  {\LARGE\bfseries\color{titleblue}
    Uncertainty Quantification for LLM Agents: A Taxonomy, an Evaluation Protocol, and an Empirical Study\par}
  \vspace{16pt}
  {\normalsize\bfseries
    Moule Lin\textsuperscript{1}, Qizhen Lan\textsuperscript{2},
    Shuhao Guan\textsuperscript{3}, Weipeng Jing\textsuperscript{4},
    Jiexin Fan\textsuperscript{1}, David Gregg\textsuperscript{1},
    Goetz Botterweck\textsuperscript{1,*}\par}
  \vspace{7pt}
  {\small
    \textsuperscript{1}\,Trinity College Dublin, Ireland\qquad
    \textsuperscript{2}\,University of Texas Health Science Center at Houston, USA\par
    \vspace{2pt}
    \textsuperscript{3}\,University College Dublin, Ireland\qquad
    \textsuperscript{4}\,Northeast Forestry University, China\par}
  \vspace{5pt}
  {\footnotesize
    moulel@tcd.ie,\enspace Qizhen.Lan@uth.tmc.edu,\enspace
    shuhao.guan@ucdconnect.ie,\enspace jwp@nefu.edu.cn,\par
    \vspace{1pt}
    jifan@tcd.ie,\enspace david.gregg@cs.tcd.ie,\enspace
    goetz.botterweck@tcd.ie\par}
\end{center}
\vspace{8pt}

\noindent{\bfseries\color{titleblue}Abstract:}
\abstracttext

\medskip
\noindent{\bfseries Keywords:} Uncertainty Quantification, LLM Agents, Calibration,
Hallucination Detection, Tool Use, Retrieval-Augmented Generation, Multi-Agent Systems

\smallskip
\noindent{\bfseries Date:} August 2026\\
{\bfseries Corresponding author (*):} Goetz Botterweck,
\href{mailto:goetz.botterweck@tcd.ie}{goetz.botterweck@tcd.ie}

\smallskip
\noindent{\footnotesize This work was funded by the European Union's Horizon Europe
2021--2027 Framework Programme under the Marie Sk\l{}odowska-Curie grant agreement
No.~101072456. Views and opinions expressed are those of the author(s) only and do not
necessarily reflect those of the European Union or the granting authority.}

\vspace{6pt}
{\color{titleblue}\hrule height 1.2pt}
\vspace{18pt}

\fi


\newpage
\begingroup
\hypersetup{linktoc=page, linkcolor=titleblue}
\tableofcontents
\endgroup
\newpage

\section{Introduction}
\label{sec:intro}

\begin{quote}
\itshape``Doubt is not a pleasant condition, but certainty is an absurd one.''
\par\vspace{3pt}
{\normalfont\upshape\hfill Voltaire, 1770}
\end{quote}

The deployment and evaluation of Large Language Models (LLMs) have changed dramatically.
An LLM was once deployed as a single-turn question-answering system where a question went in and an answer came out \citep{brown2020language,ouyang2022training}.
Today, however, the same model increasingly performs as an \emph{agent} with reasoning capabilities, a system that decomposes goals into plans, invokes external tools and APIs, retrieves documents, maintains and updates memory, and executes multi-step trajectories in an environment, frequently communicating with other agents \citep{yao2023react,wang2024survey,xi2023rise,yu2025survey}.
This shift undoubtedly improves the capabilities of LLMs but also varies the ways in which they can fail. A wrong action or erroneous information arising at an early stage will accumulate and flow in a trajectory, making it almost impossible to retrospectively locate the error points when the final output is only a sequence of tokens.
The main issue is not that agents make mistakes; all statistical methods make mistakes, but these mistakes occur \emph{early, confidently, and irreversibly}, in ways that a system-level checker cannot catch in time.
This naturally raises the question: \textbf{when should we trust the system, and when should it defer, re-plan, ask, or stop?}
\par
Uncertainty is an effective signal for addressing this question. It captures not only uncertainty in the final answer, but also when, where, and to what extent the model becomes uncertain throughout the trajectory.
There is much research on uncertainty mechanisms in LLMs. However, most existing work still focuses on single-turn QA, even as agents are increasingly deployed for complex tasks \citep{oh2026uncertainty}.
A large body of work in classical machine learning has studied calibration and uncertainty estimation. This includes post-hoc recalibration methods \citep{platt1999probabilistic,zadrozny2002transforming}, calibration metrics \citep{naeini2015obtaining}, Bayesian and ensemble-based uncertainty estimation \citep{gal2016dropout,blundell2015weight,lakshminarayanan2017simple}, methods for separating aleatoric and epistemic uncertainty \citep{kendall2017what,hullermeier2021aleatoric}, and approaches to selective prediction \citep{geifman2017selective,kamath2020selective}.
Prior work studies uncertainty in single-turn LLMs through confidence estimation, calibration, and selective prediction \citep{brier1950verification,chow1970optimum,guo2017calibration,hullermeier2021aleatoric}.
Models can verbalize their output uncertainty by being asked to explicitly report it \citep{lin2022teaching,tian2023just,xiong2023can}.
Uncertainty can be measured over the space of \emph{meanings} rather than surface sequence \citep{kuhn2023semantic,farquhar2024detecting}.
Model consistency can be used to detect hallucinations without external resources \citep{manakul2023selfcheckgpt,lin2023generating}.
Conformal prediction brings distribution-free guarantees \citep{vovk2005algorithmic,quach2024conformal}.
These methods are mature, well benchmarked \citep{vashurin2025benchmarking,fadeeva2023lmpolygraph}, and increasingly well understood, including their failure modes \citep{santilli2025revisiting}.
However, as agents are given more complex tasks, the object of study is moving from a single answer to an agentic system \citep{kirchhof2025position,xia2026propagation}.
\par
Uncertainty in an agentic system is more complex because uncertainty arising at an early stage can accumulate and propagate across subsequent steps \citep{zhang2026agentic,zhao2024saup,donaldson2026bayesian}. Since each step conditions on the agent's own previous outputs, errors across steps are correlated. As Section~\ref{sec:formal} formalizes, per-step calibration therefore provides limited information about the probability that the entire trajectory succeeds.
Uncertainty stems not only from the model's parameters but from noisy tools and dynamic environments \citep{han2024towards,xuan2026confidence}, an external component that no amount of introspection on the model can estimate.
And a final-step confidence score is diagnostically too late.
Writing $h_t$ for the interaction history after step $t$, what the next step needs is a reliability estimate $\hat{R}(h_t)$ available \emph{during} the trajectory. The key question is not how likely the trajectory is to fail but whether a wrong behavior \emph{now} would change the outcome, which we will formalize as the intervention advantage $A_i(h_t)$ of Eq.~\eqref{eq:advantage} \citep{zhang2026calibration}.
\begin{figure}[t]
  \centering
  \includegraphics[width=\textwidth]{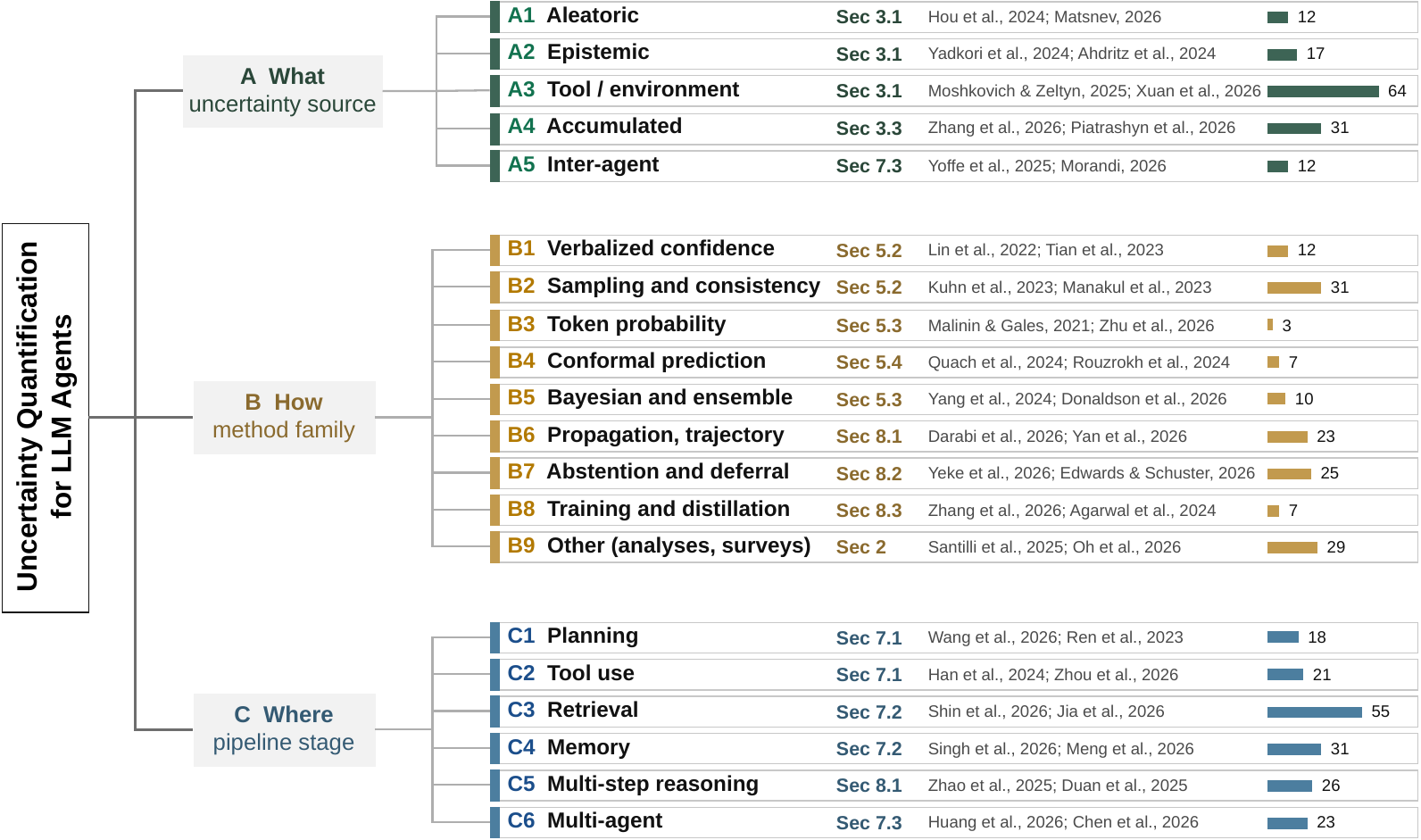}
  \caption{Three-axis taxonomy of agent uncertainty: source (A), method family
  (B), and pipeline stage (C). Each leaf shows its corpus count and two
  representative works. Papers may carry multiple tags within each panel.}
  \label{fig:tree}
\end{figure}
This paper reviews and organizes the fast-growing but scattered literature on uncertainty estimation for LLM agents. Our central object is a simple transition from \emph{single-turn confidence} to
\emph{trajectory-level reliability}, and this transition needs new methods,
new benchmarks, and new conceptual distinctions. More importantly, the analysis must change from the final output to the trajectory, and the evaluation standard must change from a correlational perspective to a decision-relevant one. Each of these changes is already visible in the literature we survey, but no survey has synthesized this research as a scientific report.
\begin{itemize}[leftmargin=1.4em,itemsep=2pt]
  \item \textbf{A documented corpus and a three-axis taxonomy}
  (Figure~\ref{fig:tree}). The taxonomy sorts methods by \emph{source} (what
  the uncertainty is), \emph{method family} (how it is estimated), and
  \emph{pipeline stage} (where it applies). The axes are largely non-nested.
  Table~\ref{tab:bigtable} in Appendix~\ref{app:bigtable} classifies every
  paper, and we release the corpus
  with its protocol so the counts can be recomputed
  (Appendix~\ref{app:corpus}).
  \item \textbf{A formal lens} (Section~\ref{sec:formal}). We define
  step-level and trajectory-level calibration and record the overlooked fact
  that, under the locally defined step labels the field uses, the first does
  not compose into the second.
  Proposition~\ref{prop:positive} bounds the error of the marginal-product
  approximation by measurable error correlations, and intervention advantage
  states the decision-theoretic target that separates estimation from control.
  The lens locates where trajectory calibration breaks rather than repairing
  it.
  \item \textbf{A stage-by-stage methods review}
  (Sections~\ref{sec:foundations} through \ref{sec:families}) covering
  single-turn foundations, tool use and planning, retrieval and memory,
  propagation and trajectory-level estimation, multi-agent coordination,
  abstention and control, and uncertainty-aware training, tracing each stage
  back to its single-turn ancestry.
  \item \textbf{A trajectory-level metric and a measurement on real traces}
  (Sections~\ref{sec:eval-metrics} and \ref{sec:empirical}). TC-ECE
  (Definition~\ref{def:tcece}) reads calibration at checkpoints along a
  trajectory rather than at the final answer alone. We compute it on four
  models over three experiments at horizons of up to $50$ steps. In these
  configurations the agent's stated confidence does not consistently beat a
  step-index baseline, and pooling checkpoints hides overconfidence that
  appears only late. Traces, code, and analysis are released.
  \item \textbf{A concrete research agenda} (Section~\ref{sec:challenges})
  distilled from the gaps the taxonomy makes visible, stated as ten problems,
  each with the reason it has resisted current tools and the form progress
  would take (Table~\ref{tab:agenda}).
\end{itemize}

Considering the different needs of readers interested in uncertainty estimation for LLM agents, we provide several entry points into this paper.
Section~\ref{sec:roadmap} maps the whole structure and marks the section that covers each block, so a reader can locate any of the entry points below before committing to a path through it.
A practitioner seeking a rapid deployment of an agent can refer directly to the decision guide in Figure~\ref{fig:guide} and the method families it points to.
A researcher looking for a research problem can start from Section~\ref{sec:formal}, and land on the agenda of Section~\ref{sec:challenges} (Table~\ref{tab:agenda}); the heatmap of Figure~\ref{fig:heatmap} compresses the same gap analysis into one view.
Newcomers can start with the background and single-turn foundations. These sections introduce the main concepts used later on, before moving to uncertainty in agentic systems.
\par
We built a documented, multi-source corpus by searching arXiv, the ACM Digital Library, OpenReview, IEEE Xplore, and PMLR with keywords derived from the taxonomy axes above. Every retained paper was read and classified independently by two annotators, with an LLM judge proposing a resolution for each disagreement and a third annotator making the final call; every taxonomy assignment was independently checked by a second annotator. The resulting core contains 120 papers spanning single-turn foundations, agentic methods, and related surveys.
We also include 454 additional works from the classical UQ, LLM confidence, agent systems, decision-theoretic, and retrieval literatures that the core builds on; Appendix~\ref{app:corpus} documents the protocol.
Figure~\ref{fig:timeline} shows the temporal distribution of the corpus: agent-focused work is nearly absent before 2024, emerges as a distinct wave through 2025, and grows rapidly into 2026, while single-turn foundations accumulate steadily throughout.
The pattern is worth registering. The field we survey is young enough that its conventions, its metrics, and its benchmark culture are all still fluid, which is why a synthesis now can shape them.

A reader who knows the survey literature will ask what this work adds to it.
Recent survey works define the problem and list its challenges. None of them sorts the estimation methods by pipeline stage, and none of them starts from a set of papers collected in a documented way \citep{oh2026uncertainty,kirchhof2025position,xia2026propagation}. \citet{oh2026uncertainty} do not treat abstention, deferral, or uncertainty-aware training as categories in their taxonomy. This paper covers all three, and Section~\ref{sec:related} sets out the full comparison row by row in Table~\ref{tab:surveys}.



\section{Related Work and Positioning}
\label{sec:related}

Existing surveys on uncertainty estimation mainly focus on non-agent or single-turn settings, or study agents without making uncertainty the central object. A pipeline-wide synthesis is therefore needed to organize this fragmented literature. Building upon previous surveys, we organize the area along three axes: (1) \textbf{what} the uncertainty is, (2) \textbf{how} it is estimated and used, and (3) \textbf{where} in the agent pipeline the methods operate.

There are some related surveys on uncertainty quantification in LLMs and agents, but none of them covers the full pipeline that uncertainty comes from.
Table~\ref{tab:surveys} summarizes the landscape and compares related surveys with ours. The comparison shows why the research gap remains open.
The first block is uncertainty without agents, mainly focusing on methods developed for traditional deep learning \citep{abdar2021review,gawlikowski2023survey,he2023survey,wang2023calibration} such as Bayesian approximations, ensembles, and single-forward-pass estimators. In these studies, information theory is often used to decompose the uncertainty into an epistemic part, reducible by adding more training data, and an aleatoric part, generally irreducible regardless of the amount of data. However, these methods often assume that the data is i.i.d., which does not generally hold for agents that must interact with an environment over multiple steps.
\par
With the rapid development of LLMs in recent years, a growing body of research moves UQ to single-turn LLMs \citep{geng2024survey,huang2024survey,shorinwa2024survey,xia2025survey,liu2025uncertainty}. For example, verbalized uncertainty asks the LLM to explicitly report an uncertainty value for its output; sampling and consistency methods generate multiple answers and measure the agreement among them. However, these remain single-generation settings: trajectories, tool calls, and interaction with environments are not yet considered.
Conformal prediction \citep{campos2024conformal,wen2024know} is one branch of uncertainty quantification. It provides a prediction set with a statistical coverage guarantee for each output.
\par
The second block focuses on hallucination.
These surveys \citep{ji2023survey,zhang2023sirens,huang2025survey} mainly study \emph{what} goes wrong and summarize benchmarks for measuring these failures. However, they do not treat uncertainty as the main tool for predicting \emph{when} such failures may occur.
\begin{figure}[t]
  \centering
\begin{tikzpicture}
\begin{axis}[
  width=0.96\textwidth, height=6.2cm,
  ybar stacked,
  bar width=16pt,
  ymin=0, ymax=38,
  ytick={0,10,20,30},
  ylabel={Papers per quarter},
  ylabel style={font=\footnotesize},
  symbolic x coords={pre2024,2024Q1,2024Q2,2024Q3,2024Q4,2025Q1,2025Q2,2025Q3,2025Q4,2026Q1,2026Q2,2026Q3},
  xtick=data,
  xticklabels={$\leq$\,2023,24Q1,24Q2,24Q3,24Q4,25Q1,25Q2,25Q3,25Q4,26Q1,26Q2,26Q3},
  x tick label style={font=\scriptsize},
  y tick label style={font=\scriptsize},
  ymajorgrids,
  grid style={gray!18},
  axis line style={gray!45},
  tick style={gray!45},
  axis x line*=bottom,
  axis y line*=left,
  legend style={at={(0.03,0.97)}, anchor=north west, draw=none,
                fill=none, font=\footnotesize, cells={anchor=west}},
  reverse legend,
]
\addplot+[ybar, fill=tlfound, draw=white, line width=0.4pt] coordinates {
  (pre2024,5) (2024Q1,1) (2024Q2,3) (2024Q3,0) (2024Q4,0) (2025Q1,0)
  (2025Q2,2) (2025Q3,2) (2025Q4,0) (2026Q1,2) (2026Q2,11) (2026Q3,0)};
\addlegendentry{Single-turn foundations}
\addplot+[ybar, fill=tlagentic, draw=white, line width=0.4pt] coordinates {
  (pre2024,2) (2024Q1,1) (2024Q2,1) (2024Q3,0) (2024Q4,2) (2025Q1,1)
  (2025Q2,7) (2025Q3,4) (2025Q4,8) (2026Q1,21) (2026Q2,23) (2026Q3,2)};
\addlegendentry{Agentic methods}
\addplot+[ybar, fill=tlsurvey, draw=white, line width=0.4pt] coordinates {
  (pre2024,2) (2024Q1,1) (2024Q2,1) (2024Q3,2) (2024Q4,2) (2025Q1,5)
  (2025Q2,2) (2025Q3,3) (2025Q4,1) (2026Q1,2) (2026Q2,1) (2026Q3,0)};
\addlegendentry{Surveys}
\node[font=\scriptsize, anchor=south] at (axis cs:2026Q1,25.4) {25};
\node[font=\scriptsize, anchor=south] at (axis cs:2026Q2,35.4) {35};
\end{axis}
\end{tikzpicture}
  \caption{Quarterly distribution of the 120-paper corpus, pre-2024 pooled. Agent-focused work
  emerges as a distinct wave after 2024 and grows rapidly into 2026. The final
  quarter is incomplete.}
  \label{fig:timeline}
\end{figure}
\begin{table}[!t]
\centering
\caption{Comparison of related surveys by scope, taxonomy, evaluation,
uncertainty propagation, and corpus disclosure. Checks superscripted
\textsuperscript{A}, \textsuperscript{B}, and \textsuperscript{C} mark the
three taxonomy panels used in this paper, in the panel colors of
Figure~\ref{fig:tree}; the letter, not the color, carries the distinction.
The last column records whether a survey
states how its papers were collected and releases the resulting paper set; it
is not a claim that this paper meets the standard of a PRISMA-style
systematic review, which Appendix~\ref{app:corpus} states it does not. The
estimator-internals column marks surveys that treat single-turn estimator
internals in greater depth than this paper attempts; we cede that column by
design and refer readers to those surveys for it.}
\label{tab:surveys}
\small
\setlength{\tabcolsep}{4.6pt}
\renewcommand{\arraystretch}{1.15}
\newcommand{\panelmark}[3]{\textcolor{#1}{$\checkmark$\kern0.5pt\raisebox{0.55ex}{\scriptsize\sffamily\bfseries #2}}\kern-#3pt}
\newcommand{\cmarkA}{\panelmark{axaqua!65!black}{A}{1.2}}
\newcommand{\cmarkB}{\panelmark{axyellow!75!black}{B}{1.2}}
\newcommand{\cmarkC}{\panelmark{axblue!65!black}{C}{1.2}}
\newcommand{\coveredin}[1]{\strut{\scriptsize\color{gray!45!black}#1}}
\begin{tabular}{@{}L{3.95cm}L{1.15cm}L{4.82cm}ccccccccc@{}}
\toprule
\thc{Topic} & \thc{\scriptsize Sections} & \thc{Representative surveys} & \rothc{UQ} & \rothc{Agentic} & \rothc{Sources} & \rothc{Estimators} & \rothc{Estimator internals} & \rothc{Stages} & \rothc{Benchmarks} & \rothc{Propagation} & \rothc{Corpus disclosed}\\
\addlinespace[4pt]
\midrule
\tband{12}{Uncertainty, without agents}
UQ and calibration in deep learning & \coveredin{\ref{sec:sources}--\ref{sec:metrics}} & \citet{abdar2021review,gawlikowski2023survey,he2023survey,wang2023calibration} & \cmark & \xmark & \cmarkA & \cmarkB & \cmark & \xmark & \xmark & \xmark & \xmark\\
\addlinespace[2pt]
UQ for single-turn LLMs & \coveredin{\ref{sec:foundations}} & \citet{geng2024survey,huang2024survey,shorinwa2024survey,xia2025survey,liu2025uncertainty} & \cmark & \xmark & \cmarkA & \cmarkB & \cmark & \xmark & \cmark & \xmark & \xmark\\
\addlinespace[2pt]
Conformal prediction; abstention & \coveredin{\ref{sec:conformal}, \ref{sec:control}} & \citet{campos2024conformal,wen2024know} & \cmark & \xmark & \xmark & \cmarkB & \cmark & \xmark & \xmark & \xmark & \xmark\\
\tband{12}{Failure modes and grounding, without uncertainty}
Hallucination in NLG and LLMs & \coveredin{\ref{sec:sampling}} & \citet{ji2023survey,zhang2023sirens,huang2025survey} & \xmark & \xmark & \xmark & \xmark & \xmark & \xmark & \cmark & \xmark & \xmark\\
\addlinespace[2pt]
Retrieval-augmented generation & \coveredin{\ref{sec:rag}} & \citet{gao2023retrieval} & \xmark & \xmark & \xmark & \xmark & \xmark & \cmarkC & \cmark & \xmark & \xmark\\
\tband{12}{LLM agents, without uncertainty}
Architectures and multi-agent systems & \coveredin{\ref{sec:multiagent}} & \citet{xi2023rise,wang2024survey,guo2024large} & \xmark & \cmark & \xmark & \xmark & \xmark & \cmarkC & \xmark & \xmark & \xmark\\
\addlinespace[2pt]
Trust, security, and evaluation & \coveredin{\ref{sec:eval}} & \citet{yu2025survey,su2025survey,mohammadi2025evaluation} & \xmark & \cmark & \xmark & \xmark & \xmark & \cmarkC & \cmark & \xmark & \xmark\\
\addlinespace[2pt]
Communication, provenance, small models & \coveredin{\ref{sec:multiagent}} & \citet{chen2026five,wang2026agent,sharma2025small} & \xmark & \cmark & \xmark & \xmark & \xmark & \cmarkC & \xmark & \xmark & \xmark\\
\tband{12}{Uncertainty for agents}
Positions and perspectives & \coveredin{\ref{sec:taxonomy}} & \citet{kirchhof2025position,zhang2026from,papamarkou2026position} & \cmark & \cmark & \cmarkA & \xmark & \xmark & \xmark & \xmark & \xmark & \xmark\\
\addlinespace[2pt]
Agent UQ foundations; benchmark classes & \coveredin{\ref{sec:taxonomy}, \ref{sec:eval}} & \citet{oh2026uncertainty} & \cmark & \cmark & \cmarkA & \cmarkB & \xmark & \xmark & \cmark & \xmark & \xmark\\
\addlinespace[2pt]
Uncertainty propagation mechanisms & \coveredin{\ref{sec:propagation}} & \citet{xia2026propagation} & \cmark & \cmark & \xmark & \xmark & \xmark & \xmark & \xmark & \cmark & \xmark\\
\midrule
\textbf{This paper} & \coveredin{all} & Corpus-grounded taxonomy, methods, and evaluation & \cmark & \cmark & \cmarkA & \cmarkB & \xmark & \cmarkC & \cmark & \cmark & \cmark\\
\bottomrule
\end{tabular}
\end{table}

\par
Surveys on the agent side focus more on trustworthiness and security \citep{yu2025survey,su2025survey,he2024emerged}, evaluation \citep{mohammadi2025evaluation}, provenance and traceability \citep{wang2026agent}, agent architectures \citep{wang2024survey,xi2023rise}, multi-agent systems \citep{guo2024large}, memory \citep{zhang2024surveymemory}, tool learning and augmented language models \citep{qin2024toollearning,mialon2023augmented}, and communication \citep{chen2026five,sarkar2025survey}.
In these works, uncertainty appears only as a paragraph in a trustworthiness taxonomy or a row in an evaluation checklist, rather than as the central organizing subject.
These literatures are broad and sophisticated, but remain largely disconnected on the questions this paper takes as primary. In particular, they do not provide a unified uncertainty-estimation framework organized around the full agent pipeline.
The closest work is \citet{oh2026uncertainty}. They give the first general formulation of agent UQ, lay out four agent-specific technical challenges, and analyze them numerically on $\tau^2$-bench. They also separate the uncertainty of the agent's own actions from the uncertainty of what tools and users return; our \textcolor{axaqua!65!black}{Panel A} follows that split, and our \textcolor{axyellow!75!black}{Panel B} widens the three estimator families they compare into eight indexed families. What they do not do is sort methods by pipeline stage, which is our \textcolor{axblue!65!black}{Panel C}. Table~\ref{tab:ohcompare} makes the relation explicit dimension by dimension.
\par
Three of the contributions listed above have no counterpart there. Section~\ref{sec:formal} records the elementary fact that step-level calibration does not compose into trajectory-level calibration and quantifies the resulting gap, so the quantity the field reports is not the quantity deployment needs. Section~\ref{sec:empirical} measures the step-error correlation that governs the gap. Abstention, deferral, and uncertainty-aware training, not treated as categories there, each get a section here.

\begin{table}[t]
\centering
\caption{Item-by-item relation to \citet{oh2026uncertainty}, the closest
prior synthesis. Each cell states what the work treats as an organizing
element; a dash means the dimension is not an organizing element of that
work, not that it is never mentioned.}
\label{tab:ohcompare}
\small
\setlength{\tabcolsep}{5pt}
\renewcommand{\arraystretch}{1.15}
\begin{tabular}{@{}L{3.2cm}L{4.6cm}L{5.8cm}@{}}
\toprule
\thc{Dimension} & \thc{\citet{oh2026uncertainty}} & \thc{This paper}\\
\midrule
Formal object & first general agent-UQ formulation & step- and trajectory-level calibration (Definitions~\ref{def:step}--\ref{def:traj}) with Propositions~\ref{prop:compose}--\ref{prop:positive}\\
\addlinespace[2pt]
Uncertainty sources & agent's own actions vs.\ tool and user returns & five-way \textcolor{axaqua!65!black}{Panel A} (\textcolor{axaqua!65!black}{A1}--\textcolor{axaqua!65!black}{A5})\\
\addlinespace[2pt]
Estimator coverage & three families compared & eight indexed families (\textcolor{axyellow!75!black}{B1}--\textcolor{axyellow!75!black}{B8}) plus a residual for work introducing no method (\textcolor{axyellow!75!black}{B9})\\
\addlinespace[2pt]
Pipeline-stage axis & --- & \textcolor{axblue!65!black}{Panel C} with per-stage sections (Section~\ref{sec:pipeline})\\
\addlinespace[2pt]
Abstention, deferral, training & not categories in its taxonomy & dedicated sections (Sections~\ref{sec:control} and~\ref{sec:training})\\
\addlinespace[2pt]
Trajectory calibration & --- & Definition~\ref{def:traj} and the TC-ECE protocol (Section~\ref{sec:eval-metrics})\\
\addlinespace[2pt]
Control target & --- & intervention advantage $A_i(h_t)$ (Eq.~\ref{eq:advantage})\\
\addlinespace[2pt]
Empirical component & numerical analysis on $\tau^2$-bench & protocol experiments on four models, three tasks, horizons to a $50$-step budget\\
\addlinespace[2pt]
Literature methodology & challenges and research directions & documented $120$-paper corpus with dual annotation, adjudication, and recall audits (Appendix~\ref{app:corpus})\\
\bottomrule
\end{tabular}
\end{table}

\par
\citet{xia2026propagation} is the closest on the propagation axis: it focuses on how uncertainty propagates through the system once it has been produced, while we focus on how uncertainty is estimated and which stages of the agent pipeline these estimation methods apply to.
\par
On the estimator side, the closest single method is Holistic Trajectory Calibration \citep{zhang2026agentic}, which produces a calibrated trajectory confidence from process-level features of the whole run and evaluates it across eight benchmarks. It is an estimator rather than a synthesis, so it does not appear in Table~\ref{tab:surveys}; Section~\ref{sec:eval-metrics} explains why it and the TC-ECE protocol proposed here are complementary rather than competing.

\section{Background and Preliminaries}
\label{sec:background}

This section defines the main concepts used throughout this paper.
It reviews how calibration is defined and measured and explains why uncertainty in language should be computed over meanings rather than tokens.
The formal analysis defines agent trajectories and calibration at the step and trajectory levels. It explains why calibration at individual steps does not necessarily imply calibration over a full trajectory, a concern that arises in sequential agent behavior.
We use $\tau = (s_0, a_1, o_1, \dots, a_T, o_T)$ to denote an agent trajectory and $h_t = (s_0, a_{1:t}, o_{1:t})$ to denote the history after step $t$. Let $Y \in \{0,1\}$ indicate trajectory success, defined by the task rather than by the steps, and let $Y_t \in \{0,1\}$ indicate success at step $t$. Under an \emph{absorbing-failure} convention, in which no step failure can be repaired later, the two are linked by $Y = \prod_t Y_t$. We denote step confidence by $c_t = c(a_t, h_{t-1})$. At the trajectory level, $R = \Pr(Y=1)$ denotes trajectory reliability, $\hat{R}(h_t)$ its history-conditioned estimate, $A_i(h_t)$ the advantage of an intervention, and $U$ the task utility.

\subsection{Sources of Uncertainty}
\label{sec:sources}

A standard decomposition separates two types of uncertainty. \emph{Aleatoric} uncertainty is irreducible noise or ambiguity in the task itself. \emph{Epistemic} uncertainty reflects the model's lack of knowledge and can, in principle, be reduced with more data or computation \citep{kendall2017what,depeweg2018decomposition,hullermeier2021aleatoric}.
This distinction has a precise information-theoretic form. Let $\theta$ denote model parameters with posterior $p(\theta \mid \mathcal{D})$. For a single model, let $p(y \mid x, \theta)$ be its predictive distribution. The Bayesian predictive distribution $p(y \mid x) = \mathbb{E}_{\theta}[p(y \mid x, \theta)]$ has the exact decomposition
\begin{equation}
\underbrace{H\big(\mathbb{E}_{\theta}[\,p(y \mid x, \theta)\,]\big)}_{\text{total predictive entropy}}
\;=\;
\underbrace{\mathbb{E}_{\theta}\big[H\big(p(y \mid x, \theta)\big)\big]}_{\text{expected aleatoric entropy}}
\;+
\underbrace{I\big(y ;\, \theta \mid x\big)}_{\text{epistemic mutual information}}
\end{equation}
where the expectations are over $\theta \sim p(\theta \mid \mathcal{D})$.
The aleatoric term is the posterior average of the task noise assigned by each model. The epistemic term measures how much the models in the posterior \emph{disagree}. It is zero exactly when the predictive distribution does not depend on $\theta$ under the posterior \citep{depeweg2018decomposition,hullermeier2021aleatoric}.
Classical deep-learning UQ applies this decomposition by approximating the posterior. Common approaches include variational weight distributions \citep{blundell2015weight}, Monte Carlo dropout \citep{gal2016dropout}, and deep ensembles \citep{lakshminarayanan2017simple}. \citet{ovadia2019can} tested these methods under distribution shift, and \citet{abdar2021review} and \citet{gawlikowski2023survey} provide broader surveys. Scaling these posterior approximations to modern networks remains an active problem. Recent approaches share stochastic weights across a network \citep{pmlr-v258-lin25a} or use flow-induced process priors to keep the posterior tractable \citep{lin2026flow}.
\par
For LLMs, this decomposition is difficult to estimate because the true posterior is not available. A deployed LLM is a single checkpoint, and treating it as a degenerate posterior sets the mutual-information term to zero, assigning all remaining uncertainty to aleatoric noise even when it reflects missing knowledge; the decomposition also shifts with the chosen model class and reference distribution \citep{hullermeier2021aleatoric,baan2023uncertainty}. Existing work therefore uses proxies, including LoRA-ensembles \citep{balabanov2024uncertainty,yang2024bayesian}, a posterior over the adapter \citep{lin2026bayesian}, calibration-tuned models \citep{kapoor2024large}, Bayesian views of in-context learning \citep{ling2024uncertainty}, and information-theoretic bounds from iterative prompting \citep{yadkori2024believe}. The distinction remains observable in practice, since small probes can separate inherently unanswerable questions from those a larger model can answer, so hidden states carry information the output distribution alone does not reveal \citep{ahdritz2024distinguishing}.
\par
The distinction matters for agents because the two types call for different \emph{actions}: ambiguity in the request motivates \emph{clarification} \citep{kuhn2023clam,hou2024decomposing}, whereas epistemic uncertainty about a fact motivates \emph{retrieval} or \emph{deferral} \citep{mallen2023not,shin2026era}. Methods that compress both into one scalar cannot by themselves identify the right response; \citet{yadkori2024believe} isolate the epistemic component information-theoretically, \citet{hou2024decomposing} attribute disagreement across clarified interpretations to ambiguity rather than ignorance, and \citet{kirchhof2025position} argue that even two scalars may not suffice for agents.
\citet{papamarkou2026position} go further and argue that the agent pipeline should itself be Bayes-consistent, with each component reporting a posterior that the orchestration layer combines coherently instead of thresholding scalars with different meanings. Under this view every \textcolor{axyellow!75!black}{Panel B} family would output a distribution and the \textcolor{axaqua!65!black}{Panel A} labels would become part of the interface between stages. We treat this as a position rather than an established result, because no agent system in our corpus implements the design.
\par
Agents introduce at least three further sources of uncertainty that are not fully captured by the classical distinction and do not fit cleanly into the decomposition above.
Tools and environments return noisy, partial, or stale observations \citep{han2024towards,chen2024benchmarking,xuan2026confidence}. At step $t$, this noise is aleatoric from the agent's current perspective because it comes from the observation rather than from uncertainty about the model. The agent can sometimes reduce this uncertainty through another action, such as repeating a query to an unreliable API or checking a conflicting search result against another source. This possibility makes the boundary between aleatoric and epistemic uncertainty less clear.
Uncertainty at each step can also accumulate along a trajectory \citep{zhao2024saup,duan2025uprop,zhang2026agentic}. Section~\ref{sec:formal} shows that this accumulation depends on correlations between errors, which marginal per-step quantities cannot capture.
Communication also passes uncertainty between agents in multi-agent systems \citep{yoffe2024debunc,huang2026counterfactual,chen2026every}. An agent that accepts a confident claim from another agent inherits the uncertainty behind that claim, often without knowing its source. Agents may share training histories, so their agreement may reflect correlated errors rather than independent support. Section~\ref{sec:multiagent} explains when this false consensus occurs and how it can be detected.
Table~\ref{tab:sources} summarizes the five sources in \textcolor{axaqua!65!black}{Panel A} of our taxonomy and lists representative responses to each source.

\begin{table}[t]
\centering
\caption{Five sources of agent uncertainty (Panel A), with examples and
corresponding responses. The upper block is classical; the lower block is
specific to agentic interaction.}
\label{tab:sources}
\small
\setlength{\tabcolsep}{5pt}
\renewcommand{\arraystretch}{1.15}
\begin{tabular}{@{}L{3.2cm}L{6.3cm}L{5.5cm}@{}}
\toprule
\thc{Source} & \thc{Agent example} & \thc{Mitigation}\\
\midrule
\tband{3}{The classical decomposition}
\pidx{axaqua!65!black}{A1}\textbf{Aleatoric} & Underspecified user instruction with several valid readings & Request clarification\\
\addlinespace[2pt]
\pidx{axaqua!65!black}{A2}\textbf{Epistemic} & Missing or outdated model knowledge about a required fact & Retrieval or deferral\\
\tband{3}{What the agentic setting adds}
\pidx{axaqua!65!black}{A3}\textbf{Tool / environment} & Unreliable API, contradictory search results, partial observability & Verification, repeated querying, or external grounding\\
\addlinespace[2pt]
\pidx{axaqua!65!black}{A4}\textbf{Accumulated} & Early parsing error that propagates to subsequent steps & Replanning, rollback, or early intervention\\
\addlinespace[2pt]
\pidx{axaqua!65!black}{A5}\textbf{Inter-agent} & False consensus from correlated agent failures & Claim weighting, arbitration, or retention of dissenting views\\
\bottomrule
\end{tabular}
\end{table}

\subsection{Calibration and Its Metrics}
\label{sec:metrics}

A confidence estimate is \emph{calibrated} if, among all predictions made with confidence $p$, a fraction $p$ are correct: writing $Y \in \{0,1\}$ for correctness and $c$ for the reported confidence, $\Pr(Y = 1 \mid c = p) = p$ for every $p$ in the range of $c$. Calibration does not require confidence to be informative, since the constant predictor $c \equiv \Pr(Y=1)$ is perfectly calibrated yet says nothing about which predictions are more likely correct. The concept predates deep learning; the definition matches \citeauthor{dawid1982well}'s \emph{well-calibrated} forecaster \citep{brier1950verification,dawid1982well,gneiting2007strictly}, and \citet{degroot1983comparison} separate calibration from \emph{refinement}, the same distinction as reliability against resolution, which explains how a near-constant predictor can be calibrated but useless. Section~\ref{sec:recalibration} covers the separate family of methods that correct a miscalibrated estimate.
\par
Calibration is typically measured with the expected calibration error (ECE), whose population form is $\mathbb{E}_{c}[\,\lvert \Pr(Y=1 \mid c) - c \rvert\,]$ and whose standard estimator bins $n$ predictions into $M$ confidence bins and averages the absolute accuracy--confidence gap per bin, weighted by bin mass \citep{naeini2015obtaining}. Proper scoring rules assess probabilities without binning. The Brier score $\mathrm{BS} = \frac{1}{n}\sum_i (c_i - Y_i)^2$ is strictly proper \citep{brier1950verification,gneiting2007strictly}, and \citet{murphy1973vector} decompose it into a \emph{reliability} term that vanishes under perfect calibration, a \emph{resolution} term that rewards separating easy from hard cases, and an irreducible base-rate term, which quantifies the caveat above: a useful signal needs both reliability and resolution. The logarithmic score is also strictly proper and is the standard sequence-level score for language models \citep{jiang2021how,malinin2021uncertainty}.
\par
Two other families treat confidence as a \emph{ranking}. AUROC measures how well confidence separates correct from incorrect outputs, equal (absent ties) to $\Pr(c_i > c_j \mid Y_i = 1, Y_j = 0)$, and has been standard for misclassification and hallucination detection since \citet{hendrycks2017baseline}. Selective prediction thresholds the ranking, answering when $c \geq \gamma$; varying $\gamma$ trades coverage against selective risk along the risk--coverage curve, whose area is the AURC \citep{geifman2017selective,elyaniv2010foundations}. This follows Chow's rejection rule under symmetric costs \citep{chow1970optimum} and extends to \emph{learning to defer}, where the downstream expert's error rate enters the objective \citep{madras2018predict,mozannar2020consistent}; selective prediction for LLMs uses the same framework \citep{cole2023selectively,srinivasan2024selective}, and Section~\ref{sec:control} gives the formal statement and its limits for agents. Table~\ref{tab:metrics} lists the metrics used throughout this paper and their main limitations for agents.
\par
Calibration has important limitations. \citet{kalai2024calibrated} show that a language model calibrated on a corpus must have a nonzero hallucination rate for facts that appear only once in training; this rate is bounded below by the fraction of such facts. This lower bound implies that calibration alone cannot eliminate fabrication for arbitrary, low-frequency claims. The result does not imply that calibration is an unsuitable objective: a miscalibrated model may also hallucinate without providing a reliable signal of when this occurs. For agents, the practical value of confidence depends on how it guides action. Retrieval, abstention, or clarification at a checkpoint can prevent errors that confidence estimation alone cannot.
\par
Evidence on LLM calibration is mixed. \citet{kadavath2022language} find that sufficiently large models are approximately calibrated on multiple-choice questions under suitable prompting. They also find that these models can assess the validity of their own claims, providing early evidence for the self-evaluation signals used by later methods. Verbalized confidence provides a direct way to elicit such signals. The same study finds that RLHF-style tuning can make these reports less reliable and more overconfident. Section~\ref{sec:foundations} discusses elicitation methods, training objectives, and countermeasures in detail.
\par
Measurement choices also affect the interpretation of empirical results.
The binned ECE is a plug-in estimator carrying discretization error and finite-sample bias; its value depends on the bin count and on equal-width against equal-mass binning, and these choices can change method rankings \citep{nixon2019measuring,kumar2019verified,roelofs2022mitigating}. Debiasing, adaptive binning, and consistent kernel or $L_p$ estimators each address part of the problem, but none makes an arbitrary fixed-bin estimate canonical \citep{widmann2019calibration,popordanoska2022consistent}.
Sequence-level scores add a length confound. Longer outputs accumulate more log-loss, so a measure correlated with output length can appear predictive merely because correctness is also length-correlated, an issue that confounds common LLM-UQ evaluations \citep{santilli2025revisiting}; length normalization reduces but does not remove it \citep{malinin2021uncertainty}. Agents are especially exposed because trajectory lengths vary widely, and many metrics in Table~\ref{tab:metrics} also require step-level correctness labels that are often unavailable.
\par
One further caveat is specific to language. Different token sequences express the same meaning, so token-level entropy can be high where semantic uncertainty is low; ``Paris,'' ``It's Paris,'' and ``The capital of France is Paris'' are one meaning. When correctness is defined semantically, uncertainty should be computed over meanings rather than surface strings \citep{kuhn2023semantic,farquhar2024detecting}, a point already known from pre-LLM machine translation and structured prediction \citep{ott2018analyzing,malinin2021uncertainty,baan2023uncertainty}.
Section~\ref{sec:sampling} reviews estimators that apply this idea. It also discusses a central obstacle to applying them to agents: sentence equivalence concerns meaning, whereas action equivalence also depends on task-specific consequences.

\begin{table}[t]
\centering
\caption{Common uncertainty metrics and their limitations when applied to agent
trajectories. The rows cover calibration, ranking, prediction sets, and
trajectory-level targets.}
\label{tab:metrics}
\small
\setlength{\tabcolsep}{5pt}
\renewcommand{\arraystretch}{1.15}
\begin{tabular}{@{}L{4.8cm}L{4.2cm}L{5.5cm}@{}}
\toprule
\thc{Metric} & \thc{What it measures} & \thc{Limitations for agents}\\
\midrule
\tband{3}{Proper scoring and calibration}
Sequence NLL, perplexity \citep{jiang2021how,malinin2021uncertainty} & Likelihood the model assigns to its own output & Token-level; overstates paraphrase disagreement; blind to tool feedback \citep{kuhn2023semantic}\\
\addlinespace[2pt]
ECE \citep{naeini2015obtaining,guo2017calibration} & Binned gap between stated confidence and empirical accuracy & Needs a correctness label per prediction, often unavailable mid-trajectory; binning-dependent \citep{nixon2019measuring}\\
\addlinespace[2pt]
Brier score \citep{brier1950verification,gneiting2007strictly} & Proper scoring of probabilistic forecasts & Same labeling problem; hides \emph{where} in the trajectory miscalibration occurs\\
\tband{3}{Ranking and selective prediction}
AUROC \citep{hendrycks2017baseline} & Whether confidence ranks correct outputs above incorrect ones & Confounded by output length \citep{santilli2025revisiting}, which varies widely across trajectories\\
\addlinespace[2pt]
Risk-coverage / AURC \citep{geifman2017selective,elyaniv2010foundations} & Error at each abstention rate & Ignores the cost of abstaining mid-trajectory: partial work, irreversibility\\
\tband{3}{Meanings and prediction sets}
Semantic entropy \citep{kuhn2023semantic,farquhar2024detecting} & Dispersion across the meanings in sampled outputs & Sampling whole trajectories is costly; equivalence between \emph{actions} is hard to define\\
\addlinespace[2pt]
Coverage of conformal sets \citep{vovk2005algorithmic,angelopoulos2023gentle} & Whether prediction sets contain the true outcome at the target rate & Exchangeability fails across history-dependent steps \citep{barber2023conformal}\\
\tband{3}{Trajectory-level targets}
Task success rate \citep{liu2024agentbench,mialon2023gaia} & End-to-end task performance & Scores the agent, not the confidence estimate; silent on whether failure was anticipated\\
\addlinespace[2pt]
Intervention advantage \citep{zhang2026calibration} & Utility gain from intervening now rather than continuing & Decision-relevant, but estimators and benchmarks are not yet standardized\\
\bottomrule
\end{tabular}
\end{table}

\subsection{Why Step Calibration Does Not Compose}
\label{sec:formal}

We model an agent as producing the trajectory $\tau$ of Section~\ref{sec:background}, in which $s_0$ is the initial state, $a_t$ an action (including a tool call, retrieval, or message), and $o_t$ an observation returned by a tool or the environment.
In a single-turn setting, UQ assigns a confidence score $c(a, s_0)$ to one action given the initial state.
Agent UQ instead considers a history-dependent confidence estimate at each step, $c(a_t \mid s_0, a_{1:t-1}, o_{1:t-1})$. It also considers trajectory-level reliability $R$, defined as the probability that the trajectory achieves the goal.
The distinction between these quantities can be expressed through step-level and trajectory-level calibration.
Throughout this subsection $Y_t$ is read as the \emph{local} goal of step $t$, so the absorbing-failure link $Y = \prod_t Y_t$ of Section~\ref{sec:background} applies to locally defined step labels; the remark after Proposition~\ref{prop:compose} returns to that choice.

\begin{definition}[Step-level calibration]
\label{def:step}
A step estimator $c_t = c(a_t, h_{t-1})$ is calibrated if
$\Pr\left(Y_t = 1 \mid c_t = p\right) = p$ for all $p$ in its range.
\end{definition}

\begin{definition}[Trajectory-level calibration]
\label{def:traj}
A reliability estimator $\hat{R}(h_t)$ is calibrated if, for every checkpoint
index $t$ separately,
$\Pr\left(Y = 1 \mid \hat{R}(h_t) = r\right) = r$ for all $r$ in the
estimator's range, where the probability at index $t$ is taken over the
trajectories that reach checkpoint $t$.
\end{definition}

\noindent\emph{Remark.} The order of the quantifiers matters: calibration is
required separately for each $t$, not after pooling $(r,t)$ pairs.
Miscalibration at late checkpoints cannot be offset by miscalibration in the
opposite direction at early checkpoints. The conditioning population also
changes with $t$: index $t$ is scored on the survivor subset of trajectories
that run for at least $t$ steps, a subset that shrinks and drifts as the
horizon grows. Pooled variants are weaker;
Section~\ref{sec:eval} discusses them together with the estimation protocol.

Both definitions require a well-defined step label $Y_t$.
In practice, this requires choosing what counts as step-level success. Whether step $t$ achieves its local goal may depend on later events in the trajectory. This creates a labeling problem, which we discuss in Section~\ref{sec:transition} and revisit in Section~\ref{sec:eval}.
The formal statements below apply to any fixed choice of $Y_t$, but that choice determines the real-world quantity they describe.
\par
Definition~\ref{def:step} captures the calibration target commonly used in single-turn UQ \citep{kadavath2022language,guo2017calibration}, whereas Definition~\ref{def:traj} gives the trajectory-level target relevant to agent deployment.
Both propositions below assume the absorbing-failure convention $Y=\prod_t Y_t$, and their scope is exactly that assumption. Agents that retry a failed tool call, re-issue a query, or re-plan after a wrong step do not satisfy it, and for them a step failure does not determine $Y$; Section~\ref{sec:control} treats recoverability as the property that separates two histories carrying equal risk. Definition~\ref{def:traj}, Definition~\ref{def:tcece}, and the experiment of Section~\ref{sec:empirical} do not rely on the convention, because they score $\hat{R}(h_t)$ against an independently observed task outcome $Y$ rather than against a product of step labels.
A simple way to relate the two is to multiply confidence scores across steps, but this calculation generally ignores dependence among step outcomes.
For dependent Bernoulli variables, the product of marginal probabilities need not equal the joint probability. Proposition~\ref{prop:compose} records this standard result for agent trajectories and separates the two calibration notions used in this paper. This distinction matters when a running product of step confidences is used as an estimate of trajectory reliability. Proposition~\ref{prop:positive} later quantifies the error in the product of step marginals and identifies its source.

\begin{proposition}[Calibration does not compose]
\label{prop:compose}
There are agents whose step-level confidence estimates are perfectly calibrated according to Definition~\ref{def:step}, but the product $\prod_t c_t$ can still be miscalibrated for trajectory-level success, either overestimating or underestimating the true probability of success.
\end{proposition}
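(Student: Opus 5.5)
The plan is to give explicit counterexamples. Two steps suffice ($T=2$), with constant step confidences, so step calibration reduces to matching a marginal. If $c_t \equiv p_t$ is constant, Definition~\ref{def:step} holds exactly when $\Pr(Y_t=1)=p_t$, because conditioning on $c_t=p_t$ is conditioning on the whole space. The product estimator is then the constant $p_1p_2$. A constant trajectory estimator is calibrated in the sense of Definition~\ref{def:traj} only if it equals $\Pr(Y=1)=\Pr(Y_1=1,Y_2=1)$, using the absorbing-failure link $Y=Y_1Y_2$. So it suffices to exhibit joint laws of $(Y_1,Y_2)$ with fixed marginals $p_1,p_2$ whose joint success probability differs from $p_1p_2$ in each direction.

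For overestimation, I will take negatively coupled steps. Let $p_1=p_2=1/2$ and $Y_2=1-Y_1$, which models an agent whose second step succeeds exactly when its first does not. Then each $c_t\equiv 1/2$ is calibrated, but $\Pr(Y=1)=0<1/4=c_1c_2$. For underestimation, I will take positively coupled steps, the case the paper emphasizes, where an early error propagates. Let $Y_2=Y_1$ with $\Pr(Y_1=1)=1/2$. Then $\Pr(Y=1)=1/2>1/4$. In both cases $\Pr(Y=1\mid \prod_t c_t=1/4)\neq 1/4$, which gives the proposition. I would also note the general identity $\Pr(Y_1Y_2=1)=p_1p_2+\mathrm{Cov}(Y_1,Y_2)$, so the sign of the step-error covariance decides the direction of miscalibration. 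This anticipates Proposition~\ref{prop:positive}.

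The main thing to guard against is the objection that constant estimators are degenerate. So I will add a nonconstant version in which each $c_t$ is calibrated and genuinely history dependent. Take a latent task difficulty $D\in\{\text{easy},\text{hard}\}$ that is visible to the step estimator through $h_{t-1}$. Set $c_t=\Pr(Y_t=1\mid D)$, with $Y_1,Y_2$ independent given $D$. Then $c_t$ is calibrated by the tower property, and conditional on $D$ the product is exactly right. Next I break this by letting the estimator at step 2 see only $D$ while $Y_2$ also depends on $Y_1$, for example $Y_2=Y_1$ within the hard stratum. Then $c_2=\Pr(Y_2=1\mid D)$ remains calibrated, but $c_1c_2$ underestimates $\Pr(Y=1\mid D)$ on hard tasks. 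Checking calibration of $c_2$ is routine: the events $\{c_2=p\}$ coincide with the $D$-strata, provided the two strata produce distinct values of $c_2$, which I will ensure by choosing the parameters.

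Finally, I will remark that the construction respects the survivor-population caveat in Definition~\ref{def:traj}. The estimator is evaluated at the final checkpoint, and every trajectory reaches it under the fixed horizon $T=2$, so no conditioning subtlety arises.
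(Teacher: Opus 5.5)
Your proposal is correct and is essentially the paper's proof: the same two-step agent with $Y=Y_1Y_2$, the same fair-coin coupling ($Y_2=Y_1$ or $Y_2=1-Y_1$), the calibrated constant estimator $c_t\equiv\tfrac12$, and the product $\tfrac14$ against true success $\tfrac12$ or $0$. Your covariance identity is the unnormalized form of the paper's Pearson-correlation formula, and the latent-difficulty variant is a sound extra that the paper does not include; its distinctness proviso is unnecessary, since the tower property already gives calibration of $c_2$.
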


\noindent\emph{Proof sketch.}
Consider a two-step agent in which both steps are always executed, so that $Y = Y_1Y_2$. Suppose the success of both steps depends on the same fair coin.
In the correlated case, both steps succeed when the coin lands heads.
Each step succeeds with probability $\tfrac{1}{2}$, so the constant estimator $c_t=\tfrac{1}{2}$ is perfectly calibrated at the step level. Multiplying the two confidence scores gives $\tfrac{1}{4}$, even though the true trajectory success probability is $\tfrac{1}{2}$.
In the anti-correlated case, step 1 succeeds on heads while step 2 succeeds on tails. The same step-level estimator is still perfectly calibrated, and the product still gives $\tfrac{1}{4}$, but now the true trajectory success probability is $0$. $\square$
\par
\medskip The discrepancy arises from dependence among step outcomes, which per-step confidence scores alone do not capture. For two steps, the effect can be written explicitly.
For two steps with non-degenerate outcomes, let $p_1 = \Pr(Y_1=1)$ and $p_2 = \Pr(Y_2=1)$ be the marginal success probabilities, and let $\rho$ be the Pearson correlation between $Y_1$ and $Y_2$. Their joint success probability is
\begin{equation}
\Pr(Y_1 = 1,\, Y_2 = 1)
\;=\;
p_1 p_2
\;+\;
\rho\, \sqrt{p_1 (1 - p_1)}\, \sqrt{p_2 (1 - p_2)}
\end{equation}
The joint probability varies linearly with $\rho$, equals the independence product at $\rho=0$, and reaches the Fr\'echet--Hoeffding bounds $\max(0,\, p_1 + p_2 - 1) \leq \Pr(Y_1 = 1, Y_2 = 1) \leq \min(p_1, p_2)$ at the extremes; the proof sketch's two examples are exactly $\rho=1$ (joint probability $\tfrac{1}{2}$) and $\rho=-1$ (joint probability $0$). Here $p_1$ and $p_2$ are step \emph{marginals}, not the reported confidences of Definition~\ref{def:step}, which satisfy $\mathbb{E}[c_t]=p_t$ under step-level calibration without any individual $c_t$ having to equal $p_t$; the remark after Proposition~\ref{prop:positive} develops this distinction.
Early errors can affect later decisions and induce positive correlations among step outcomes \citep{zhang2026agentic,zhou2026exploring}, in which case multiplying the step \emph{marginals} underestimates joint success. Neither scalar product identifies which step is likely to fail, and beyond two steps pairwise correlations do not determine the joint distribution, so marginals and pairwise statistics alone cannot guarantee calibrated trajectory-level confidence. This motivates propagation methods that model dependence across steps rather than assuming independence \citep{zhao2024saup,duan2025uprop}, reviewed in Section~\ref{sec:propagation}, and explicitly Markovian formulations of trajectory reliability \citep{trantruong2026measuring}.
\par
\medskip\noindent\emph{Remark (the counterexample depends on the step-label convention).}
Proposition~\ref{prop:compose} is a statement about locally defined step labels rather than about trajectories as such. Consider instead a \emph{survival} convention that sets $Y_t=1$ exactly when the prefix $h_t$ still admits a successful continuation. Under this convention, $Y=\prod_t Y_t$ holds by construction rather than by assumption, and the anti-correlated branch of the counterexample dissolves. When step 1 achieves its local goal but makes the failure of step 2 inevitable, the survival label sets $Y_1=0$, so the constant estimator $c_1=\tfrac{1}{2}$ is no longer calibrated for that label. Non-composition is therefore a product of the local-goal convention, and the survival convention removes it at a price. The survival label depends on whether any successful continuation exists, which is not locally observable during execution, and estimating it is precisely the trajectory-reliability problem $\hat{R}(h_t)$ that this paper studies. Choosing a step-label convention is thus a modeling decision that trades local observability against composability.
\par
The chain rule gives an exact product expression in terms of conditional step-success probabilities. Comparing this expression with the product of marginal step-success probabilities yields a bound on the approximation error.
Let $\lambda_t = \Pr\left(Y_t = 0 \mid Y_{1:t-1} = \mathbf{1}\right)$ denote the conditional probability that step $t$ fails given that all earlier steps succeeded. By the chain rule,
\begin{equation}
R = \Pr(Y=1) = \prod_{t=1}^{T} \Pr\left(Y_t=1 \mid Y_{1:t-1}=\mathbf{1}\right) = \prod_{t=1}^{T}(1-\lambda_t)
\label{eq:hazard}
\end{equation}
Whenever the conditional probabilities are defined, Eq.~\eqref{eq:hazard} is exact regardless of dependence among step outcomes. Define the \emph{survival factor} $q_t=1-\lambda_t$ as the probability that step $t$ succeeds given that all earlier steps succeeded. This conditional probability may differ from the marginal success probability $p_t=\Pr(Y_t=1)$. The implications of Eq.~\eqref{eq:hazard} for monitoring and estimation are discussed after Proposition~\ref{prop:positive}.
Proposition~\ref{prop:positive} bounds the error of the marginal-product approximation $\prod_t p_t$. The bound depends on the step marginals, prefix reliabilities, and correlations between each step outcome and the corresponding clean-prefix indicator. We call these correlations \emph{prefix-coupling coefficients}. At $k=2$ the coefficient coincides with the pairwise correlation between the first two step outcomes, but at later steps it couples $Y_k$ to the joint indicator of a clean prefix and should not be read as a pairwise correlation between two steps. All three quantities can be estimated from held-out trajectories to assess the possible size of the approximation error.
This result does not directly characterize the running product of an agent's reported confidences, $\prod_t c_t(h_{t-1})$, which is a history-dependent random variable. Step-level calibration in Definition~\ref{def:step} constrains only $\Pr(Y_t=1 \mid c_t=p)=p$ and does not imply that a particular value $c_t(h_{t-1})$ equals the marginal $p_t$. Proposition~\ref{prop:positive} concerns the difference between the joint probability and a product of marginal accuracies, not the calibration error of $\prod_t c_t$. The remark after the proposition explains the relation between these quantities.

\begin{proposition}[Composition error under dependent step outcomes]
\label{prop:positive}
Fix any step-label convention. Assume that every prefix event used for
conditioning has positive probability, so $R_{k-1} > 0$ for all $k \le T$.
Trajectories with a zero-probability prefix fall outside this statement because
later survival factors are undefined.
Let
$p_t = \Pr(Y_t = 1)$ denote the step marginals. Let
$R_k = \Pr(Y_{1:k} = \mathbf{1})$ denote the prefix reliabilities, with
$R_0 = 1$ and $R_T = R$. Let
$q_k = \Pr(Y_k = 1 \mid Y_{1:k-1} = \mathbf{1})$ denote the survival factors
in Eq.~\eqref{eq:hazard}. When both variables are non-degenerate, let $\rho_k$
denote the \emph{prefix-coupling coefficient}, the Pearson correlation between
$Y_k$ and the clean-prefix indicator $\prod_{s<k} Y_s$. Then
\begin{equation}
R \;-\; \prod_{t=1}^{T} p_t
\;=\;
\sum_{k=2}^{T} \Big(\prod_{t<k} q_t\Big)\,(q_k - p_k)\,\Big(\prod_{t>k} p_t\Big),
\qquad\text{hence}\qquad
\Big| R - \prod_{t=1}^{T} p_t \Big| \;\le\; \sum_{k=2}^{T} \lvert q_k - p_k \rvert
\end{equation}
and, when $0 < R_{k-1} < 1$ and $0 < p_k < 1$, each
conditional--marginal gap has the closed form
\begin{equation}
q_k - p_k
\;=\;
\rho_k\, \sqrt{p_k\,(1 - p_k)}\, \sqrt{\frac{1 - R_{k-1}}{R_{k-1}}}
\end{equation}
When $R_{k-1}=1$ or $p_k \in \{0,1\}$, the gap is zero.
In particular, the product of marginals equals trajectory reliability when
$q_k=p_k$ for every $k$, which is guaranteed under mutual independence.
Equivalently, when the prefix indicators and step outcomes are non-degenerate,
this condition holds when $\rho_k=0$ for all $k=2,\ldots,T$.
\end{proposition}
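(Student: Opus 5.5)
The plan is a telescoping argument between the exact chain-rule product of Eq.~\eqref{eq:hazard}, $R=\prod_t q_t$, and the marginal product $\prod_t p_t$. For $k=0,\dots,T$ define the hybrid products $P_k=\big(\prod_{t\le k} q_t\big)\big(\prod_{t>k} p_t\big)$, so that $P_T=R$ and $P_0=\prod_t p_t$. Consecutive hybrids differ only in the $k$-th factor, so
\[
P_k-P_{k-1}=\Big(\prod_{t<k} q_t\Big)(q_k-p_k)\Big(\prod_{t>k} p_t\Big).
\]
Summing from $k=1$ to $T$ gives $R-\prod_t p_t$. The $k=1$ term vanishes because $q_1=\Pr(Y_1=1)=p_1$, since the prefix is empty. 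This yields the stated identity. The positivity assumption $R_{k-1}>0$ is used only to make each $q_k$ well defined.

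The absolute bound then follows from the triangle inequality. Every $q_t$ and $p_t$ lies in $[0,1]$, so each coefficient $\big(\prod_{t<k}q_t\big)\big(\prod_{t>k}p_t\big)$ has absolute value at most $1$.

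For the closed form, I would write $Z=\prod_{s<k}Y_s$, so that $\mathbb{E}[Z]=R_{k-1}$. Then
\[
q_k=\frac{\Pr(Y_k=1,Z=1)}{R_{k-1}}=\frac{\mathbb{E}[Y_kZ]}{R_{k-1}}.
\]
The covariance identity gives $\mathbb{E}[Y_kZ]=p_kR_{k-1}+\mathrm{Cov}(Y_k,Z)$, hence
\[
q_k-p_k=\frac{\mathrm{Cov}(Y_k,Z)}{R_{k-1}}.
\]
When both $Y_k$ and $Z$ are non-degenerate, i.e. $0<R_{k-1}<1$ and $0<p_k<1$, I substitute
\[
\mathrm{Cov}(Y_k,Z)=\rho_k\sqrt{p_k(1-p_k)}\sqrt{R_{k-1}(1-R_{k-1})}.
\]
Dividing by $R_{k-1}$ gives the factor $\sqrt{(1-R_{k-1})/R_{k-1}}$. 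This is the same computation as the two-step display after Proposition~\ref{prop:compose}, with $Y_1$ replaced by the clean-prefix indicator.

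The degenerate cases are handled directly.
\begin{itemize}
\item If $R_{k-1}=1$, then $Z\equiv1$ almost surely, so $q_k=p_k$.
\item If $p_k\in\{0,1\}$, then $Y_k$ is almost surely constant and independent of $Z$. Hence $\Pr(Y_k=1\mid Z=1)=p_k$, again using $R_{k-1}>0$.
\end{itemize}

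The final claims follow from the identity. If $q_k=p_k$ for all $k$, every summand vanishes. Mutual independence makes $Y_k$ independent of $Z$, so $q_k=p_k$. In the non-degenerate case, $q_k=p_k$ is equivalent to $\rho_k=0$ by the closed form, since the prefactor is nonzero. I expect no real obstacle.

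The only step needing care is the bookkeeping of the telescoping indices, specifically confirming that the $k=1$ term drops so the sum can start at $k=2$. The other point to check is that the phrase \emph{equivalently} is justified only under the non-degeneracy caveat. If some $R_{k-1}=1$ or $p_k\in\{0,1\}$, then $\rho_k$ is undefined there, but the gap is zero anyway, so the condition reduces to $\rho_k=0$ at the remaining indices.
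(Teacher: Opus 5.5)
Your proof is correct and follows the paper's own argument in Appendix~\ref{app:proofs-prop2}: telescoping through the hybrid products with $q_1=p_1$, the triangle inequality using coefficients in $[0,1]$, and the identity $q_k-p_k=\mathrm{Cov}\big(Y_k,\prod_{s<k}Y_s\big)/R_{k-1}$, expanded through the Pearson correlation of two Bernoulli variables. Your separate treatment of the degenerate cases $R_{k-1}=1$ and $p_k\in\{0,1\}$ is also sound; the paper states these cases but does not argue them separately.
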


\noindent\emph{Proof sketch.}
The equality follows from a telescoping identity together with Eq.~\eqref{eq:hazard}, and the closed form follows by expanding the covariance between $Y_k$ and the clean-prefix indicator $\prod_{s<k} Y_s$. Appendix~\ref{app:proofs} gives the full proof. $\square$
\par
\medskip\noindent\emph{Remark (relation to the reported-confidence product $\prod_t c_t$).}
The estimator used in practice is $\prod_t c_t(h_{t-1})$, not $\prod_t p_t$. Step-level calibration links each reported confidence to its marginal success probability only in expectation. If each $c_t$ is calibrated in the sense of Definition~\ref{def:step}, then $\mathbb{E}[c_t] = p_t$, but a realized value $c_t(h_{t-1})$ need not equal $p_t$, and expectation cannot in general be moved through the product. The observed composition error $\prod_t c_t - R$ therefore separates pathwise into a reported-confidence--marginal gap and the dependence term of Proposition~\ref{prop:positive}, two mechanisms that are often combined under the phrase ``the product rule fails'' and that can act in opposite directions. Only the dependence term is controlled here; under nonnegative clean-prefix correlations it makes the marginal product \emph{conservative}, so a large overestimate of trajectory reliability cannot be explained by positive cross-step dependence alone and instead motivates separate checks of the calibration and dependence structure of the reported confidences $c_t$. Appendix~\ref{app:proofs-pathwise} states the identity, quantifies both terms on the traces of Section~\ref{sec:empirical}, and derives a companion bound on the mean bias of $\prod_t c_t$; neither bound establishes calibration of the product, and a multicalibration-style result that would is listed as an open problem in Section~\ref{sec:challenges}.
\par
\medskip Proposition~\ref{prop:positive} has practical implications. The marginal product may be a useful approximation for short trajectories and for pipelines that reduce step dependence through independent verification or context isolation, and the bound, whose ingredients $\rho_k$, $p_k$, and $R_{k-1}$ are all measurable on labeled trajectories, quantifies how far the product can stray. The proposition also explains why long trajectories are difficult \emph{to estimate}, a point distinct from the exponential decay of $R$ itself: the factor $\sqrt{(1-R_{k-1})/R_{k-1}}$ grows as prefix reliability falls, so even modest coupling opens a large gap at long horizons, and under nonnegative clean-prefix correlations the marginal product understates reliability. For the same reason the bound degrades with the horizon; in a representative $T=20$ setting it exceeds the trivial bound of $1$ once correlations reach roughly $0.13$, so it is mainly informative for short horizons or near-zero coupling (Appendix~\ref{app:proofs}), although the signed decomposition of Appendix~\ref{app:proofs-pathwise} remains valid at every horizon. Estimating $q_k$ and $\rho_k$ also requires labeled trajectories from the deployment distribution, and clean prefixes become rare as the horizon grows, so when enough labels exist, fitting a trajectory-level calibrator directly may be preferable; the proposition explains how reliability depends on step relationships rather than replacing empirical calibration.
\par
The practical value of the bound therefore depends on the strength of step dependence observed in practice.
Direct measurements in deployed agents remain scarce. Prior work mainly provides qualitative evidence that early mistakes propagate \citep{zhang2026agentic,zhou2026exploring}. Section~\ref{sec:empirical} measures this dependence directly on chained QA items. It finds a small positive step-error correlation in one of four conditions: reasoning chains that share a context. In that condition, realized reliability differs from the marginal product in the direction predicted by Proposition~\ref{prop:positive} and remains within its bound. The empirical evidence is limited: the interval barely excludes zero, only one of four tested conditions is positive, and no multiplicity correction is applied. These results support a hypothesis about reused reasoning content rather than an established mechanism. Section~\ref{sec:empirical} also measures the coupling on ALFWorld, where a real action--observation loop and exact per-step labels are available. There the within-game prefix-coupling coefficient on outcome-tracking step labels is positive, with game-bootstrap intervals that exclude zero in both models tested, and the marginal product understates prefix reliability at every horizon beyond a few steps, in the direction and within the bound that the proposition gives. The coupling is therefore stronger in a genuine agentic loop, where later steps directly use earlier actions and observations, than in chains that share only a context window. We did not find direct measurements of this effect elsewhere in our corpus.
Equation~\eqref{eq:hazard} gives a survival-based view of trajectory reliability, with $\lambda_t$ playing the role of a discrete-time hazard in the sense of classical survival analysis \citep{kalbfleisch2002statistical}. At each step, the relevant quantity is the probability of success conditional on all previous steps having succeeded, rather than the marginal success probability.
For online monitoring, the hazard can also be conditioned on the observed history: $\lambda_t(h_{t-1}) = \Pr\left(Y_t=0 \mid Y_{1:t-1}=\mathbf{1}, h_{t-1}\right)$. Averaging this quantity over histories that reach step $t$ without an earlier failure recovers $\lambda_t$.
Equation~\eqref{eq:hazard} distinguishes the valid multiplication of conditional survival probabilities from the naive multiplication of marginals. The naive product $\prod_t c_t$ is not guaranteed to condition on all previous steps having succeeded, and the equality $q_t=p_t$ that would license it, with independence as one sufficient condition, can fail in agent trajectories because earlier errors may remain hidden in the observable history \citep{zhang2026agentic}. The equation also shows the horizon effect directly: a constant hazard gives $R=(1-\lambda)^T$, so reliability decays exponentially with length, which raises the stakes for long-horizon execution \citep{wei2026longhorizon}, and the stepwise hazard $\lambda_t(h_{t-1})$ is a natural target for the monitoring methods of Section~\ref{sec:control}.
Hazard estimation also creates a statistical challenge for protocols based on Definition~\ref{def:traj}. The hazard at step $t$ is informed only by the roughly $N R_{t-1}$ trajectories that survive to it, an effective sample size that shrinks geometrically with the horizon and gives the marginal-hazard estimate a standard error of order $O\big((N R_{t-1})^{-1/2}\big)$; conditioning on the history requires further modeling assumptions. Trajectory-calibration results should therefore be reported separately by horizon with group sizes, so that poor late-step calibration is not hidden under the mass of early-step examples (Section~\ref{sec:eval}), and rollout-based labeling can supply Monte Carlo estimates of $\Pr(Y=1\mid h_t)$ where surviving prefixes are rare \citep{wang2024mathshepherd}.
\par
The chain rule for entropy gives a related but weaker view, separating uncertainty in the agent's actions from uncertainty introduced by tools and the environment \citep{han2024towards}. Even that full decomposition does not provide trajectory reliability, because policy entropy may reflect benign diversity among equally valid actions rather than the probability of task success \citep{kuhn2023semantic}. Appendix~\ref{app:proofs} states the decomposition and its limits.
Trajectory reliability alone does not determine an intervention decision. A calibrated $\hat{R}(h_t)$ estimates the probability of trajectory failure but does not determine whether intervention improves expected utility. Following \citet{zhang2026calibration}, we define the \emph{intervention advantage} of an oversight action $i$ at history $h_t$ as
\begin{equation}
A_i(h_t)
=
\mathbb{E}\left[U \mid \mathrm{do}(i), h_t\right]
-
\mathbb{E}\left[U \mid \mathrm{continue}, h_t\right]
\label{eq:advantage}
\end{equation}
where $U$ is the task utility. If intervention costs are not included in $U$, an intervention is beneficial when its expected utility gain exceeds its cost.
Under the loss assumptions of Chow's rule, thresholding a calibrated risk score is optimal in a one-step setting, whereas learning-to-defer routes selected cases to another expert \citep{chow1970optimum,madras2018predict,mozannar2020consistent}. Section~\ref{sec:control} gives the formal statement and explains its failure modes for agents.
For agent trajectories, the same failure risk does not always imply the same need for intervention. Two trajectories may have the same value of $1-\hat{R}(h_t)$ but differ greatly in how easy it is to recover from an error. For example, opening an incorrect file may be easy to reverse, whereas sending an email to the wrong recipient may have irreversible consequences. Similar differences between reversible and irreversible actions also appear in control settings studied by \citet{greenblatt2024ai}.
Thresholding a calibrated risk score can therefore be insufficient. Risk quantifies the probability of failure, whereas intervention advantage quantifies whether an action is expected to improve utility. We return to this distinction in Sections~\ref{sec:control} and~\ref{sec:challenges}.

Appendix~\ref{app:pomdp} connects these quantities to classical decision theory, including POMDPs and the value of information.

\takeaway[sec:formal]{Step-level calibration does not compose into trajectory-level
calibration. The observed composition error separates into a
step-miscalibration term and a dependence term that can act in opposite
directions, and only the dependence term is bounded by
Proposition~\ref{prop:positive}. The quantity to estimate online is the
history-conditioned hazard $\lambda_t(h_{t-1})$ of Eq.~\eqref{eq:hazard},
whose effective sample size decays geometrically with the horizon. Even a
calibrated risk estimate does not by itself determine when to intervene;
that decision is governed by the intervention advantage $A_i(h_t)$ of
Eq.~\eqref{eq:advantage}.}

\section{The Taxonomy and Paper Roadmap}
\label{sec:taxonomy}

This section defines the three axes used to classify every method in the corpus. It also explains how the remaining sections use these axes.
We also distinguish the \emph{interaction regime} in which a method operates; the roadmap in Section~\ref{sec:roadmap} tags each methods section with the regime it covers. There are four regimes: a \rgST{} setting, in which one prompt produces one answer; a \rgSA{} setting, in which one agent conditions on its earlier outputs across a trajectory; a \rgMA{} setting, in which several agents exchange messages; and the \rgSY{} view, which treats the full run as one object and studies how uncertainty accumulates along it.
The distinction between \rgSA{} and \rgMA{} is important. Both are agentic settings, but they involve different dependence structures. Section~\ref{sec:formal} studies correlation associated with one agent conditioning on its own earlier outputs. Section~\ref{sec:multiagent} studies correlation associated with shared training histories and communication across agents. Combining these settings into one category would hide which dependence structure a method addresses.

\subsection{Three Taxonomic Axes and the Distribution of Existing Work}

Our taxonomy describes each method along three axes. Figure~\ref{fig:tree} presents them as \textcolor{axaqua!65!black}{\textbf{Panel A}}, \textcolor{axyellow!75!black}{\textbf{Panel B}}, and \textcolor{axblue!65!black}{\textbf{Panel C}}, in the order used in this paper.
\textcolor{axaqua!65!black}{\textbf{Panel A}} identifies the \emph{source of uncertainty} addressed by a method. Table~\ref{tab:sources} introduces these sources, and Section~\ref{sec:sources} discusses them in detail.
\textcolor{axyellow!75!black}{\textbf{Panel B}} identifies the \emph{method family}. It contains nine indexed families, \textcolor{axyellow!75!black}{B1}--\textcolor{axyellow!75!black}{B9}. Five are uncertainty estimators inherited from single-turn research. Three are agent-native families: uncertainty propagation and trajectory-level estimation, abstention and deferral, and uncertainty-aware training. The remaining \emph{other} family contains analyses, benchmarks, and surveys that introduce no method of their own.
\textcolor{axblue!65!black}{\textbf{Panel C}} identifies the \emph{pipeline stage} at which a method is applied. Its categories include planning, tool use, retrieval, memory, multi-step reasoning, and multi-agent coordination. Single-turn and agent-general methods are recorded as broader categories.

The panels answer different questions. Panel A describes \emph{what} a method aims to measure, Panel B describes \emph{how} uncertainty is estimated or used, and Panel C describes \emph{where} the method operates. For most labels, a paper's value on one panel does not determine its value on another. The cross-tabulation in Figure~\ref{fig:heatmap} therefore captures combinations across axes rather than reducing to a block-diagonal structure.
We do not claim that the axes are statistically independent or that they partition the corpus. Figure~\ref{fig:heatmap} shows a non-uniform distribution, and some label pairs are structurally related. A paper may also receive several source, family, or stage tags.
One pair is nested by definition. Inter-agent uncertainty \textcolor{axaqua!65!black}{A5} exists only when agents communicate. Every paper with this source tag therefore also has the multi-agent stage tag \textcolor{axblue!65!black}{C6}. The reverse implication does not hold. Of the $23$ corpus papers tagged \textcolor{axblue!65!black}{C6}, $12$ study uncertainty created or transmitted through communication and consensus and therefore carry \textcolor{axaqua!65!black}{A5}. The remaining $11$ use a multi-agent scaffold for another purpose, such as a domain application or an architecture survey, and assign uncertainty to other sources.

The labels within a panel do not always sit at the same level of abstraction. \textcolor{axaqua!65!black}{Panel A} places the two classical uncertainty \emph{types} beside three labels that describe where uncertainty enters or how it moves, and these agentic labels can overlap the classical pair. \textcolor{axyellow!75!black}{Panel B} combines estimation mechanisms with a control mechanism, abstention and deferral \textcolor{axyellow!75!black}{B7}, and a learning procedure, uncertainty-aware training \textcolor{axyellow!75!black}{B8}, so it should be read as describing how uncertainty is \emph{estimated or used}. \textcolor{axblue!65!black}{Panel C} likewise combines pipeline components with an interaction regime and an architectural setting, which is why the interaction regime is marked separately by the roadmap badges (Section~\ref{sec:roadmap}) rather than read off Panel C. Appendix~\ref{app:taxnotes} records these conventions in full. It explains why the five source labels remain one panel, why estimation, control, and training are not split into a separate response axis, and how a four-dimensional reading of the taxonomy can be recovered from the three panels.

Figure~\ref{fig:tree} provides the canonical definition of \textcolor{axyellow!75!black}{Panel B}. It contains exactly nine indexed families, \textcolor{axyellow!75!black}{B1}--\textcolor{axyellow!75!black}{B9}, and these meanings remain fixed throughout this paper. The other displays are projections of this taxonomy.
Table~\ref{tab:families} expands the eight indexed families \textcolor{axyellow!75!black}{B1}--\textcolor{axyellow!75!black}{B8}. It omits the residual category \textcolor{axyellow!75!black}{B9} and adds two unindexed practical rows, internal-state probes and post-hoc calibration, for a total of ten rows.
Table~\ref{tab:bigtable} in Appendix~\ref{app:bigtable} retains these labels and adds an unindexed self-evaluation label to make the background literature easier to locate. These entries are evidence-table subcategories rather than additional Panel-B families. Figure~\ref{fig:heatmap} uses only the eight indexed columns \textcolor{axyellow!75!black}{B1}--\textcolor{axyellow!75!black}{B8}. 
One family has different short labels across displays. Category \textcolor{axyellow!75!black}{B8} is called \emph{training and distillation} in Figure~\ref{fig:tree}, \emph{uncertainty-aware training} in Table~\ref{tab:families}, and \emph{distillation} in the column header of Figure~\ref{fig:heatmap}. All three label the same family, which Section~\ref{sec:training} treats as covering both topics.

The leaf counts in Figure~\ref{fig:tree} are tag counts rather than paper counts. Papers may carry several tags per panel, some papers carry no tag on a given panel, and calibration is deliberately recorded as a property rather than a leaf, as explained below. The counts therefore neither cover every paper on every panel nor sum to the corpus size. Appendix~\ref{app:taxnotes} gives the exact per-panel accounting. The clearest case is memory, whose $31$ papers mostly co-occur with retrieval, which is why Section~\ref{sec:rag} describes memory-specific methods as scarce despite the large stage count.

Crossing \textcolor{axblue!65!black}{Panel C} with \textcolor{axyellow!75!black}{Panel B} gives the matrix in Figure~\ref{fig:heatmap}. We present the figure in Section~\ref{sec:transition}, after introducing the relevant mechanisms. Its distribution also explains the relative emphasis of later sections.
The largest cell is retrieval $\times$ sampling and consistency, with $18$ corpus papers. This concentration may reflect how readily single-turn resampling transfers to retrieval settings, where correctness labels are often available, horizons are short, and decisions are relatively local.
The next largest cell is multi-step reasoning $\times$ propagation, with $15$ papers. This combination is more closely associated with agentic settings because errors can accumulate across long trajectories.
Calibration is not a heatmap column. A paper may report calibration while using any estimation mechanism, so we record calibration as a separate property. Treating it as a family would make it the largest cell without explaining how the underlying methods operate.
Sparse cells have different causes. Some combinations are structurally difficult. For example, conformal methods in multi-agent settings face strong inter-agent dependence and may violate standard exchangeability assumptions. Other gaps have less direct technical explanations. Verbalized confidence has received limited attention in planning and multi-agent settings despite its low cost. Distillation methods rarely evaluate whether trajectory-level calibration transfers to the student, even though process-level supervision could support such an evaluation. Probe-based methods have also rarely been applied to \emph{action} correctness, as discussed in Section~\ref{sec:foundations}. These under-covered combinations motivate several directions in Section~\ref{sec:challenges}; the taxonomy-independent sensitivity searches in Appendix~\ref{app:corpus} identify neighboring work and delimit these claims.
Two limitations apply when interpreting sparse cells. The search terms were derived from these axes, as described in Appendix~\ref{app:corpus}. A cell may therefore appear sparse because another research community uses different terminology.
The search covered arXiv, the ACM Digital Library, OpenReview, IEEE Xplore, and PMLR, but coverage can still vary across research communities and platform-specific vocabularies. Work in statistics, robotics, control, human--computer interaction, and decision theory was also examined through reference-list checks and targeted follow-up searches. Section~\ref{sec:challenges} therefore describes sparse cells as under-covered combinations in the surveyed literature. Where a neighboring field contains relevant work, we identify that connection rather than infer absence from the corpus count alone.

\subsection{A Roadmap to the Remaining Sections}
\label{sec:roadmap}

This subsection summarizes the structure of this paper and marks the section that covers each block. The badges below mark the interaction regime each section studies.
Section~\ref{sec:foundations} reviews the \rgST{} inheritance, and Section~\ref{sec:transition} examines what changes when the same model instead acts over a trajectory.
The two methods sections organize the same corpus along different axes. Section~\ref{sec:pipeline} uses \textcolor{axblue!65!black}{Panel C}, the pipeline stage, because its methods differ mainly in where they operate in the agent loop: its planning, tool-use, retrieval, memory, and perception-surface stages study a \rgSA{} setting, and its coordination stage the \rgMA{} setting. The method families used at each stage are diverse and secondary to the setting.
Section~\ref{sec:families} uses \textcolor{axyellow!75!black}{Panel B}, the method family. Propagation, control, and uncertainty-aware training apply across stages rather than belonging to one stage: propagation and the evaluation protocol of Section~\ref{sec:eval} take the \rgSY{} view, abstention and deferral act during a \rgSA{} run, and uncertainty-aware training draws its objectives from \rgALL{} regimes.
These organizations are complementary. A stage-based section identifies which families appear at a given stage, while a family-based section identifies the stages to which a family has been applied. Both sections use \textcolor{axaqua!65!black}{Panel A} to distinguish the uncertainty source addressed by each method.

\subsection{What This Paper Reports and What It Proposes}
\label{sec:ours}

Most claims in this paper report findings from prior work. Our proposals
include the three-panel taxonomy and its indexing in
Figure~\ref{fig:tree}; Propositions~\ref{prop:compose}
and~\ref{prop:positive}, together with Definitions~\ref{def:step}
and~\ref{def:traj}; and a trajectory-aware reporting protocol, of which the
TC-ECE statistic in Definition~\ref{def:tcece} is one component alongside
trajectory-level resampling, the causally restricted horizon baseline,
position stratification, the signed calibration gap, and the nested
bootstrap. They also include the three experiments in
Section~\ref{sec:empirical}, the measured cost--quality trade-off in
Figure~\ref{fig:costquality}, and the ten-problem agenda in
Section~\ref{sec:challenges}, including its interpretation of sparse taxonomy
cells.
The claim that oversight should be evaluated through intervention advantage
rather than risk is due to \citet{zhang2026calibration}, and we adopt it here.
All other claims are attributed to their sources.
Appendix~\ref{app:llm-use} states where large language models were and were
not used in preparing this paper.
Appendix~\ref{app:corpus} documents the construction of the 120-paper core
corpus, the broader set of 454 background works used in the narrative
synthesis, and the targeted gap audit used to examine selected sparse
combinations independently of the taxonomy labels. Works outside the core
corpus are marked with $^{\circ}$ in Table~\ref{tab:bigtable}.
The core corpus provides the evidentiary basis for corpus-level quantitative
claims, including the taxonomy counts, heatmap, and statements about the
distribution of the surveyed literature. The broader narrative synthesis also
draws on background work from uncertainty estimation, statistics, decision
theory, control, robotics, information retrieval, software engineering, and
human--computer interaction, particularly when connecting the agent-UQ
literature to adjacent research areas. These background works provide
conceptual and methodological context but do not enter the 120-paper corpus
counts or the taxonomy heatmap.
Accordingly, claims about sparsity or prevalence are stated specifically with
respect to the core corpus rather than as claims about the absence of work
from the full literature. For selected sparse combinations that motivate the
research agenda, Appendix~\ref{app:corpus} additionally reports a
taxonomy-independent targeted search used to identify neighboring work and
to delimit the corresponding claims.
We release the complete taxonomy assignment for the core corpus as a
machine-readable table in \texttt{data/taxonomy\_table.csv}. Each row records
one core-corpus paper and its pipeline-stage, uncertainty-source, and
method-family tags, supporting direct corpus queries.
Table~\ref{tab:bigtable} is a broader evidence table and reading guide: it
contains the core-corpus papers together with selected background works, and
should therefore not be interpreted as a direct printed copy of the
machine-readable core-corpus table.

\takeaway[sec:taxonomy]{Every corpus paper is indexed along three axes. Panel A records the
uncertainty source, Panel B records how uncertainty is estimated or used, and
Panel C records the pipeline stage. The axes ask largely non-nested questions
but are neither statistically independent nor a partition, and leaf counts
are tag counts rather than paper counts (Appendix~\ref{app:taxnotes}).
Section~\ref{sec:pipeline} walks the corpus by stage and
Section~\ref{sec:families} by family, while the sparse cells of
Figure~\ref{fig:heatmap} motivate the agenda of
Section~\ref{sec:challenges}.}

\section{Single-Turn LLMs}
\label{sec:foundations}

\subsection{Problem Setting}

Uncertainty estimation for agents builds on techniques developed for single-turn models. This section reviews these techniques and their underlying assumptions.
Dedicated surveys cover this material in depth \citep{geng2024survey,huang2024survey,shorinwa2024survey,xia2025survey,liu2025uncertainty}. Our narrower aim is to summarize the model access required by each family, its computational cost, and the limitations that arise when the model operates as an agent rather than as a single-turn answer generator. The subsections below therefore keep the family-level view and the agent-facing limitations; Appendix~\ref{app:singleturn} reviews the individual methods within each family.
The object of study is a model $p_\theta(y \mid x)$ over outputs $y$ given an input $x$. An uncertainty estimator maps the input, the generated output $\hat{y}$, and the available model information to either a scalar confidence $c(x,\hat{y}) \in [0,1]$ or a set-valued prediction. We evaluate an estimator by its calibration, its ability to rank correct outputs above incorrect outputs, and any formal guarantees it provides (Section~\ref{sec:metrics}).
Table~\ref{tab:families} compares the eight indexed families \textcolor{axyellow!75!black}{B1}--\textcolor{axyellow!75!black}{B8} by their core ideas, model access, computational cost, and representative works. Five families are inherited from single-turn work, and three are agent-native. The table also includes two unindexed rows, internal-state probes and post-hoc calibration, for practical comparison. The ten rows do not correspond to ten taxonomy families. Section~\ref{sec:taxonomy} defines the nine Panel-B families, including the residual category \textcolor{axyellow!75!black}{B9}, which is omitted here.

\begin{table}[t]
\centering
\caption{Method families (Panel B), model access, inference cost, and
representative works. BB, TP, and HS denote black-box, token-probability, and
hidden-state access; $T$ is trajectory length and $k$ the sample count. The
lower block contains the three agent-native families.}
\label{tab:families}
\small
\setlength{\tabcolsep}{4pt}
\renewcommand{\arraystretch}{1.15}
\begin{tabular}{@{}lL{4.2cm}ccL{4.6cm}@{}}
\toprule
\thc{Family} & \thc{Core idea} & \thc{Access} & \thc{Cost} & \thc{Representative works}\\
\midrule
\tband{5}{Inherited from single-turn methods (Section~\ref{sec:foundations})}
\pidx{axyellow!75!black}{B1}Verbalized confidence & Elicitation of a confidence statement from the model & BB & $1\times$ & \citet{lin2022teaching,tian2023just,xiong2023can,band2024linguistic}\\
\addlinespace[2pt]
\pidx{axyellow!75!black}{B2}Sampling / consistency & Agreement among resampled generations & BB & $k\times$ & \citet{wang2023selfconsistency,manakul2023selfcheckgpt,lin2023generating,kuhn2023semantic}\\
\addlinespace[2pt]
\pidx{axyellow!75!black}{B3}Token probability & Sequence- or token-level likelihoods and entropies & TP & $1\times$ & \citet{jiang2021how,malinin2021uncertainty,yadkori2024believe,zhu2026towards}\\
\addlinespace[2pt]
\pidx{gray!55}{--\,}Internal states / probes & Classifiers on hidden activations & HS & $\approx1\times$ & \citet{azaria2023internal,burns2023discovering,kossen2024semantic,orgad2025llms}\\
\addlinespace[2pt]
\pidx{axyellow!75!black}{B4}Conformal / selective & Prediction sets or abstention with coverage guarantees & varies & $1$--$k\times$ & \citet{vovk2005algorithmic,kamath2020selective,quach2024conformal,mohri2024language}\\
\addlinespace[2pt]
\pidx{axyellow!75!black}{B5}Bayesian / ensemble & Posterior or ensemble disagreement & varies & $k\times$ & \citet{gal2016dropout,lakshminarayanan2017simple,ovadia2019can,hou2024decomposing}\\
\addlinespace[2pt]
\pidx{gray!55}{--\,}Calibration & Post-hoc mapping that aligns stated confidence with observed accuracy & varies & $\approx1\times$ & \citet{platt1999probabilistic,zadrozny2002transforming,guo2017calibration,ulmer2024calibrating}\\
\tband{5}{Agent-native (Sections~\ref{sec:propagation},~\ref{sec:control}, and~\ref{sec:training})}
\pidx{axyellow!75!black}{B6}Propagation / trajectory & Combination of step-level signals into a trajectory-level estimate & varies & $T\times$ & \citet{zhao2024saup,duan2025uprop,donaldson2026bayesian,darabi2026groundcontrol}\\
\addlinespace[2pt]
\pidx{axyellow!75!black}{B7}Abstention / deferral & Use of uncertainty estimates for stopping, clarification, or handoff decisions & varies & $\approx1\times$ & \citet{ren2023robots,piatrashyn2026redact,edwards2026ask,yeke2026yesman}\\
\addlinespace[2pt]
\pidx{axyellow!75!black}{B8}Uncertainty-aware training & Training for calibration or use of uncertainty as a reward signal & training & offline & \citet{zhang2024rtuning,band2024linguistic,zhang2026selaur,zhou2026exploring}\\
\bottomrule
\end{tabular}
\end{table}

\subsection{Black-Box Signals: Verbalization, Sampling, and Consistency}
\label{sec:verbalized}
\label{sec:sampling}
Two inherited families require only the model's text output. This limited access requirement makes them common choices for agents built on closed APIs.
Verbalized confidence directly elicits a confidence statement in natural language, beginning with dialogue agents trained to hedge at rates matching their empirical error \citep{mielke2022reducing} and models fine-tuned to append a stated probability under what amounts to a proper scoring rule \citep{lin2022teaching}. Calibration of the elicited value depends on the elicitation format and the prompt wording, and alignment tuning can degrade the calibration of the underlying token probabilities \citep{tian2023just,xiong2023can,openai2023gpt4}. Appendix~\ref{app:singleturn-blackbox} reviews the elicitation formats, the training-based refinements, and the recent verbalized-confidence literature.
Verbalized confidence is compatible with black-box agent pipelines because a step-level score $c_t$ can be elicited alongside each action $a_t$ at little additional inference cost.
A common limitation is systematic overconfidence. Models tend to produce high, rounded scores and hedge less than human respondents, while reward-model training may reinforce this pattern \citep{zhou2023navigating,zhou2024relying,leng2024taming}. The effect can become more pronounced over multi-step trajectories \citep{xuan2026confidence}.
A separate limitation concerns the interpretation of reported confidence. \citet{turpin2023language} show that chain-of-thought rationales can be systematically unfaithful: when an unreported feature influences an answer, a model may generate a plausible justification that does not reflect the computation that produced the answer. Verbalized confidence is produced through a similar output process. It may therefore be calibrated as a distribution-level statistic without reliably representing the model's internal state for a particular input. \citet{sharma2024towards} further show that such reports can adapt to perceived interlocutor preferences. This issue is important for agents because a step-level score $c_t$ is elicited from a context produced partly by the agent and may be used as evidence by another agent (Section~\ref{sec:multiagent}). Confidence in an \emph{action} also requires predicting the environment's response, which may lie outside the model's training objective (Section~\ref{sec:transition}).

Another black-box approach samples multiple generations and measures their agreement, as in self-consistency decoding, where the vote share of the most common answer serves as the confidence score \citep{wei2022chain,wang2023selfconsistency}. Because different token sequences can express the same meaning (Section~\ref{sec:metrics}), sampling-based scores are usually computed over meanings rather than surface strings. Semantic entropy clusters the samples into equivalence classes with bidirectional NLI entailment and computes entropy over the classes \citep{kuhn2023semantic,farquhar2024detecting}, and probes can predict it from a single forward pass \citep{kossen2024semantic}. Appendix~\ref{app:singleturn-blackbox} states the estimator, its refinements, and the wider consistency-based family \citep{manakul2023selfcheckgpt,lin2023generating}.
These methods can be applied to agents by resampling individual tool decisions or complete trajectories, although the cost grows with the horizon $T$.
Meaning-level uncertainty is harder to apply to actions than to plain text because surface form does not determine action equivalence. Two different SQL queries may return the same result, whereas identical API calls may produce different results after the environment changes. Equivalence between browsing actions also depends on the agent's goal rather than on the action text.
Semantic entropy over actions therefore requires a task-specific equivalence relation that is not provided by standard NLI models. Variation across sampled trajectories also combines model uncertainty with environmental randomness (Sections~\ref{sec:propagation} and~\ref{sec:control}).

\subsection{White-Box Signals: Token Probabilities, Internal States, and Ensembles}
\label{sec:tokenlevel}
\label{sec:bayesian}
When logits are available, sequence likelihood and entropy provide uncertainty estimates from a single model run \citep{jiang2021how,malinin2021uncertainty}. Self-evaluation offers a related signal. In P(True)-style self-evaluation, confidence is the probability assigned to the token ``True'' when the model is shown the question and its proposed answer \citep{kadavath2022language}. Iterated prompting turns the same access into a lower bound on epistemic uncertainty \citep{yadkori2024believe}, and efficiency-oriented variants avoid full generation \citep{zhu2026towards}; Appendix~\ref{app:singleturn-whitebox} gives the details.

Another white-box approach estimates uncertainty from hidden states rather than from the output distribution. Supervised probes predict truthfulness from intermediate activations \citep{azaria2023internal}, unsupervised variants find a truth direction without labels \citep{burns2023discovering}, and hidden states have been shown to encode error type, unanswerability, and early signs of hallucination before they appear in the output \citep{orgad2025llms,slobodkin2023curious,snyder2024early}. Appendix~\ref{app:singleturn-whitebox} reviews this evidence, which suggests that internal states carry reliability information not expressed in model outputs.
Probes are useful for agents because they are inexpensive enough to run at every step and can directly provide $c_t=c(a_t,h_{t-1})$. Their use requires white-box access, which is unavailable for many deployed agents. Within the multi-source corpus described in Appendix~\ref{app:corpus}, we found no study of whether probes trained on factual QA transfer to predicting action correctness in long tool-use trajectories. This is a gap in the surveyed literature, not evidence that such transfer is impossible. Related robotics research has studied introspective failure prediction from a policy's internal signals. The closest LLM-agent examples in our corpus are introspective planning \citep{liang2024introspective} and the trajectory-level monitors reviewed in Section~\ref{sec:propagation}, but neither evaluates the transfer of a QA-trained probe. A taxonomy-independent audit likewise surfaces a neighboring preprint that uses calibrated hidden-state probes to abort likely-failing agent episodes early, but no work that tests whether a QA-trained probe transfers to this setting (Appendix~\ref{app:corpus}).

Bayesian approaches approximate the posterior predictive distribution $p(y \mid x, \mathcal{D})$ and measure its variation using the methods reviewed in Section~\ref{sec:sources}. Maintaining a posterior over billions of parameters is impractical, so LLM studies use proxies such as ensembles over low-rank adapters, variational adapter posteriors, and fine-tuned uncertainty heads (Section~\ref{sec:sources} and Appendix~\ref{app:singleturn-whitebox}).
These methods require several model evaluations or training runs. More importantly for agents, a posterior over model parameters does not capture uncertainty introduced by tools or the environment.

\subsection{Conformal Guarantees, Calibration, and Self-Knowledge}
\label{sec:conformal}
\label{sec:recalibration}
The methods discussed above produce uncertainty scores whose interpretation requires additional criteria. The remaining approaches provide coverage guarantees, recalibrate existing scores, or evaluate whether a model can identify the limits of its knowledge.
Conformal prediction returns a prediction \emph{set} rather than a score. On exchangeable data, the set constructed from a held-out calibration sample contains the true output with probability at least $1-\alpha$, a distribution-free guarantee \citep{vovk2005algorithmic,angelopoulos2023gentle}. Adaptations to language generation restrict the output space, calibrate when to stop sampling, or prune unsupported claims from long answers \citep{kumar2023conformal,quach2024conformal,mohri2024language}. Selective prediction instead answers only when confidence exceeds a threshold and is summarized by the risk--coverage curve introduced in Section~\ref{sec:metrics} and developed for agents in Section~\ref{sec:control} \citep{geifman2017selective,kamath2020selective}. Appendix~\ref{app:singleturn-guarantees} states the split conformal construction and reviews both lines in detail.
Conformal prediction provides coverage guarantees, while selective prediction can provide risk guarantees under suitable assumptions about calibration and test data. These assumptions are difficult to maintain for agent trajectories because steps are history-dependent, earlier decisions alter the distribution of later states, and step correctness may depend on the final task outcome. Existing extensions address covariate shift \citep{tibshirani2019conformal}, online adaptation of $\alpha$ \citep{gibbs2021adaptive}, and bounded departures from exchangeability \citep{barber2023conformal}. Standard conformal methods, however, do not directly cover fully history-dependent trajectories; Section~\ref{sec:challenges} returns to this issue.

Post-hoc calibration does not construct a new uncertainty score; it transforms an existing score.
Calibration is the most common property label in our corpus, appearing in $59$ of the $120$ papers. This frequency reflects that calibration can be evaluated or improved for any of the estimators discussed above. Section~\ref{sec:taxonomy} therefore treats calibration as a property rather than as a Panel-B family. That count is not the size of the recalibration family, which is a separate and disjoint label: a further $4$ papers fit an explicit post-hoc map, and it is those $4$ that this subsection reviews.
Standard post-hoc maps fitted on held-out data include Platt scaling, isotonic regression, and temperature scaling \citep{platt1999probabilistic,zadrozny2002transforming,guo2017calibration}; how much recalibration a model needs varies with architecture and evaluation setting \citep{minderer2021revisiting,desai2020calibration}, and Appendix~\ref{app:singleturn-guarantees} reviews the method families.
Two limitations are important for agents. Post-hoc calibration typically requires held-out data that represent the deployment distribution. In an agent trajectory, the distribution at later steps may differ from the distribution used to calibrate earlier steps. In addition, per-prediction recalibration does not model dependence between steps. Section~\ref{sec:formal} shows how this dependence can prevent step-level calibration from composing. Improving step-level calibration alone therefore does not guarantee trajectory-level calibration.

A related line of work examines whether models can identify the limits of their knowledge. ``Knowing what you don't know'' benchmarks categorize questions as answerable or unanswerable and evaluate whether the model responds appropriately \citep{yin2023large,kadavath2022language,lin2022truthfulqa}. These benchmarks evaluate binary decisions rather than probability calibration, so they complement rather than replace the calibration metrics in Section~\ref{sec:metrics}, and training methods can improve the underlying self-knowledge \citep{zhang2024rtuning,cheng2024ai}; Appendix~\ref{app:singleturn-guarantees} reviews both lines.
Abstention is the single-turn form of learning to defer \citep{madras2018predict,mozannar2020consistent}. For agents, deferral may occur at intermediate steps, where partial progress affects the decision.
\par
Overall, the seven single-turn rows in Table~\ref{tab:families} share a common assumption: one input, one output, a correctness label for that output, and roughly i.i.d.\ evaluation data. This setup does not directly extend to agents. The output is a trajectory $\tau$, correctness may depend on the whole trajectory rather than on one action $a_t$, and later states depend on the agent's earlier decisions.
The next section examines these differences in detail.

\takeaway[sec:foundations]{The single-turn techniques in Table~\ref{tab:families} mainly differ in the model access they require, their computational cost, and the uncertainty sources they can capture. Despite these differences, they largely assume a single input, a single output, a corresponding correctness label, and near-i.i.d.\ evaluation. These assumptions may not hold in agentic settings.}

\section{From Single-Turn LLMs to Agentic Uncertainty}
\label{sec:transition}

\subsection{The Agentic Stack}

Agent systems typically comprise interacting components for planning, retrieval, tool use, memory, and execution. Each component introduces distinct sources of uncertainty.
Planning introduces uncertainty because future actions must be evaluated before execution under incomplete knowledge of the environment. This requirement applies across tree search, graph-based reasoning, plan-then-execute prompting, learned world models, and physical-affordance approaches \citep{yao2023tree,besta2024graph,wang2023plan,zhou2023least,hao2023reasoning,ahn2022saycan}. Partial plans are often ranked using the model's self-evaluation signals \citep{xie2023self}, whose calibration is seldom assessed.
\par
Agents may interleave reasoning with calls to external tools \citep{yao2023react,schick2023toolformer} and select among large, heterogeneous tool sets \citep{qin2024toolllm,patil2023gorilla,shen2023hugginggpt}. Tool-use benchmarks and surveys examine these settings \citep{li2023apibank,zhuang2023toolqa,huang2024metatool,qin2024toollearning,mialon2023augmented}. Each call requires a decision about whether to invoke a tool, which tool to select, and which arguments to provide. Errors in these decisions may not be immediately apparent: a plausible but incorrect argument can still produce a seemingly valid result, which may then be treated as correct.
\par
Reflection loops allow agents to critique and revise their outputs \citep{shinn2023reflexion,madaan2023selfrefine}, but do not ensure reliable error correction. Without external feedback, models often fail to correct reasoning errors \citep{huang2024large}. They are more effective at repairing an error when its location is provided than at identifying the error independently \citep{tyen2024llms}, while step-level self-checking is beneficial only when predictions are aggregated appropriately \citep{miao2024selfcheck}. Reflection therefore does not eliminate uncertainty; it adds uncertainty about the reliability of the model's self-evaluation.
\par
Memory allows agents to store and reuse information across episodes \citep{park2023generative,packer2023memgpt,zhong2024memorybank,sumers2024cognitive,zhang2024surveymemory}, but can also carry uncertainty forward. Low-confidence information may later be retrieved without its original confidence information, while content that was accurate when stored may become outdated. Memory retrieval therefore inherits the uncertainty associated with external retrieval and adds risks associated with reusing agent-generated content.
\par
Multi-agent systems divide tasks and exchange information \citep{li2023camel,wu2023autogen,hong2024metagpt,qian2024chatdev,chen2024agentverse}, and may use debate or consensus to improve factuality \citep{du2024improving}. Correlation complicates the interpretation of agreement. When agents share a base model, prior knowledge, or retrieval pipeline, agreement provides limited evidence of correctness because the agents may reproduce the same error. Coordination can also introduce additional failure modes that remain insufficiently measured \citep{guo2024large}.
Uncertainty therefore arises throughout the agent stack: planners may misjudge feasibility, tool calls may fail without detection, memories may become outdated, and correlated agents may converge on an incorrect conclusion.
Current benchmarks focus primarily on end-to-end task success in website interaction, operating-system control, coding, and general assistance \citep{zhou2024webarena,deng2023mind2web,xie2024osworld,jimenez2024swebench,liu2024agentbench,mialon2023gaia,yao2024tau,koh2024visualwebarena}. Safety benchmarks also document cases in which agents select risky actions \citep{ruan2024toolemu,kinniment2024evaluating}.
With limited exceptions \citep{kirmayr2026carbench}, these benchmarks do not assess whether agents indicate uncertainty about decisions that are likely to fail.

\subsection{Four Mechanisms That Limit Single-Turn Methods}

The transition from single-turn models to agents changes both the sources and consequences of uncertainty.
We organize these differences into four mechanisms that can limit the transfer of single-turn uncertainty methods to agent settings. The discussion uses the notation from Section~\ref{sec:formal}.
\par
\emph{Errors can compound along trajectories} because each action is conditioned on the agent's preceding outputs, so an early error influences later decisions instead of remaining localized. Writing trajectory success as $\Pr(Y=1)=\prod_{t=1}^{T}\Pr(Y_t=1 \mid Y_{1:t-1}=\mathbf{1})$ and assuming independent steps with a shared success rate $p$ gives $p^T$; even at $p=0.99$ this falls to about $90\%$ after $10$ steps, $61\%$ after $50$, and $37\%$ after $100$, so a per-step confidence score alone does not determine trajectory reliability \citep{wei2026longhorizon}. Dependence complicates the picture further, since later steps condition on a history $h_{t-1}$ that contains self-generated outputs, resembling exposure bias in autoregressive generation \citep{ott2018analyzing} and the compounding-error analyses of imitation learning \citep{ross2010efficient,ross2011reduction}, with calibration additionally degrading under the induced distribution shift \citep{ovadia2019can}. The effect predates agents, since a model may defend an early incorrect claim in later sentences even when it can identify that claim as wrong in isolation \citep{zhang2024snowball}, and multi-step agents amplify it, with unsupported tool calls and hallucinations accumulating across interactions \citep{zhou2026exploring} and single errors irreversibly damaging the remainder \citep{piatrashyn2026redact}. This mechanism is what Proposition~\ref{prop:compose} formalizes: step-level calibration constrains $\Pr(Y_t=1\mid c_t)$, whereas trajectory success depends on conditional probabilities over the entire history, so an estimator calibrated at individual steps can still misestimate trajectory success, limiting the direct transfer of single-turn calibration methods \citep{zhang2026agentic}.
\par
\emph{Tools and environments introduce external uncertainty}. Each observation is generated by the environment, $o_t \sim P_{\mathrm{env}}(\cdot \mid h_{t-1}, a_t)$, and the same action may yield different observations, including incorrect ones. These uncertainties do not live in the model parameters. Retrieved documents may be irrelevant or contradictory \citep{chen2024benchmarking,yoran2024making}, tools may return noisy results, and environments may be partially observable or drift \citep{han2024towards,moshkovich2025taming}, so calibration or ensembling applied only to the model cannot address them. Conflicts between internal knowledge and retrieved evidence add a further problem \citep{xie2024adaptive,xu2024knowledge,shin2026era}, since a single confidence score may not show whether uncertainty comes from the answer or from disagreement between sources \citep{ren2023investigating}, and an incorrect retrieval accepted with high confidence can anchor later decisions \citep{julka2026when}. Uncertainty estimation for agents must therefore cover model predictions, tool outputs, retrieved information, and environmental observations together.
\par
\emph{Feedback may be delayed, sparse, or difficult to verify}. In single-turn question answering with LLMs, correctness can often be evaluated against a reference answer. In long-horizon agent tasks, feedback may arrive only at the end, and a reference trajectory may be unavailable \citep{han2026can,trantruong2026measuring}.
This complicates both training and evaluation. Metrics such as ECE and AUROC require correctness labels, but an unambiguous step-level label $Y_t$ may be unavailable during a trajectory because the value of an action can depend on later outcomes.
Process supervision faces the same problem \citep{uesato2022solving,lightman2024lets}. Outcome supervision uses only the final reward \citep{cobbe2021training}, while process supervision either requires dense step-level labels or estimates them through Monte Carlo rollouts \citep{wang2024mathshepherd}.
In our notation, such a rollout estimates $\hat{R}(h_t)=\Pr(Y=1\mid h_t)$, the probability that the current trajectory can still succeed. This is closely related to trajectory-level uncertainty estimation.
Verification is more difficult when the terminal outcome cannot be readily assessed or is observed only after further interaction \citep{itkin2026delayed}.
\par
\emph{Calibration does not determine control actions}. A well-calibrated risk score does not uniquely specify how an agent should respond. Two trajectory states may have the same estimated risk yet require different actions because recovery may be possible in one state but not the other \citep{zhang2026calibration}.
The \emph{intervention advantage} $A_i(h_t)$ in Eq.~\eqref{eq:advantage} represents this decision criterion by measuring the expected benefit of taking an oversight action $i$, such as pausing, rolling back, asking a human, or aborting, rather than continuing.
If intervention costs are not included in the task utility, intervention is warranted when its expected utility gain exceeds its cost. This criterion reduces to risk thresholding only when intervention value varies monotonically with risk, as in Chow's rule (Section~\ref{sec:control}).
For agents, this relation may not hold because the best decision also depends on factors such as recoverability, reversibility, and the remaining budget. A scalar risk score does not capture all of these factors.
Agent uncertainty estimates should therefore support oversight and control decisions, rather than serve only as measures of risk \citep{greenblatt2024ai,dixon2026adaptive}.
\par
\medskip \noindent These mechanisms motivate the stage-specific methods discussed in subsequent sections and inform the organization of Figure~\ref{fig:heatmap}.
Recent work addresses these mechanisms at different stages of the agent pipeline. Uncertainty propagation targets compounding errors; retrieval- and tool-level estimators address exogenous noise; process-based evaluation responds to unavailable labels; and abstention and intervention methods connect calibration with control.

\takeaway[sec:transition]{Agentic uncertainty differs from single-turn uncertainty in four respects: errors can compound along self-conditioned trajectories; tools and environments introduce exogenous uncertainty; step-level correctness labels may be unavailable; and calibrated risk does not uniquely determine an appropriate action. These sources arise throughout the agent stack, yet most capability benchmarks evaluate task success without assessing whether agents signal when failure is likely.}
\begin{figure}[t]
  \centering
  \includegraphics[width=\textwidth]{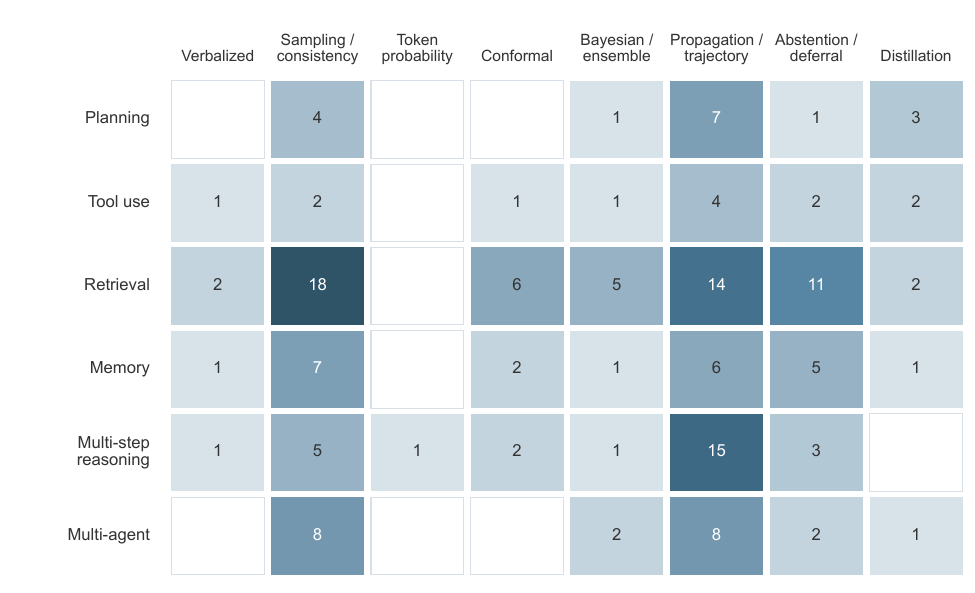}
  \caption{Distribution of corpus papers across pipeline stages (rows) and method families
  (columns). Because papers may receive multiple labels, counts are not mutually
  exclusive; an empty cell indicates that no matching paper was identified. The
  figure includes the eight indexed families B1--B8 and omits the residual
  category B9, which collects work that introduces no method.}
  \label{fig:heatmap}
\end{figure}

\section{Agentic Uncertainty, Stage by Stage}
\label{sec:pipeline}

This section organizes methods by the pipeline stage in which they operate: tool use and planning, retrieval and memory, and multi-agent coordination. It closes with three deployment surfaces that cut across these stages, namely multimodal perception, computer use, and code.

\subsection{Planning and Tool Use}
\label{sec:toolplanning}

\subsubsection{Gating and Calibrating Action Decisions}

Research on agent-specific uncertainty often begins with action selection. An agent decides whether an action is needed and which action to take.
\citet{han2024towards} introduce the Uncertainty-Aware Language Agent (UALA), which answers directly when its confidence $c_t$ in a direct answer clears a threshold selected on a small calibration set and enters the tool loop otherwise. The gate cuts unnecessary tool calls and keeps noisy external information out when the model already knows the answer.
Retrieval methods use a similar rule, fetching external information only when the model's knowledge appears insufficient, with trigger signals ranging from entity popularity \citep{mallen2023not} to low token probabilities \citep{jiang2023active} and current information needs \citep{su2024dragin}; other work directly predicts whether a tool is needed \citep{huang2024metatool}, and coding agents choose between asking for clarification and proceeding under an assumption \citep{edwards2026ask}.
\citet{zhou2026exploring} instead train the decision with uncertainty-aware reinforcement learning, rewarding tool calls under uncertainty and direct answers under confidence so that actions reflect the estimated uncertainty. Extensions calibrate tool-use behavior at training time, shape policies under sparse rewards, filter low-confidence function calls at inference, treat UQ as an agent-wide design problem, and use entropy-modulated policy gradients in long-horizon optimization \citep{chen2026et,bhatta2026uncertainty,broecker2026the,zhang2026uncertainty,wang2025harnessing}. \citet{moshkovich2025taming} monitor uncertainty from tools and environments during interaction with real systems, without access to the model's internal states.

A gating rule also requires a calibrated confidence estimate, which is difficult to obtain in tool-use settings.
\citet{xuan2026confidence} identify a \emph{confidence dichotomy}: an agent's confidence about whether to call a tool, $\Pr(\text{call warranted} \mid c_t^{\mathrm{inv}}=c)=c$, and its confidence in the final answer, $\Pr(Y_t=1 \mid c_t^{\mathrm{ans}}=c)=c$, can be miscalibrated in different directions, so one calibration rule cannot align both. The two decisions also carry different costs, a wasted or noise-injecting tool call against a hallucinated direct answer, and therefore need separate calibration and thresholds.
Recent methods estimate confidence in tool outputs, retrieve similar past cases, measure uncertainty over calls and their arguments, detect incorrect tool selection from internal representations before execution, verify that reported calls occurred, and return calibrated verifier signals to the agent \citep{xu2026when,pang2026case,ye2026uncertainty,healy2026internal,basu2026tool,vinod2026calvert}; related work addresses generated code \citep{shi2026code} and perception--tool coordination \citep{he2026confidence}.
Longer reasoning traces can reduce final-answer calibration \citep{lacombe2025dont}, and planning confidence can be \emph{epistemically} miscalibrated, assigning high feasibility to a plan despite lacking the information needed to execute it \citep{wang2026when}. Internal consistency alone therefore does not establish plan reliability. The self-evaluation scores used in plan search \citep{yao2023tree,xie2023self,hao2023reasoning} act as confidence signals, but their calibration is rarely evaluated. Planning-specific methods use uncertainty to drive information seeking or offline imagination, while conformal methods guide interactive information acquisition and reasoning \citep{chan2025conformal,hu2024uncertainty,hamidi2026dreamphase,frankel2024conformal}.
\subsubsection{Embodied Planning and Formal Guarantees}
Embodied planning has a longer history of uncertainty estimation and includes several methods with formal guarantees.
SayCan combines language-model scores with learned affordance values \citep{ahn2022saycan}. A candidate skill is preferred only when it is relevant to the instruction and likely to succeed in the current physical state. The rule combines a language-model score with an estimate of what the environment permits.
Inner Monologue extends this approach by incorporating environmental feedback into planning \citep{huang2022inner}.
KnowNo adds a conformal prediction layer to the planner \citep{ren2023robots}. Candidate actions are scored, and a calibration set is used to choose a nonconformity threshold $\hat q$. At test time, the robot forms
\begin{equation}
C(x_t)
=
\bigl\{
y \in \mathcal{Y}_t :
\hat f(y \mid x_t) \ge 1-\hat q
\bigr\},
\qquad
\Pr\!\left(y_t^{\star} \in C(x_t)\right)
\ge
1-\epsilon
\end{equation}
If $C(x_t)$ contains one action, the robot proceeds without assistance. If it contains several actions, the robot requests help. The set size therefore provides a statistical rule for help-seeking rather than an uncalibrated confidence threshold.
The guarantee is marginal and assumes exchangeability between calibration and test examples. Although exchangeability is less plausible in sequential settings, the method links uncertainty estimation to an explicit intervention rule.
Related work extends conformal methods to language-model decision interfaces and to planning in dynamic environments \citep{kumar2023conformal,lindemann2023safe}, separates ambiguity from difficulty as causes of planning uncertainty \citep{liang2024introspective}, revises plans against explicit assumptions \citep{seo2026from}, or combines step- and trajectory-level uncertainty in web agents \citep{zhang2026webuncertainty}.
Related robotics work estimates runtime failure and the value of assistance for learned policies, supplying neighboring evidence for online intervention in embodied agents \citep{romer2025failure,hagenow2025realm}.

\takeaway[sec:toolplanning]{Different decisions in tool use and planning require different uncertainty estimates. Tool use requires estimates for whether to call a tool and how to set its arguments. Planning requires estimates of plan feasibility and the need for assistance. These decisions have different failure costs and should therefore be calibrated separately.}

\subsection{Retrieval and Memory}
\label{sec:rag}

\subsubsection{Whether to Retrieve}

Retrieval augmentation is the largest stage in our corpus and has extensive overlap with the existing UQ literature \citep{lewis2020retrieval,guu2020realm,karpukhin2020dense,borgeaud2022improving,izacard2023atlas,ram2023incontext,shi2024replug,gao2023retrieval}.
Retrieval both reduces and introduces uncertainty. It reduces epistemic uncertainty by supplying evidence the parametric model lacks \citep{shuster2021retrieval}, but adds external uncertainty when the evidence is irrelevant, insufficient, or conflicting \citep{chen2024benchmarking,yoran2024making,xu2024knowledge}; it is useful when the reduction in residual uncertainty exceeds the uncertainty the retrieval process adds. Standard uncertainty methods do not always capture this distinction and may fail to raise uncertainty when the evidence conflicts with the generated answer \citep{soudani2025why}. \citet{shin2026era} separate epistemic uncertainty, which comes from missing knowledge, from knowledge conflict, which arises when sources disagree; \citet{binz2026uncertainty} extend the distinction to multimodal RAG, where text can contradict visual evidence.
One family of methods uses uncertainty to decide \emph{whether} retrieval is needed.
Adaptive retrieval consults external memory only when the model is likely to lack the required knowledge \citep{ren2023investigating}, with entity popularity as a simple trigger \citep{mallen2023not} because performance is lower on long-tail knowledge \citep{kandpal2023large}; \citet{jeong2024adaptiverag} instead route by predicted question complexity, a classifier trained from strategy success rather than calibrated against correctness.
FLARE triggers retrieval during generation when any token of the drafted next sentence falls below a probability threshold, then rebuilds the query from the low-confidence draft \citep{jiang2023active}. DRAGIN refines the trigger with the model's current information needs, combining token uncertainty with attention signals \citep{su2024dragin}. SeaKR reads the trigger from the model's internal states \citep{yao2025seakr}, and Self-RAG trains the model to emit \emph{reflection tokens} that indicate whether retrieval is needed, whether a passage is relevant, and whether the answer is supported \citep{asai2024selfrag}.
Related work makes the search process uncertainty-aware by suppressing low-quality queries, decomposing questions with confidence-guided search trees, and producing interpretable estimates \citep{wu2025search,jiao2026prunerag,dey2026interpretable}, following earlier work interleaving retrieval with chain-of-thought reasoning \citep{trivedi2023interleaving}.
\subsubsection{Trusting the Evidence}
Another family estimates the reliability of retrieved evidence.
Conformal methods filter the retrieved evidence with finite-sample guarantees. CONFLARE calibrates the retrieval similarity threshold on held-out question--evidence pairs, choosing the empirical $\alpha$-quantile with the usual finite-sample correction so that answer-supporting evidence is retained with probability at least $1-\alpha$. Later work extends the guarantee across retrieval and generation \citep{rouzrokh2024conflare,chakraborty2025principled,kotla2025conformal}, following conformal methods for generation \citep{quach2024conformal,mohri2024language}; the guarantee can weaken under realistic retrieval shift \citep{chen2026is}.
Several methods score evidence reliability before use, through confidence-based rerankers, span-level uncertainty for long contexts, corrective retrieval after unreliable evidence, and sufficiency checks for answer support \citep{song2026car,song2026llm,li2024uncertaintyrag,yan2024corrective,qiu2026surerag}; relevant signals may appear in hidden states even when absent from the output \citep{slobodkin2023curious}. Others localize the source of uncertainty, separating retrieval-induced from generation-induced errors \citep{ren2026when}, inferring what the model likely learned from pretraining-corpus statistics \citep{min2025qucorag,zhang2025measuring}, calibrating shared memory across agents, or propagating faithfulness-aware uncertainty through fact-checking pipelines \citep{meng2026equimem,fadeeva2025faithfulnessaware}. Newer evidence-side estimators differ in what they read the signal off. Three use the structure of the reasoning or the evidence: a semantic-level internal reasoning graph \citep{hu2026detecting}, the consistency of an evidence graph \citep{shen2026evidence}, and facet-level tracing that localizes the uncertainty to the specific claim it came from \citep{elchafei2026facet}. Three instead gate on a statistic of the generation: induction-aware entropy \citep{bazarova2026intrygue}, agreement across reformulated queries \citep{sun2026cqc}, and counterfactual risk rather than semantic relevance \citep{liu2026beyond}. A further method calibrates the hedging language of the answer itself against the retrieved evidence \citep{yeh2026retrieval}.
Both families must distinguish the source of the uncertainty they measure. When retrieved evidence conflicts with parametric knowledge, the model has two sources that disagree rather than missing information, and the responses differ, since missing knowledge calls for more retrieval while conflict calls for comparison or escalation. \citet{xie2024adaptive} show that models may accept coherent external evidence that contradicts their parametric beliefs while favoring evidence that agrees with them, and \citet{xu2024knowledge} separate parametric-contextual conflicts from conflicts within the retrieved context. Evidence position is a further nuisance variable. Accuracy drops when the supporting passage sits mid-context \citep{liu2024lost}, the order of conflicting documents can change the answer in biomedical RAG \citep{han2026whenevidence}, and reordering the same retrieved set can induce hallucinations \citep{zhang2026stable}. The distinction between epistemic and total predictive uncertainty is therefore important \citep{yadkori2024believe}, and reliability alignment brings it directly into the RAG setting \citep{shin2026era}.

\subsubsection{Escalation, Memory, and Infrastructure}
A routing approach uses uncertainty to decide how a query should be handled. \citet{jia2026balancerag} answer directly when the LLM-only risk estimate $\hat r_{\mathrm{LM}}(q)$ is low, retrieve when the retrieval branch's risk clears its own threshold, and abstain otherwise, with the two thresholds calibrated jointly because the queries sent to retrieval are exactly the difficult cases the direct branch rejected. Related work routes with energy-based abstention scores, past experience, trained self-knowledge, and explicit trust-or-abstain decisions \citep{shankar2025energy,sarkar2026leveraging,stoisser2025towards,zhu2026trust}, using uncertainty to select an action rather than only report a score \citep{zhang2026calibration} and following earlier cost-aware cascades \citep{chen2023frugalgpt}. Other work determines how much evidence to retrieve, trains agents to recognize when parametric knowledge suffices, and studies how early retrieval confidence can lock the pipeline onto an incorrect path \citep{dong2026know,feng2026kbsd,julka2026when}.

Routing determines how a query is processed; memory determines which evidence is available at that point, and it introduces uncertainty of its own \citep{park2023generative,packer2023memgpt,zhong2024memorybank,zhang2024surveymemory}. Two problems recur. Agent-generated memory can store a low-confidence guess and later present it as a fact unless confidence is stored with the entry, and entries can become stale while remaining relevant. Shared memory amplifies both because every reader can receive the same incorrect entry, so \citet{meng2026equimem} calibrate shared memory before using it as evidence and \citet{essam2026trustaware} attach confidence to shared knowledge graphs. Memory can also serve estimation, as past cases become an abstention signal \citep{sarkar2026leveraging} and an explicit belief state replaces an ever-growing raw history \citep{singh2026agentbrace}.
Most work nevertheless treats memory as one component of a larger system; its large stage count is mostly co-tagging with retrieval (Appendix~\ref{app:taxnotes}), and methods designed specifically for memory uncertainty, deciding when to write or evict an entry and how much to trust self-generated content, remain scarce. Evaluation is also moving beyond answer correctness, with faithfulness-oriented RAG benchmarks \citep{es2024ragas,saadfalcon2024ares,niu2024ragtruth} complemented by resources that measure how retrieval changes uncertainty and where hallucinations arise (Section~\ref{sec:eval}).

\takeaway[sec:rag]{Retrieval can reduce epistemic uncertainty by providing missing evidence, but the retrieval process can also introduce uncertainty. A reliable estimator should distinguish confidence in the answer from agreement between the model and the retrieved evidence. Memory presents a similar challenge because it relies on a self-generated corpus. Stored information should therefore be assessed before reuse.}

\subsection{Coordination Across Multiple Agents}
\label{sec:multiagent}

\subsubsection{Confidence-Weighted Debate}

Collaboration makes uncertainty dependent across agents. Multi-agent debate aims to improve factuality through argument and has been studied under asymmetric persuasion \citep{du2024improving,liang2024encouraging,irving2018ai,khan2024debating}; related systems use role-based collaboration \citep{li2023camel,wu2023autogen,hong2024metagpt,chen2024agentverse,qian2024chatdev} or LLM judges \citep{zheng2023judging}, whose reliability on agent trajectories is now being audited \citep{kc2026babeljudge}. Debate is not uniformly beneficial; its value depends on the number of rounds, speaking order, and whether agents observe one another's reasoning or only conclusions \citep{smit2024should}. We index agents by $j=1,\dots,m$, with vote $v_j$ and confidence $c_j$.
Confidence can modify the protocol itself. ReConcile selects the confidence-weighted plurality $\hat y = \arg\max_{y} \sum_j w(c_j)\,\mathbf{1}[v_j = y]$, where $w(\cdot)$ recalibrates verbalized confidence across heterogeneous scales, and routes the reasoning of confident peers to dissenters \citep{chen2024reconcile}. DebUnc attaches the sender's uncertainty to each message, in text or through receiver attention, so unreliable messages are discounted on receipt rather than only at the final vote \citep{yoffe2024debunc}. Confidence gating decides which arguments are admitted \citep{baba2026argument}; ablations identify stated-confidence calibration and panel diversity as the two operative factors \citep{zhu2026demystifying}; response variance can delay premature agreement \citep{tang2026the}; and confidence-aware routing sends each query to the cheapest agent whose confidence clears a threshold \citep{wang2026orchestrating}. Other work estimates black-box uncertainty through multiple agents, calibrates confidence through deliberation, and supplies trust or opinion formalisms \citep{feng2025rethinking,yang2024confidence,cheng2021general,cheng2020hope}.
\subsubsection{False Consensus and Stopping Decisions}
Agreement alone does not establish reliability. The limitation has a precise statistical form.
Under conditional independence given the truth, posterior log-odds are additive across agents:
\begin{equation}
\log\frac{\Pr(Y=1 \mid v_1,\dots,v_m)}{\Pr(Y=0 \mid v_1,\dots,v_m)} \;=\; \log\frac{\Pr(Y=1)}{\Pr(Y=0)} \;+\; \sum_{j=1}^{m} \log\frac{\Pr(v_j \mid Y=1)}{\Pr(v_j \mid Y=0)}
\end{equation}
Each vote then contributes a separate increment of evidence.
Conditional independence alone does not make the vote share $\frac{1}{m}\sum_j \mathbf{1}[v_j=y]$ a sufficient statistic. The sum depends only on the vote count when the per-agent likelihood ratios are also identical, as under conditional exchangeability or equal accuracy. Otherwise, aggregation must retain agent identity and reliability. A confident vote from an accurate agent is not equivalent to one from a less accurate agent.
When both conditions hold, a $5$--$4$ split among nine independent experts provides evidence whose strength depends on their shared accuracy.
Communication generally removes this independence.
\citet{huang2026counterfactual} show that the same vote share can represent different amounts of evidence. It may arise from $m$ effectively independent assessments or from a \emph{false consensus} in which one early, confident, incorrect message propagates through the group and produces many dependent votes.
Formally, the posterior $\Pr(Y=1 \mid v_1,\dots,v_m)$ depends on the joint dependence structure of the votes, not on their marginal counts: at the extremes, $m$ perfectly correlated votes carry the evidential weight of one.
Their counterfactual method reconstructs the communication graph and intervenes on its edges. It tests whether agent $j$ would have voted differently without observing neighboring messages. Votes that change under this intervention are treated as derivative and discounted, producing an effective sample size for the consensus.
Related behavior also appears outside multi-agent systems. \citet{sharma2024towards} show that preference-trained assistants may revise correct answers after a user expresses doubt or asserts a conflicting answer, because human raters reward agreement. An agent receiving a confident peer message faces a similar signal. A protocol that circulates confidence scores may therefore reinforce this tendency. False consensus can arise from both statistical dependence and learned conformity. Graph-based and confidence-based discounting address different parts of this problem.
The diagnosis reframes multi-agent UQ as a problem about the \emph{joint} distribution of agent outputs.
\citet{chen2026every} analyze the same dependence with tensor decomposition. They organize responses by agent, round, and sample. Low-rank components represent the shared signal that persists through communication, while the residual represents uncertainty from communication dynamics and role dependence. A single-turn method applied to one agent cannot capture this structure.
Recent work formalizes this failure mechanism. \emph{Collective hallucination}, in which a fabrication becomes a shared group belief, has been formalized together with defenses \citep{jamshidi2026collective}. Companion work traces how one agent's error propagates through neighboring agents \citep{jamshidi2026hallucination}. These models provide explicit mechanisms for inter-agent error correlation.

Dependence affects the evidential value of consensus. Uncertainty also informs when deliberation should stop.
\citet{morandi2026sequential} apply sequential hypothesis testing to debate. Each round provides evidence for the hypothesis that the current consensus answer is correct. The method accumulates a log-likelihood ratio and uses a Wald-style stopping rule:
\begin{equation}
\text{continue while }\; \log\frac{\beta}{1-\alpha} \;<\; \Lambda_n \;<\; \log\frac{1-\beta}{\alpha}, \qquad \Lambda_n=\sum_{i=1}^{n}\log\frac{p_1(x_i)}{p_0(x_i)}
\end{equation}
Crossing the upper threshold accepts the consensus, while crossing the lower threshold triggers escalation or abstention. Here, $\alpha$ and $\beta$ are the tolerated false-acceptance and false-rejection rates.
This rule ends deliberation early for clear cases and allocates more rounds to contested cases. Cost-aware model cascades use similar anytime logic \citep{chang2026cascadedebate}.
Wald's test belongs to a broader framework for anytime-valid inference, which is directly relevant to sequential agent monitoring.
Repeated use of a fixed-sample test inflates its error rate. A $p$-value is calibrated for one decision at a specified sample size, whereas an agent may inspect uncertainty after every step. Anytime-valid inference instead uses an \emph{e-value}, a non-negative statistic $E$ with $\mathbb{E}_{H_0}[E]\le 1$. Markov's inequality then controls the type-I error at threshold $1/\alpha$. The sequential form is an \emph{e-process}, a process $(E_n)$ bounded above by a non-negative supermartingale under the null. Ville's inequality gives $\Pr_{H_0}(\exists n:\, E_n \ge 1/\alpha)\le\alpha$ uniformly over all stopping times \citep{ramdas2023game,grunwald2024safe}. This guarantee remains valid under continuous monitoring and data-dependent stopping. E-values can be multiplied across evidence and merged by averaging, so their accumulation along a trajectory is well defined. The corresponding estimation object is a confidence sequence, a family $(C_n)$ satisfying $\Pr(\forall n:\, \theta\in C_n)\ge 1-\alpha$ \citep{howard2021timeuniform}. It is the time-uniform analogue of the per-decision marginal coverage in Section~\ref{sec:conformal}, for which a union bound over a horizon gives $T\alpha$. Calibration against an adaptively generated sequence rather than an i.i.d.\ sample is also a classical problem \citep{foster1998asymptotic}. This framework is relevant when a monitor's interventions affect the data it later evaluates.
Two limitations prevent direct application. Communication makes evidence adaptively dependent, so debate rounds and agents cannot be treated as independent observations. A valid e-process must be defined with respect to the joint interaction history rather than marginal outputs. The null hypothesis must also be specified. For the claim that the current consensus is correct, this requires a model of how correct and incorrect groups generate votes, which is itself a target of multi-agent UQ. Constructing e-processes for LLM deliberation and for trajectories in which agent actions generate the evidence remains an open problem. This problem may benefit from methods in neighboring statistical literature (Appendix~\ref{app:corpus}).
Recent methods provide finer-grained diagnosis, identifying \emph{confidently wrong} debaters from log-probability signals \citep{keramati2026confident}, predicting debate quality from early-token confidence \citep{keramati2026early}, and standardizing confidence semantics across agents before aggregation \citep{armstrong2026margin}. Distribution-free methods reach the protocol level, filtering inter-agent messages, adding coverage guarantees to final aggregation, and connecting budgeted act-or-defer rules to per-agent reliability bounds \citep{zhang2026commcp,wang2026from,wang2026budgeted}.
Protocol design itself can reduce uncertainty in consensus. Selective consensus weights sources by estimated reliability \citep{li2026trusttrade}; specialist panels with consistency verification improve calibration in medical QA because specialists err more independently than clones of one generalist \citep{martinez2026multiagent}; confidence tracked through a sequential software pipeline flags low-confidence upstream decisions before they propagate downstream, which is propagation across agents rather than steps in the sense of Section~\ref{sec:propagation} \citep{essam2026trustaware,ogunsusi2026uachatdev}. Hallucination control is also an explicit design objective, through reinforcement-trained self-check loops, game-theoretic protocols that make truthful reporting the stable strategy, and specialist pipelines for citation-level detection \citep{li2026march,liu2026game,li2026source}.

\takeaway[sec:multiagent]{In multi-agent systems, the identity of an uncertain agent and the
dependence between agents both affect reliability. After communication, votes
are no longer independent evidence. An aggregation rule that uses only the
marginal $(v_j, c_j)$ pairs discards the dependence information needed to
assess consensus.}

\subsection{Perception-Heavy Surfaces: Multimodal, Computer-Use, and Code Agents}
\label{sec:surfaces}

The previous subsections organize methods by the stage of the agent loop in which they operate. Three deployment surfaces cut across those stages and concentrate a failure mode that a stage-based reading scatters. On these surfaces, uncertainty enters through \emph{perception} or through the artifact being produced, before any of the reasoning steps that the estimators above examine. An agent may misread a screen, ground the wrong object, or misjudge a program it has just written. Once such an error enters $h_t$, a well-calibrated text-side estimator can reason confidently from an incorrect premise.
\par
For multimodal reasoning, the emerging pattern is to give perception and reasoning separate confidence estimates rather than one pooled score. \citet{xiao2026vlcalibration} decouple confidence calibration for the perception and reasoning stages of large vision-language models, and \citet{he2025mmboundary} calibrate reasoning-step confidence to make a multimodal model aware of its knowledge boundary. \citet{madhusudhan2026knowing} benchmark whether multimodal reasoning systems abstain when they should. \citet{binz2026uncertainty} extend the knowledge-conflict distinction of Section~\ref{sec:rag} to multimodal retrieval, where retrieved text can contradict visual evidence, and \citet{he2026confidence} use confidence to orchestrate tools for video understanding. The embodied instances of this surface appear elsewhere in this paper. \citet{darabi2026groundcontrol} read failure from the temporal shape of an uncertainty trace in vision-language navigation (Section~\ref{sec:propagation}), and \citet{yeke2026yesman} benchmark abstention under physical uncertainty (Section~\ref{sec:control}).
\par
Computer-use agents make the perception problem concrete at scale, because every action is grounded in a screenshot. \citet{kumar2026uncertainty} benchmark uncertainty quantification for these agents across vision-language models and GUI-grounding datasets, separating the model stack from the grounding data. The operative distinction on this surface is between \emph{grounding} uncertainty, which concerns whether the agent has identified the right element of the interface, and \emph{decision} uncertainty, which concerns whether the chosen action is correct given that grounding. A benchmark that labels only final task success cannot separate the two, and the estimators of Sections~\ref{sec:toolplanning} and~\ref{sec:propagation} address only the second.
\par
Code agents differ from both in one respect that estimation can exploit. A candidate program has executable semantics, so part of a step's correctness can be checked by running it against tests or type checkers rather than estimated from the model alone. \citet{shi2026code} build uncertainty estimation for code generation on this observation. \citet{edwards2026ask} use uncertainty to decide whether a coding agent should ask a clarifying question or proceed under an assumption when instructions are underspecified. At the level of software pipelines, confidence has been attached to shared artifacts and propagated through development workflows \citep{essam2026trustaware,ogunsusi2026uachatdev}, and evidence-calibrated multi-agent auditing has been applied to repository-level vulnerability detection \citep{meng2026vulnagentr}. Relative to how widely code agents are deployed, dedicated uncertainty work on this surface remains sparse in our corpus.

\takeaway[sec:surfaces]{On perception-heavy surfaces, a grounding error enters the history
before any reasoning step, so a calibrated text-side estimator can be
confidently wrong about a premise it never questioned. Estimators should
separate grounding uncertainty from decision uncertainty, and code agents
additionally expose executable signals that make step correctness partially
checkable. Coverage of these surfaces remains thin relative to how commonly
they are deployed; Problem~10 in Appendix~\ref{app:challenges} states the
open problem.}

\section{Agentic Uncertainty Across the Whole Trajectory}
\label{sec:families}

This section examines three method families designed specifically for agents rather than adapted from single-turn UQ. Because they span all pipeline stages, we organize them by how they operate: propagation and trajectory-level estimation, abstention and control, and uncertainty-aware training.

\subsection{Propagation and Trajectory-Level Estimation}
\label{sec:propagation}

When uncertainty accumulates across steps, step-level confidence must be combined into a trajectory-level estimate.
This problem is prominent in the reviewed corpus.
The main quantity is trajectory reliability, $R=\Pr(Y=1)$, which equals $\mathbb{E}\bigl[\prod_{t=1}^{T}Y_t\bigr]$ under the absorbing-failure convention of Section~\ref{sec:formal}, where $Y_t\in\{0,1\}$ indicates whether step $t$ succeeds. During execution, it can be estimated as $\hat R(h_t)$ conditional on the current history $h_t=(s_0,a_1,o_1,\dots,a_t,o_t)$.
A simple method multiplies the confidence scores of individual steps under an independence assumption,
\begin{equation}
\hat R_{\mathrm{ind}}(\tau)
=
\prod_{t=1}^{T} c_t
\end{equation}
This estimate can fail for several reasons.
Small calibration errors accumulate across a trajectory. If each $c_t$ slightly overestimates the probability of step success, the product exceeds the product of the step marginals, and the difference increases with the horizon $T$.
Step outcomes are also dependent. An early observation error can affect every later decision, so the product can be inaccurate even when each step-level confidence is well calibrated.
These failure modes have opposite effects and should be distinguished. Overconfident step estimates inflate the product. Positive prefix coupling, which agents are expected to exhibit, makes the product of step marginals \emph{conservative} (Proposition~\ref{prop:positive}) and therefore lowers the same estimate. Their relative effects must be measured empirically, and Section~\ref{sec:empirical} measures both terms. On the short-horizon chained-QA traces, step-level overconfidence dominates and the product overestimates reliability, while on the long-horizon ALFWorld traces the dependence term itself becomes substantial, so neither term can be neglected a priori. The signed decomposition in Appendix~\ref{app:proofs-pathwise} formalizes the two contributions.
Methods in this section address these limitations by improving the step-level uncertainty signal $u_t$, modeling dependencies across steps, or representing the dynamics of the full trajectory.
Multi-turn confidence estimation makes the history dependence of these signals a primary object of study \citep{zhang2026confidence}.
Table~\ref{tab:propagation} organizes existing methods along these three directions, which also structure the discussion below.

\begin{table}[t]
\centering
\caption{Trajectory-level estimators in family B6, compared by their input
signal, combination rule, and output. Each provides an alternative to the
simple product of step-level confidence scores.}
\label{tab:propagation}
\small
\setlength{\tabcolsep}{4.5pt}
\renewcommand{\arraystretch}{1.15}
\begin{tabular}{@{}L{3.5cm}L{3.4cm}L{4.0cm}L{3.6cm}@{}}
\toprule
\thc{Method} & \thc{Signal} & \thc{Combination} & \thc{Output}\\
\midrule
SAUP \citep{zhao2024saup} & Per-step uncertainty $u_t$ & Learned situational weights $w_t = g(s_t, h_{t-1})$ & Weighted trajectory risk score\\
\addlinespace[2pt]
UProp \citep{duan2025uprop} & Shifts in action distributions across sampled paths & Mutual-information chain rule across steps & Epistemic uncertainty of the decision sequence\\
\addlinespace[2pt]
HTC \citep{zhang2026agentic} & Process-level features across the whole trajectory, macro dynamics to micro stability & Simple interpretable calibration model & Calibrated trajectory confidence with failure diagnostics\\
\addlinespace[2pt]
Bayesian network \citep{donaldson2026bayesian} & Per-stage divergence and self-evaluation scores & Conditional dependencies among latent stage-correctness variables & Failure posterior with node-level attribution\\
\addlinespace[2pt]
Trace denoising \citep{yan2026denoiseflow} & Noisy per-step uncertainty trace & Separation of transient fluctuations from persistent drift & Cleaned reliability trace for downstream aggregation\\
\addlinespace[2pt]
Belief--action divergence \citep{singh2026agentbrace} & Separate belief and action confidences & Divergence measured across the horizon & Decoupling alarm used as a risk signal\\
\addlinespace[2pt]
Temporal signatures \citep{darabi2026groundcontrol} & Trajectory-consistent uncertainty trace & Shape of the trace rather than its level & Online prediction of terminal failure\\
\addlinespace[2pt]
Markov absorption \citep{trantruong2026measuring} & Qualitative transitions between execution states & Absorbing-chain algebra $(I-Q)^{-1}\mathbf{b}$ & Dynamical reliability and expected time to failure\\
\addlinespace[2pt]
Pipeline conformal \citep{kotte2026pasc} & Per-stage nonconformity scores & Jointly calibrated thresholds across stages & Prediction sets with joint coverage\\
\addlinespace[2pt]
Systems-level \citep{zhang2026managing} & Pipeline health signals beyond the model & Propagation across data, agents, and humans & Operational risk of the deployed system\\
\bottomrule
\end{tabular}
\end{table}

\subsubsection{Dependence-Aware Aggregation and Failure Attribution}
Aggregation methods combine uncertainty signals from multiple steps into one trajectory-level estimate.
SAUP (Situation Awareness Uncertainty Propagation) is an early example \citep{zhao2024saup}. It computes $\hat U(\tau) = \sum_t w_t u_t$ using learned situational weights $w_t = g(s_t, h_{t-1})$. A step receives greater influence when it is more critical in the current state and history. This weighting reflects the fact that the same uncertainty can have different consequences at different points in a trajectory.
\par
UProp provides an information-theoretic formulation of multi-step uncertainty \citep{duan2025uprop}. Unlike SAUP's weighted heuristic signals, UProp targets epistemic uncertainty over the full decision sequence and decomposes it with the mutual-information chain rule. Let $\theta$ denote the latent factors about which the model is uncertain. Then $I(\theta;\, a_{1:T}\mid s_0) = \sum_t I(\theta;\, a_t \mid h_{t-1})$, so each decision contributes a conditional term. This term can be estimated at the corresponding step by sampling alternative decision paths and comparing their action distributions.
The decomposition separates reducible epistemic uncertainty from uncertainty inherent in the task \citep{hullermeier2021aleatoric}. Like SAUP, it also identifies the important decision points instead of averaging uncertainty over the entire trajectory.
\par
\citet{donaldson2026bayesian} use a Bayesian network to model uncertainty propagation in agentic RAG.
Observable signals from the planner, evaluator, and generator include semantic disagreement across samples and self-evaluation scores. These signals are linked to latent variables that represent the correctness of each stage, while the graph models how errors propagate to later stages. Inference returns a trajectory-level failure estimate and a stage-level posterior $\Pr(\text{stage } k \text{ faulty} \mid \text{evidence})$. The latter supports debugging and intervention by identifying the stage responsible for the estimated failure.
Similar attribution methods assign confidence to individual steps in black-box reasoning chains and trace hallucinated intermediate results to their source in multi-agent workflows \citep{liu2026diagnosing,badave2026beyond}.

\subsubsection{Per-Step Signals, Online Monitoring, and Trajectory Dynamics}
Aggregation depends on the quality of its input signals. Several studies therefore treat the per-step signal as a separate estimation problem.
Raw step-level uncertainty estimates are noisy. Sampled self-evaluations fluctuate, and token-level entropy can spike because of harmless formatting choices. \citet{yan2026denoiseflow} denoise the per-step trace before aggregation by separating transient fluctuations from persistent changes in reliability.
\citet{singh2026agentbrace} identify a different long-horizon failure: an agent's stated beliefs can become decoupled from its chosen actions. Confidence measured from the belief channel then no longer represents the action taken. They use the divergence between separately estimated belief and action uncertainties as a risk signal. \citet{yi2026measuring} find that the agent harness can induce the same divergence, showing that the scaffold influences the belief trace as well as the model.
During training, \citet{pan2026tiar} reweight advantage estimates across the trajectory so that reinforcement learning assigns credit to timely abstention rather than treating it as a failure. This incorporates propagation-aware uncertainty into the policy.
Other step-level detectors infer the onset of hallucination from hidden-state transport geometry, score traces using the ratio of hedging to verification behavior, or calibrate step confidence against temporal-logic constraints on the reasoning chain \citep{alvarez2026where,pandey2026selfdoubt,mao2026confidence}.
Another line of work conditions the estimate on the agent's position in the trajectory. This turns $\hat R(h_t)$ into an online monitor instead of a post-hoc score.
\citet{darabi2026groundcontrol} estimate trajectory-consistent uncertainty in vision-language navigation. The uncertainty trace anticipates failure dynamics such as oscillation between states, stagnation, and increasing detours from the goal before they lead to terminal failure. The warning comes from the temporal shape of the trace rather than its value at one step.
This result is consistent with evidence that internal signals can reveal hallucinations before they appear in the output \citep{snyder2024early}. Partial supporting evidence can also \emph{non-monotonically amplify} confident hallucinations, causing the confidence trace to rise while content quality declines \citep{lathkar2026anchored}.
\citet{lu2025auditing} test whether step-level confidence reports correspond to step-level competence. Stepwise self-evaluation also produces traces whose temporal patterns, including monotone decline, sudden collapse, and oscillation, provide useful information \citep{mavi2025selfevaluating,xie2024calibrating}.
This approach builds on process supervision, where rewarding correct \emph{steps} outperforms rewarding correct \emph{outcomes} \citep{uesato2022solving,lightman2024lets}, and on step-level reasoning evaluation \citep{golovneva2023roscoe,prasad2023receval}. Both estimate a trajectory-level property by evaluating the process that produces it.

\citet{trantruong2026measuring} explicitly model the agent as a Markov chain over qualitative execution states. Success and failure are absorbing states, and trajectory reliability is the probability of eventual absorption into success.
Let $Q$ denote the transient-to-transient block of the transition matrix. Given transient-to-absorbing transitions, the absorption probabilities follow from the fundamental matrix,
\begin{equation}
\mathbf{r} \;=\; (I - Q)^{-1}\, \mathbf{b}
\end{equation}
where $b_i$ is the one-step probability of moving from transient state $i$ to success, and $r_i$ is the reliability of an agent currently in state $i$.
This formulation makes reliability a property of the agent's \emph{dynamics} rather than of a single sampled trajectory, which allows estimation from modest amounts of data. The same matrix algebra also gives the expected time to failure and the sensitivity of $R$ to individual transition probabilities.
Guarantees pose a related problem. Per-step conformal methods can certify each decision at level $1-\alpha$ \citep{ren2023robots,lindemann2023safe}. Without a model of dependence, the union bound is the strongest generic composition, and trajectory-level risk grows to $T\alpha$. This provides another reason to model trajectories directly instead of composing stepwise certificates.
Pipeline-aware conformal prediction addresses this problem by constructing prediction sets with \emph{joint} coverage across all stages of a multi-stage pipeline \citep{kotte2026pasc}. Information-theoretic analysis also bounds the performance of closed-system multi-step reasoning, even when each step is well calibrated \citep{shin2026the}.
At the system level, \citet{zhang2026managing} model uncertainty as propagating through agent coordination, data pipelines, and human-in-the-loop stages as well as through the model.
Their study of safety-critical deployment shows that improving model accuracy alone does not remove system risk. Stale data feeds, ambiguous human handoffs, and coordination latency contributed more than model error in several failure post-mortems.
This systems perspective treats uncertainty estimation as a property of the full deployed pipeline. It connects agent UQ to the established requirement that uncertainty estimates remain useful under dataset shift \citep{ovadia2019can}. An estimator validated on an isolated model may provide little information about a pipeline in which that model is only one component.

\takeaway[sec:propagation]{Marginal step confidences do not capture the error-correlation
structure across steps. Online estimation should target the history-resolved
hazard $\lambda_t(h_{t-1})$ in Section~\ref{sec:formal}, rather than the
marginal confidence $c_t$. Two opposing effects distort the simple product:
compounded step-level overconfidence inflates it, whereas positive step
dependence makes it conservative, and Section~\ref{sec:empirical} measures
both effects. Propagation methods address the limitation by weighting critical
steps, propagating conditional information, denoising per-step traces,
modeling trajectory dynamics, or attributing failures to pipeline stages.}

\subsection{Abstention, Clarification, Deferral, and Control}
\label{sec:control}

For an agent, an uncertainty estimate is useful when it guides an action such as abstaining, requesting clarification, deferring to a human, or replanning. \citet{zhang2026from} describe this as a broader shift from uncertainty reported as a post-hoc metric to uncertainty used as an in-loop control signal. This section examines that shift.
The formal foundations are the reject option and learning to defer to an expert \citep{chow1970optimum,elyaniv2010foundations,madras2018predict,mozannar2020consistent}. Agent-level control extends these formulations to sequential decisions.
In Chow's formulation, a classifier either predicts a label or abstains at cost $\lambda \in [0, \tfrac{1}{2})$ under a $0$--$1$ loss. The Bayes-optimal rejector applies the posterior threshold
\begin{equation}
r^\star(x) \;=\; \mathbb{1}\!\left[\,\max_{y}\,\Pr(y \mid x) \;<\; 1-\lambda\,\right]
\end{equation}
The system predicts when its maximum posterior probability reaches a threshold determined by the cost structure rather than by the model. \citet{elyaniv2010foundations} express this trade-off through the risk--coverage frontier introduced in Section~\ref{sec:metrics}, which should be evaluated over its full range rather than at a single operating point. \citet{geifman2017selective} and \citet{kamath2020selective} extend this perspective to deep networks and question answering.
A direct agent analogue replaces $\max_y \Pr(y \mid x)$ with the trajectory reliability estimate $\hat{R}(h_t)$: continue while $\hat{R}(h_t) \geq 1-\lambda$ and abstain otherwise.
Sequential agent control differs because the available alternatives, their costs, and recoverability can change over time.

\subsubsection{Sequential Deferral and Clarification}

In agent settings, the alternative to acting is often another decision-maker rather than abstention at a fixed cost.
Learning to defer \citep{madras2018predict} incorporates the performance of the downstream expert into the objective.
Let $m(x)$ denote the expert's prediction on $x$ and $c_{\mathrm{ask}}$ the query cost. The predictor and rejector are trained jointly by minimizing
\begin{equation}
\min_{f,\,r}\;\; \mathbb{E}\Big[\,(1-r(x))\,\ell\big(f(x),y\big) \;+\; r(x)\,\Big(\ell\big(m(x),y\big) + c_{\mathrm{ask}}\Big)\Big]
\end{equation}
Unlike Chow's rule, the optimal rejector does not defer wherever model uncertainty is highest. It defers when the model's expected loss exceeds the expert's expected loss plus the query cost. The system therefore answers cases for which the expert offers no expected advantage and defers cases that the expert is expected to handle better.
\citet{mozannar2020consistent} show that common confidence-based surrogate losses for this objective are statistically inconsistent: minimizing them may fail to recover the Bayes-optimal predictor--rejector pair even with unlimited data. They obtain a consistent surrogate by augmenting the label space with a deferral class. Thus, plausible threshold heuristics can be suboptimal even in a one-shot setting, and their optimality over trajectories requires additional justification.
\citet{piatrashyn2026redact} extend deferral to sequential decision making, where irreversible early errors increase the value of timely deferral.
In a one-shot problem, the cost of an incorrect answer or unnecessary abstention is incurred once. In a sequential problem, an irreversible action at step $t$ can eliminate feasible continuations. The utility recoverable through deferral, $\mathbb{E}[U \mid \mathrm{defer}, h_t]$, is therefore greatest before commitment and may decline as side effects accumulate. Sequential deferral can be viewed as an optimal-stopping problem: waiting may provide evidence about whether assistance is needed, but it also risks an irreversible action. Beyond calibration, an important property of an uncertainty estimate is its \emph{earliness}: whether it triggers while recovery remains possible.
The quality of this trigger depends on confidence--correctness alignment, which can be optimized directly \citep{xiaohu2026know}. Related applications include clinical models trained to recommend or defer in underrepresented specialties \citep{rajesh2026teaching}, and medical QA evaluations that test whether models decline to answer when they should \citep{machcha2025know}.

Clarification differs from deferral because the agent retains control and seeks information that can resolve aleatoric uncertainty.
Relevant single-turn work includes ambiguity benchmarks, selective answering \citep{min2020ambigqa,cole2023selectively,rajpurkar2018know}, and clarifying questions triggered by ambiguity estimates \citep{kuhn2023clam,zhang2024clarify,zhang2024clamber}.
In agent settings, \citet{edwards2026ask} study the ask-or-assume decision in coding agents and use uncertainty to trigger clarification when instructions are underspecified.
\citet{deng2026uncertaintyaware} formalize this decision by selecting clarifying questions that maximize expected information gain about latent user intent.
Let $\theta$ denote the intent, $q$ a candidate question, and $A_q$ its random answer. The selected question is
\begin{equation}
q^\star \;=\; \arg\max_{q}\; \mathbb{E}_{a \sim p(a \mid q,\, h_t)}
\Big[\, H\big(\theta \mid h_t\big) \;-\; H\big(\theta \mid h_t,\, q,\, a\big) \Big]
\;=\; \arg\max_{q}\; I\big(\theta;\, A_q \mid h_t\big)
\end{equation}
The agent asks a question only when the maximum expected information gain exceeds the interaction cost.
This criterion applies Bayesian experimental design to dialogue \citep{lindley1956measure,chaloner1995bayesian} and instantiates a value-of-information rule with the user as the information source \citep{howard1966information}. The preferred question need not concern the aspect with the highest raw uncertainty; it is the question whose answer produces the largest reduction in decision-relevant uncertainty (Appendix~\ref{app:pomdp}).
The aleatoric--epistemic distinction is operationally relevant in this setting. \citet{matsnev2026uncertainty} query the user when ambiguity rather than ignorance dominates. Retrieval cannot resolve genuine underspecification of intent, whereas asking users for facts they may not know consumes the interaction budget without resolving epistemic uncertainty.
\citet{suri2025structured} model clarification as a managed sequential dialogue rather than a single gated question. The decision can also be represented as a three-action policy that answers, asks, or abstains, trained through fine-tuning or reinforcement learning using contextual evidence \citep{baidya2026passiveqa,zhao2026grace}.

\subsubsection{Embodied Abstention, Capability Gating, and Intervention Advantage}

In embodied settings, ask-or-act decisions can have physical consequences.
\citet{yeke2026yesman} benchmark abstention in robotic agents and identify a \emph{yes-man} pattern in which agents proceed despite perceptual and physical uncertainty. This work complements the ask-for-help mechanism in KnowNo \citep{ren2023robots}, which uses conformal prediction to request assistance when the set of plausible plans is too large for unilateral action.
This behavior need not result from classical miscalibration. A model may produce a relevant uncertainty signal while the policy layer fails to use it. The resulting failure illustrates the recurring distinction between uncertainty estimation and control.
Recent work examines when complete agents should stop rather than act, when multimodal reasoning systems should decline, and how risk estimates can gate capabilities before execution \citep{luo2026agentic,madhusudhan2026knowing,iyer2026capability}.
Capability gating differs from step-wise abstention because it restricts the \emph{action space} before execution. It grants or withholds classes of irreversible operations according to risk tiers rather than evaluating each proposed action. The approaches are complementary: a coarse but reliable tier may limit worst-case harm when fine-grained confidence estimates are inaccurate.
\citet{zhang2026calibration} argue that the common pipeline of predicting risk, applying a threshold, and then intervening is mis-specified. They instead propose estimating \emph{intervention advantage}, the quantity $A_i(h_t)$ defined in Eq.~\eqref{eq:advantage}.
Abstention, clarification, deferral, stopping, and gating can each be represented as an intervention $i$. Under a rational control rule, intervention $i$ is selected when $A_i(h_t)$ exceeds its cost.
Chow's rule is recovered as a special case in which abstention is the only available intervention and has a fixed payoff. In this setting, intervention advantage is a monotone function of risk, making thresholding of $1-\hat{R}(h_t)$ optimal.
This monotonicity need not hold for agents. Two histories may have the same failure risk even though one is recoverable through a low-cost clarifying question and the other has progressed beyond the point at which intervention is effective.
The relation between trajectory risk and intervention benefit remains unresolved.
This distinction suggests that agent uncertainty estimates may need to represent intervention-specific advantages rather than only the scalar $\hat{R}(h_t)$.

\takeaway[sec:control]{Abstention, clarification, deferral, stopping, and gating can all be
represented as interventions with advantage $A_i(h_t)$. Risk thresholding is
optimal only when intervention advantage varies monotonically with risk, a
condition that may fail along agent trajectories. Sequential deferral also
has an optimal-stopping structure, so uncertainty estimates must be timely
enough to support intervention while recovery remains possible.}

\subsection{Uncertainty-Aware Training and Distillation}
\label{sec:training}

Most methods discussed above estimate uncertainty during inference.
Other work trains agents to improve calibration or uses uncertainty as a learning signal.
Related single-turn methods include calibration-aware tuning \citep{lin2022teaching,band2024linguistic,xu2024sayself}, linguistic calibration of dialogue models \citep{mielke2022reducing}, refusal training \citep{zhang2024rtuning,cheng2024ai,yang2024alignment}, and uncertainty-calibrated heads \citep{kapoor2024large}.
Agent methods can adapt the structure used in refusal training. R-tuning \citep{zhang2024rtuning} partitions supervised data according to the model's own competence. A pair $(x,y)$ is assigned to $\mathcal{D}_{\mathrm{sure}}$ when the pretrained model answers $x$ correctly and to $\mathcal{D}_{\mathrm{unsure}}$ otherwise. Fine-tuning then uses targets augmented with the corresponding certainty expression and maximizes
\begin{equation}
\sum_{\mathcal{D}_{\mathrm{sure}}} \log p_{\theta}(y \oplus \texttt{[sure]} \mid x)
+
\sum_{\mathcal{D}_{\mathrm{unsure}}} \log p_{\theta}(y \oplus \texttt{[unsure]} \mid x)
\end{equation}
This construction defines honesty relative to the model's own knowledge boundary rather than an external measure of difficulty, a property relevant to agent abstention policies.
\citet{band2024linguistic} optimize the decisions made by a listener after reading hedged text rather than the likelihood of confidence tokens. The objective seeks calibrated decision utility for readers who use the expressed confidence. This perspective treats verbalized confidence \citep{tian2023just,kadavath2022language} as communication with decision consequences and parallels Section~\ref{sec:multiagent}, where one agent's stated confidence becomes another agent's input.

\subsubsection{Uncertainty in Reward Shaping and Process Supervision}

Preference-based fine-tuning \citep{christiano2017deep,stiennon2020learning,bai2022training} contributes to the utility of assistant models but can reduce calibration \citep{leng2024taming,openai2023gpt4}. This motivates uncertainty-aware reward modeling \citep{pan2026uncertaintyaware}. A reward model trained on finite preference data may produce an overconfident scalar outside its training distribution, while policy optimization favors regions in which errors in that scalar are exploitable. One response represents reward as a distribution and optimizes a pessimistic functional such as $\tilde{r}(x,a) = \mu(x,a) - \beta\, \sigma(x,a)$. The policy is then rewarded only for gains supported by sufficiently confident reward estimates.
For agents, \citet{zhang2026selaur} use intrinsic LLM uncertainty as a reward signal for a self-evolving agent and treat it as a source of step-level credit assignment. \citet{zhou2026exploring} similarly incorporate step-level uncertainty into a reinforcement-learning objective for tool use.
Both approaches can be expressed through a common objective:
\begin{equation}
J(\pi_{\theta}) \;=\; \mathbb{E}_{\tau \sim \pi_{\theta}}
\left[\; \sum_{t} r_t \;+\; \beta\, \psi\big(u_t\big) \;\right]
\end{equation}
where $u_t$ is a step-level uncertainty signal computed from $h_t$ and the proposed action. The sign and form of $\psi$ determine how this signal affects learning. It may provide an exploration bonus when uncertainty identifies informative experience, as in self-evolving agents \citep{zhang2026selaur}, or penalize confident errors when the objective is to align step confidence $c_t$ with tool-call success \citep{zhou2026exploring}.
\citet{pan2026tiar} train abstention through trajectory-informed advantage reweighting. Rather than adding a reward term, their method rescales the advantage at each step using trajectory-level outcome information. Abstention receives positive credit on trajectories that would fail if continued and negative credit when it prevents an otherwise successful trajectory.
This implements the intervention-advantage perspective of Section~\ref{sec:control} at training time: the policy gradient targets $A_{\mathrm{abstain}}(h_t)$ rather than raw risk.
Related work reports improved calibration when reflection is credited appropriately, develops denser credit assignment over long horizons without relying on uncalibrated intermediate rewards, and restricts training to trajectory segments with reliable outcome attribution \citep{zhu2026closing,li2026when,qi2026stapo}.
These agent objectives are closely related to process supervision \citep{lightman2024lets,uesato2022solving,wang2024mathshepherd} and outcome-supervised verifiers \citep{cobbe2021training}, although these connections are not always made explicit.
Math-Shepherd's automated step labels \citep{wang2024mathshepherd} are obtained by rolling out continuations from each intermediate state and recording their empirical success rate. These labels are Monte Carlo estimates of $\Pr(Y = 1 \mid h_t)$. A process reward model trained on them can therefore be interpreted as a learned reliability estimator $\hat{R}(h_t)$ under the terminology of this paper.
Reflection-based self-improvement \citep{shinn2023reflexion,madaan2023selfrefine} also relies implicitly on such a signal. Models cannot reliably identify their own errors \citep{huang2024large}, and this known limitation can be viewed as an uncertainty-estimation problem. A reflection step is useful only when the model's implicit estimate of where an error occurred is sufficiently reliable.

\subsubsection{Distillation and the Calibration-Transfer Gap}

Distillation introduces a related calibration question. Agent capabilities are often transferred to smaller, deployable student models \citep{hinton2015distilling,xu2024surveykd,sharma2025small}, including through on-policy methods that train students on their own sampled trajectories \citep{agarwal2024onpolicy,gu2024minillm}, and a student need not inherit the teacher's calibration. Appendix~\ref{app:proofs-distill} formalizes the \emph{calibration-transfer gap}, the worst-case miscalibration of a teacher-calibrated reliability estimator when it is scored on histories generated by the student. The gap is zero for the teacher by assumption but otherwise unconstrained. It can grow because calibration is defined relative to a distribution and residual imitation errors compound with the horizon, so teacher and student encounter different history distributions, and because lower student capability makes the teacher-fitted estimator systematically optimistic on states the teacher would handle but the student cannot. On-policy distillation trains on the student's state distribution and can mitigate exposure bias \citep{agarwal2024onpolicy,gu2024minillm}, but whether it also shrinks this gap is untested in the reviewed literature.
Few corpus papers examine the intersection (Figure~\ref{fig:heatmap}, distillation column). Emerging single-turn work transfers the teacher's confidence about future tokens, gates which teacher signals the student may imitate, and restricts imitation to steps whose validity the teacher can certify \citep{kale2026future,sermsri2026gatekd,saadi2026validity}, but does not yet address trajectory-level transfer. Calibration-preserving distillation for agents is the narrowest gap our search identified. We found no corpus paper that evaluates end-to-end trajectory-success calibration on histories generated by the student policy; the taxonomy-independent audit in Appendix~\ref{app:corpus} finds neighboring work that distills calibrated turn-level beliefs, but no evaluation of this narrower endpoint. Section~\ref{sec:challenges} states this as a research problem.

\takeaway[sec:training]{Training can integrate uncertainty into the policy rather than
produce only a post-hoc report. Refusal tuning defines honesty relative to
the model's knowledge boundary, listener-calibrated objectives treat
confidence as consequential communication, and reward shaping can credit
timely abstention. These objectives may also create incentives to manipulate
the uncertainty signal without resolving the underlying errors. Whether
calibration transfers to smaller student agents remains largely untested.}

\section{Evaluation: Benchmarks and Metrics}
\label{sec:eval}

Reliable evaluation of agent uncertainty requires benchmarks that measure uncertainty quality as well as task success. Current resources provide only limited support for this goal.
Table~\ref{tab:benchmarks} in Appendix~\ref{app:expdetails-tables} inventories the evaluation landscape.
This section reviews benchmarks (Section~\ref{sec:eval-benchmarks}) and metrics (Section~\ref{sec:eval-metrics}). It then applies the proposed metric to real agent traces (Section~\ref{sec:empirical}) and identifies the remaining evaluation gaps (Section~\ref{sec:eval-missing}).

\subsection{Benchmarks}
\label{sec:eval-benchmarks}

The agent-specific block of Table~\ref{tab:benchmarks} lists early benchmarks that evaluate more than task correctness, and each addresses one part of agent uncertainty. URAG measures how retrieval changes uncertainty but its short trajectories do not test long-horizon control \citep{nguyen2026urag}; MIRAGE-Bench attributes hallucinations to pipeline components without testing whether a confidence score predicts failure in advance \citep{zhang2025miragebench}; \citet{trantruong2026measuring} model trajectory reliability with a Markov chain whose validity depends on the chosen states; \citet{han2026can} evaluate delayed feedback within one financial domain. Others evaluate decisions rather than numeric scores, testing whether embodied agents abstain when appropriate \citep{yeke2026yesman} or whether tool-use agents recognize their limits \citep{kirmayr2026carbench}; \citet{mavi2025selfevaluating} come closest to step-level calibration but require step labels; \citet{kumar2026uncertainty} extend the evaluation to computer-use agents. More recent resources follow the same pattern, attributing hallucinations, constructing controlled world models, evaluating guardrails along tool-use trajectories, or comparing black-box estimators under a shared single-turn protocol \citep{liu2026agenthallu,liu2026halluworld,chen2026tracesafe,wang2026a}.
We found no resource in the corpus that releases cross-domain agent \emph{trajectories} under a common protocol for comparing estimators $\hat{R}(h_t)$, although trajectories with final outcomes would already support the checkpoint metrics of Section~\ref{sec:eval-metrics} and step-resolved labels would additionally support step calibration and attribution. The gap agrees with broader surveys of agent evaluation, which report that reliability receives far less attention than capability \citep{mohammadi2025evaluation}.

\subsection{Metrics}
\label{sec:eval-metrics}

Most existing studies apply single-turn uncertainty metrics to the final answer. These include expected calibration error, AUROC for failure detection, risk--coverage curves, and proper scoring rules such as the Brier score. Section~\ref{sec:metrics} defines these metrics, and Table~\ref{tab:metrics} summarizes them.
These metrics do not directly evaluate uncertainty across a trajectory.
We therefore introduce a simple trajectory-level metric.
The metric illustrates the evaluation target and is not an established standard.

\begin{definition}[Trajectory-checkpoint expected calibration error]
\label{def:tcece}
Let $\{\tau^{(j)}\}_{j=1}^{N}$ be trajectories with outcomes
$Y^{(j)} \in \{0,1\}$. For each trajectory, choose a set of checkpoints
$\mathcal{T}_j$ using a rule that is independent of the estimator under
evaluation. Collect the prediction--outcome pairs
$\big(\hat{R}(h^{(j)}_t),\, Y^{(j)}\big)$ for all $j$ and
$t \in \mathcal{T}_j$. Partition $[0,1]$ into bins $B_1,\dots,B_M$, and define
$\mathcal{I}_m = \{(j,t) : \hat{R}(h^{(j)}_t) \in B_m\}$ and
$n_{\mathrm{ck}} = \sum_m |\mathcal{I}_m|$. Here, $n_{\mathrm{ck}}$ is the
number of checkpoint--outcome \emph{pairs}, not the prediction count $n$ used
by the single-turn ECE of Section~\ref{sec:metrics}. The bins $B_m$ partition
confidence values as they do there. The trajectory-checkpoint expected calibration error of
$\hat{R}$ is
\begin{equation}
\mathrm{TC\text{-}ECE}(\hat{R}) \;=
\sum_{m \,:\, |\mathcal{I}_m| > 0} \frac{|\mathcal{I}_m|}{n_{\mathrm{ck}}}\,
\Bigg|\;
\frac{1}{|\mathcal{I}_m|}\sum_{(j,t) \in \mathcal{I}_m} Y^{(j)}
\;-\;
\frac{1}{|\mathcal{I}_m|}\sum_{(j,t) \in \mathcal{I}_m} \hat{R}\big(h^{(j)}_t\big)
\;\Bigg|
\end{equation}
\end{definition}

Definition~\ref{def:tcece} is a finite-sample, bin-dependent plug-in estimator of a population checkpoint-calibration target. We use \emph{TC-ECE} to denote this empirical statistic. Like standard ECE, it involves a trade-off between discretization error and finite-sample bias. Its magnitude and model ranking may change with the number of bins and with the use of equal-width or equal-mass bins \citep{kumar2019verified,roelofs2022mitigating}. TC-ECE is therefore neither an unbiased nor a binning-invariant population quantity.
The distinction between a metric and an estimator also locates the closest related work. \citet{zhang2026agentic} propose Holistic Trajectory Calibration (HTC), an \emph{estimator} that extracts process-level features across an agent's entire trajectory, from macro dynamics to micro stability, and maps them through a simple interpretable model to a calibrated trajectory confidence with failure diagnostics. They evaluate it on eight benchmarks across multiple LLMs and agent frameworks, reporting both calibration and discrimination, including out-of-domain transfer. TC-ECE is not a competitor to HTC but the yardstick by which such an estimator should be scored. Definition~\ref{def:tcece} accepts an arbitrary $\hat{R}(h_t)$, HTC included, and the protocol around it specifies how the resulting score should be read, against a baseline restricted to causally available features, stratified by trajectory position with a calibrated null, with signed gaps, and with refitting inside the bootstrap. HTC's evaluation establishes that a well-designed trajectory estimator can be calibrated and discriminative across benchmarks. The experiments in Section~\ref{sec:empirical} establish something different, namely that these protocol choices change the conclusion an evaluation reaches, which is a claim about the evaluation rather than about any estimator. The two contributions are therefore complementary, and what this paper proposes is the evaluation object and reporting protocol rather than a new estimator.
\paragraph{TC-ECE pools checkpoints and is strictly weaker than Definition~\ref{def:traj}.}
Definition~\ref{def:traj} requires calibration separately at every checkpoint index $t$. Definition~\ref{def:tcece} instead pools all pairs $\big(\hat{R}(h^{(j)}_t), Y^{(j)}\big)$ over $(j,t)$ before binning. Within one bin, overconfidence at $t=1$ can therefore offset underconfidence at $t=4$. A predictor may have a small pooled TC-ECE while violating Definition~\ref{def:traj} at every horizon. TC-ECE does not operationalize Definition~\ref{def:traj}; it is only a descriptive aggregate across checkpoints.
Stratified metrics correspond more closely to Definition~\ref{def:traj}. Partition checkpoints into groups $g$ that a live monitor can compute, such as the step index $t$ or coarser step bands. Let $\mathrm{ECE}_g$ denote Definition~\ref{def:tcece} applied only to group $g$, and report
\begin{equation}
\overline{\mathrm{ECE}} = \frac{1}{|\mathcal{G}|}\sum_{g \in \mathcal{G}} \mathrm{ECE}_{g},
\qquad
\mathrm{ECE}_{\max} = \max_{g \in \mathcal{G}} \mathrm{ECE}_{g}
\label{eq:strat-ece}
\end{equation}
together with the group sizes.
$\mathrm{ECE}_{\max}$ is closest to Definition~\ref{def:traj}, which requires calibration at every $t$ and fails when any checkpoint index is miscalibrated, and it dominates both the pooled value and $\overline{\mathrm{ECE}}$. The unweighted $\overline{\mathrm{ECE}}$ carries no such guarantee and should be read together with the group sizes. Because late strata contain few trajectories (Section~\ref{sec:formal}), we report $\mathrm{ECE}_{g}$ only for groups above a stated minimum size, state that size explicitly, and state the horizon at which the stratified statistics stop; the excluded thin strata are exactly the ones Definition~\ref{def:traj} is most likely to flag, so their sample counts should be reported rather than silently dropped. Neither statistic is equivalent to Definition~\ref{def:traj}; they are the closest statistics supported by the sample.
Two finite-sample effects complicate the stratified magnitudes, and Appendix~\ref{app:expdetails-metrics} analyzes both. Binned ECE has an upward bias that grows as group size shrinks \citep{kumar2019verified,roelofs2022mitigating}, so $\mathrm{ECE}_{\max}$ can exceed the pooled value even for a predictor with no position-dependent miscalibration, and the raw ratio of the two is not meaningful by itself. The required reference is a \emph{calibrated null}: simulate a predictor calibrated at every checkpoint index under the same checkpoint structure, stratification, and binning, and record the resulting distribution of $\mathrm{ECE}_{\max}/\mathrm{ECE}_{\mathrm{pooled}}$; the observed ratio supports horizon-dependent miscalibration only when it exceeds this null, which in Section~\ref{sec:empirical} explains most of the apparent effect in the raw ratios.
The null does not establish per-horizon calibration either, because pooling can cancel \emph{signed} errors that absolute-gap statistics never see. The quantity that detects the cancellation is the signed gap, mean stated reliability minus observed success within a stratum, which needs far smaller $n$ than a binned ECE and shows the drift's direction; a stratified report should include it alongside Eq.~\eqref{eq:strat-ece}, and Section~\ref{sec:empirical} finds a large drift in it that the magnitude statistics miss. When data suffice, a multicalibration-style criterion conditioning jointly on confidence and position is stronger \citep{hebertjohnson2018multicalibration,detommaso2024multicalibration}; we report Eq.~\eqref{eq:strat-ece} because joint conditioning splits an already small sample twice (Appendix~\ref{app:expdetails-metrics}). In practice, reports should carry pooled TC-ECE as the main aggregate, $\overline{\mathrm{ECE}}$ and $\mathrm{ECE}_{\max}$ against cancellation, and the calibrated-null ratio distribution to separate finite-sample bias from predictor behavior.
The sum runs over non-empty bins, and the weights $|\mathcal{I}_m| / n_{\mathrm{ck}}$ make TC-ECE \emph{checkpoint-weighted}, so long trajectories carry more influence; a trajectory-weighted variant gives each trajectory total weight $1/N$, and we report both because one scores the average checkpoint and the other the average trajectory.
\par
Trajectory-level calibration also differs from single-turn calibration in several ways.
Checkpoints from the same trajectory share the final outcome $Y^{(j)}$ and are not independent. Confidence intervals should therefore resample complete trajectories rather than individual checkpoints.
Calibration should also be reported at multiple trajectory positions. A fixed step $t$ or a fixed fraction of a known step budget gives a calibration profile that can be computed during execution. The normalized position $t/T$ is available only after execution because $T$ is unknown while the trajectory is in progress. Either profile prevents good calibration at one stage from masking poor calibration at another.
The calibration gap should be reported with a proper scoring rule such as the checkpoint Brier score,
\begin{equation}
\frac{1}{n_{\mathrm{ck}}}
\sum_{(j,t)}
\left(
\hat{R}(h_t^{(j)})-Y^{(j)}
\right)^2
\end{equation}
which also measures the quality of the predicted probabilities \citep{gneiting2007strictly}. A robust evaluation should state the binning rule and test sensitivity across plausible bin counts and equal-width or equal-mass schemes. When the sample size permits, it may also report a debiased, adaptive, kernel-based, or consistent $L_p$ calibration estimator \citep{nixon2019measuring,kull2019beyond,kumar2019verified,widmann2019calibration,roelofs2022mitigating,popordanoska2022consistent}.
If the uncertainty estimator controls stopping or intervention, it also selects the observed checkpoints. Evaluation should then disable the control policy or correct for this selection.
\par
Current evaluations repeatedly face confounding and target mismatch.
Single-turn metrics can be strongly affected by nuisance variables such as output length \citep{santilli2025revisiting}. If confidence $c$ and correctness $Y$ both depend on a variable $L$, then $\mathrm{AUROC}(c,Y)$ may be high even when $c$ provides no additional information conditional on $L$, that is, when $c \perp Y \mid L$.
This confounding can be stronger for agents because trajectory length varies widely. Many propagated uncertainty scores depend on the horizon $T$ \citep{zhao2024saup,duan2025uprop}, while task success often decreases with trajectory length. A method may therefore appear predictive only because it captures length.
Evaluation should report metrics separately by horizon and compare the estimator with a baseline that uses only horizon information. This baseline must use features available at the checkpoint, such as the elapsed step count and the step budget. Realized trajectory length $T$ is future information. When $T$ depends on the outcome, it also leaks the label and creates an artificially strong baseline (Section~\ref{sec:empirical}).
Step-level and final-answer metrics also do not measure trajectory-level calibration. An agent can be calibrated at individual steps but remain overconfident about full-task completion because step errors are correlated.
No trajectory-level reliability metric has yet been adopted as widely as ECE in single-turn evaluation. Definition~\ref{def:tcece} provides a starting point rather than a final standard.

\subsection{Empirical Illustration: TC-ECE on Real Agent Traces}
\label{sec:empirical}

We include three experiments to show that the proposed metrics can be computed on real agent traces. The first two use one model on short-horizon tasks, a ReAct agent on HotpotQA and chains of QA items. The third runs the same elicitation protocol with a second open-weight model on ALFWorld under a $50$-step budget, where a real action--observation loop and exact per-step labels are available. The HotpotQA protocol is additionally replayed with a frontier API model on the same questions, and the ALFWorld protocol with a third open-weight model. They illustrate the evaluation protocol and are not a benchmark or method comparison. None of the learned propagation estimators of Table~\ref{tab:propagation} is included, because SAUP, UProp, and the Bayesian-network monitor require multi-path sampling, trained situational weights, or stage-level instrumentation that these single-pass traces do not provide; the prefix product is retained as the naive baseline shared by that family (Section~\ref{sec:propagation}).
HTC \citep{zhang2026agentic} is the omission that needs its own reason, since it fits a simple model to process-level features and would run on traces of this kind. We exclude it because it is a trained estimator whose published feature set and fitting procedure are specified for its own eight benchmarks, so reimplementing it here would compare our reconstruction rather than the method, and because doing so would turn a protocol demonstration into the estimator comparison this section explicitly does not attempt. The protocol is estimator-agnostic by construction: Definition~\ref{def:tcece} accepts any $\hat{R}(h_t)$, and scoring HTC or any other trajectory estimator under it, on released traces, is the benchmark that Section~\ref{sec:challenges} states as Problem~1 rather than a gap this section fills. All results apply only to these configurations and do not support general claims about agents or verbalized confidence. The supplementary material contains the code, traces, and analysis.

\par

\paragraph{Setup.}

We use a ReAct agent \citep{yao2023react} with two actions, $\texttt{Search[title]}$ and $\texttt{Finish[answer]}$. Using Qwen2.5-32B-Instruct with greedy decoding and a maximum of six steps, the agent answers $300$ HotpotQA questions in the distractor setting \citep{yang2018hotpotqa}.
Two properties limit the scope of this configuration. The model is an open-weight checkpoint released in 2024. Because verbalized calibration depends on model scale and the alignment procedure \citep{tian2023just,leng2024taming}, the results should not be treated as evidence about current frontier agents; the frontier replication later in this section addresses this directly. The realized horizons are also short and concentrated: $201$ of the $300$ trajectories contain exactly three steps. The step index therefore covers a narrow range, and the horizon-stratified analysis below is based on that limited variation.
At each step, the agent reports its confidence in the current action and its estimated probability that the final answer will be correct. We use the latter as the verbalized trajectory reliability estimate $\hat{R}(h_t)$ evaluated by Definition~\ref{def:tcece}.
Each step serves as a checkpoint, making checkpoint selection independent of the uncertainty estimator. This produces $954$ checkpoints from $300$ trajectories, with an average of $3.2$ steps per trajectory and $100\%$ format compliance.
The trajectory success rate under containment-tolerant exact match is $0.657$. Twelve trajectories never produce \texttt{Finish} and are counted as failures. These choices may introduce label noise and should be considered when interpreting the results.
We evaluate five estimators. They comprise the zero-shot verbalized estimate $\hat{R}(h_t)$; the same signal after isotonic recalibration with trajectory-level cross-fitting; the running product of step confidences, corresponding to the naive propagation method in Section~\ref{sec:propagation}; and two horizon baselines, each implemented as a logistic regression with trajectory-level cross-fitting.
The horizon baselines differ in their available features. The \emph{causal} baseline uses only information available to a live monitor at step $t$: the elapsed step index and the fixed step budget, represented by $(t,\,t/B)$ with $B=6$. Because $B$ is constant across trajectories, $t/B$ is a rescaled version of $t$ and adds no information. The baseline is therefore effectively univariate in the step index. We retain the two-feature representation because a monitor with a task-specific budget would compute both values; it should be interpreted as \emph{step index alone}. The \emph{post-hoc} baseline also uses the realized trajectory length through the features $(t/T,\,T)$.
The post-hoc baseline cannot be deployed because $T$ is unavailable while a trajectory is running. In this experiment, $T$ is also strongly associated with the outcome: the success rate is $0.771$ among the $201$ trajectories that finish in three steps and $0.133$ among the $15$ trajectories that reach the six-step limit, many of which never emit \texttt{Finish}. Conditioning on $T$ therefore leaks outcome information. We include this baseline only to measure the apparent performance gained from this unavailable feature. All conclusions use the causal baseline.
We follow the protocol in Section~\ref{sec:eval-metrics}. We use ten equal-width bins, bootstrap complete trajectories over $2{,}000$ resamples, report the checkpoint Brier score together with the calibration gap, and stratify both by normalized progress and by step index.
Three estimators require fitting, so all intervals reported below use a \emph{nested} bootstrap. Each replicate resamples the $300$ trajectories and refits the isotonic map and both logistic baselines under the same trajectory-level cross-fitting, because resampling already cross-fitted predictions omits fitting variation and understates uncertainty for the fitted estimators. Table~\ref{tab:tcece} also reports the naive intervals to show the effect of refitting.
Because all five estimators are evaluated on the same trajectories, pairwise comparisons use paired differences from joint bootstrap samples rather than differences between marginal intervals. Each reported paired difference is a mean over nested replicates with refitting, and it therefore need not equal the difference between the point estimates in Table~\ref{tab:tcece}, which come from models fitted once on the complete sample. Appendix~\ref{app:expdetails} specifies the resampling and refitting procedure, explains this difference with a worked example, and states the scope and mild conservatism of the resulting intervals.

\begin{table}[t]
\centering
\caption{TC-ECE on $954$ checkpoints from $300$ ReAct trajectories on
HotpotQA. TC-ECE weights checkpoints, TC-ECE$_{\mathrm{tr}}$ trajectories;
$\overline{\mathrm{ECE}}$ and $\mathrm{ECE}_{\max}$ stratify by step index
over strata with at least $30$ checkpoints, Eq.~\eqref{eq:strat-ece}.
Fitted rows ($^{\dagger}$) show naive and nested bootstrap intervals. The
last column is $\mathrm{ECE}_{\max}/\mathrm{TC\text{-}ECE}$ with its
$p$-value under the calibrated null discussed in the text.}
\label{tab:tcece}
\footnotesize
\setlength{\tabcolsep}{2pt}
\renewcommand{\arraystretch}{1.15}
\begin{tabular}{@{}lccccccccc@{}}
\toprule
\thc{Estimator $\hat{R}(h_t)$} & \thc{TC-ECE} & \thc{CI (naive)} & \thc{CI (nested)} & \thc{TC-ECE$_{\mathrm{tr}}$} & \thc{$\overline{\mathrm{ECE}}$} & \thc{$\mathrm{ECE}_{\max}$} & \thc{Brier} & \thc{AUROC} & \thc{Ratio ($p$)}\\
\midrule
Verbalized success prob. & 0.125 & [.102,.165] & [.102,.165] & 0.147 & 0.196 & 0.289 & 0.251 & 0.591 & 2.3 (.45)\\
\quad + isotonic recalibration$^{\dagger}$ & 0.028 & [.018,.076] & [.017,.133] & 0.030 & 0.125 & 0.260 & 0.229 & 0.588 & 9.3 (.001)\\
Prefix product of step conf. & 0.084 & [.060,.132] & [.059,.130] & 0.090 & 0.124 & 0.167 & 0.226 & 0.630 & 2.0 (.60)\\
\tband{10}{Horizon baselines}
Causal $(t,\,t/B)$, deployable$^{\dagger}$ & 0.046 & [.038,.083] & [.025,.127] & 0.056 & 0.082 & 0.153 & 0.228 & 0.562 & 3.3 (.15)\\
Post-hoc $(t/T,\,T)$, sees $T^{\dagger}$ & 0.097 & [.066,.154] & [.061,.168] & 0.109 & 0.096 & 0.106 & 0.204 & 0.645 & 1.1 (.99)\\
\bottomrule
\end{tabular}
\end{table}

\begin{figure}[t]
  \centering
  \includegraphics[width=0.98\textwidth]{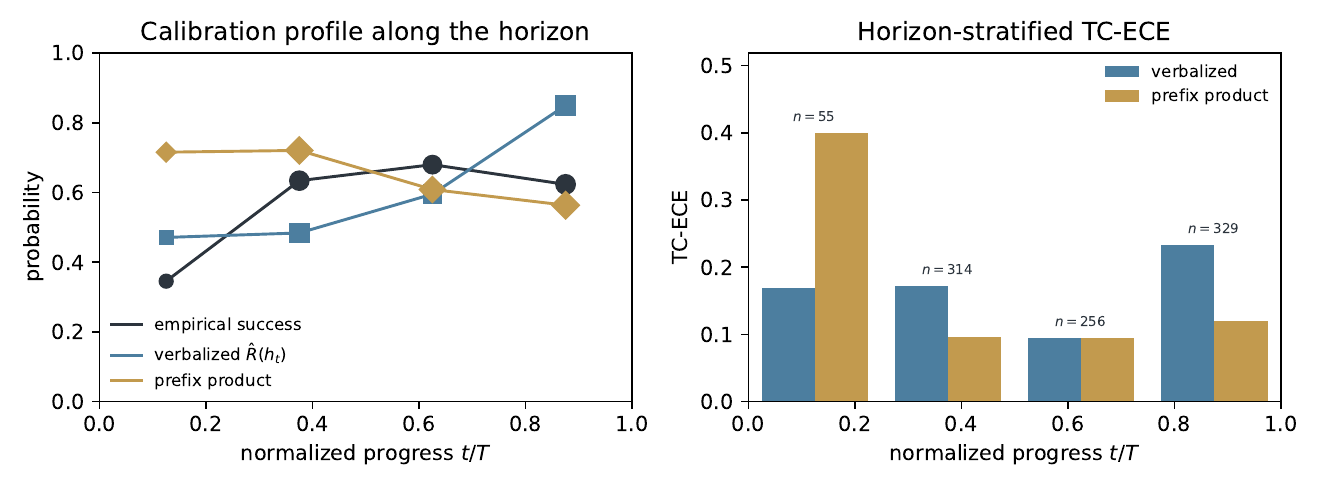}
  \caption{Calibration across normalized trajectory progress, for the
  verbalized estimate and the prefix product. Left: stated reliability against
  empirical success; marker area is proportional to $\sqrt{n}$ within the
  stratum. Right: TC-ECE within each progress quartile, annotated with the
  stratum size. Read against empirical success, the verbalized profile is
  non-monotone, with overconfidence in the opening quartile, underconfidence
  in the two central quartiles, and overconfidence in the closing quartile.
  The prefix product runs the other way, starting far above the success rate
  and decaying below it, which is why its own largest error is in the opening
  quartile rather than the closing one. The opening and closing quartiles
  contain $55$ and $329$ checkpoints, so the opening-quartile estimates are
  the least precise on both panels. Both panels stratify by normalized
  progress $t/T$, which is available only after a run ends; the step-index
  stratification used for the main reading disagrees in sign at the opening,
  where $g(1)=-0.166$ is underconfident, because the opening $t/T$ quartile
  collects the first checkpoints of the few long trajectories, which are
  disproportionately failures. Section~\ref{sec:empirical} sets the two
  stratifications side by side; the late overconfidence appears under both.}
  \label{fig:tcece}
\end{figure}

\paragraph{Computability and non-degeneracy.}
Eliciting $\hat{R}(h_t)$ adds one output line per step. All checkpoints are parsed successfully, and the verbalized estimate yields a TC-ECE of $0.125$ ($95\%$ CI $[0.102,0.165]$).
The three metrics produce different estimator rankings (Table~\ref{tab:tcece}). This disagreement supports reporting a calibration gap, a proper scoring rule, and a ranking metric together, as recommended in Section~\ref{sec:eval-metrics}. Reporting only one metric would change the interpretation of the comparisons below.
We resample complete trajectories for every metric in Table~\ref{tab:tcece} because checkpoints within a trajectory share the outcome $Y^{(j)}$, a dependence that increases with the horizon; Appendix~\ref{app:expdetails} compares the two resampling units at this horizon.
Variation due to refitting has a larger effect than the choice of resampling unit. Refitting within each replicate approximately doubles the intervals for the two fitted estimators central to the comparison, while the two unfitted estimators barely move, the verbalized signal not at all to three decimal places and the prefix product by at most $0.002$ at either endpoint (the two interval columns of Table~\ref{tab:tcece}). Fitting variation dominates sampling variation in this setting, and the naive bootstrap understates uncertainty by about a factor of two for comparisons involving the fitted estimators.
\par
\paragraph{Pooling obscures a large shift from underconfidence to overconfidence.}
Stratification by step index illustrates Eq.~\eqref{eq:strat-ece}. However, whether the shift is visible depends on the statistic being stratified. Stratified ECE magnitudes alone do not reveal it reliably.
ECE magnitude ratios do not by themselves establish horizon-dependent miscalibration. Across the five estimators, the ratio $\mathrm{ECE}_{\max}/\mathrm{TC\text{-}ECE}$ ranges from $1.1$ to $9.3$ (last column of Table~\ref{tab:tcece}). However, $\mathrm{ECE}_{\max}$ is the maximum of positively biased estimates across groups of unequal size, so it can exceed the pooled value even under perfect calibration. A simulation of a predictor calibrated at every checkpoint index, using the same checkpoint structure and binning over $20{,}000$ draws, gives a null ratio with median $2.20$ and a $95\%$ range of $[1.13,\,5.02]$. Only the pooled-isotonic ratio lies above this null distribution; the other four do not, as the per-estimator $p$-values in Table~\ref{tab:tcece} show. Without the null comparison, these ratios largely reflect the thin late strata.
ECE aggregates \emph{absolute} gaps, whereas the observed pattern is signed. Errors with opposite signs can cancel under pooling, and a magnitude-based statistic cannot show the direction of this cancellation.
Define the signed gap $g(t)$ as the mean stated reliability at step $t$ minus the observed success rate. Table~\ref{tab:drift} in Appendix~\ref{app:expdetails-tables} shows a pronounced drift: the agent is underconfident near the beginning of a run and overconfident near the end, with $g(t)$ increasing monotonically from $-0.166$ at $t=1$ to $+0.320$ at $t=6$. After aggregation over trajectory positions, the early gap is $-0.096$ ($95\%$ CI $[-0.149,-0.041]$) and the late gap is $+0.286$ ($[+0.144,+0.415]$). Their difference is $+0.381$ ($[+0.248,+0.505]$) and is positive in all $5{,}000$ trajectory-bootstrap resamples. The slope of $g$ on $t$ is $+0.133$ per step ($[+0.102,+0.161]$). These results show that the pooled score of $0.125$ averages two large errors with opposite signs.
These results motivate two additions to the protocol in Section~\ref{sec:eval-metrics}.
A stratified report should include the signed gap in addition to $\overline{\mathrm{ECE}}$ and $\mathrm{ECE}_{\max}$. The magnitude measures guard against cancellation only because they can exceed the pooled value; they neither show the sign of the error nor reveal the cancellation. In this dataset, they do not identify the observed shift.
The minimum-size filter also removes the strata with the largest signed gaps. Strata $t=5$ and $t=6$ fall below the $30$-checkpoint threshold and are excluded from $\mathrm{ECE}_{\max}$, although they carry the largest gaps in Table~\ref{tab:drift}. A threshold remains necessary for a binned magnitude estimate, but the signed gap can be estimated with a much smaller $n$. Thin strata should therefore be reported with their sample counts rather than omitted.
Recalibration shows the same limitation. A single isotonic map fitted on pooled checkpoints reduces the pooled score to $0.028$, suggesting near-perfect calibration. As the last column of Table~\ref{tab:drift} shows, it shrinks the early gaps while leaving the late gaps essentially unchanged. Pooled recalibration corrects the \emph{average} bias but leaves the horizon-dependent bias. This is the type of failure that Definition~\ref{def:traj} is intended to detect and that a pooled metric cannot reveal.
\par
\begin{table}[t]
\centering
\caption{Paired differences against the causal baseline on HotpotQA,
$\Delta =$ estimator $-$ baseline, with nominal $95\%$ nested-bootstrap
intervals; $^{*}$ marks differences that survive the Holm adjustment over
each family. Lower is better for TC-ECE and Brier, higher for AUROC.}
\label{tab:paired}
\small
\setlength{\tabcolsep}{3pt}
\renewcommand{\arraystretch}{1.15}
\begin{tabular}{@{}lccc@{}}
\toprule
\thc{Estimator $\hat{R}(h_t)$} & \thc{$\Delta$TC-ECE} & \thc{$\Delta$Brier} & \thc{$\Delta$AUROC}\\
\midrule
\tband{4}{HotpotQA, Qwen2.5-32B (Holm over twelve comparisons)}
Verbalized success prob. & $+0.068$ $[-0.001,+0.127]$ & $+0.020$ $[+0.004,+0.034]$ & $+0.045$ $[+0.002,+0.112]$\\
\quad + isotonic recalibration & $-0.007$ $[-0.056,+0.046]$ & $+0.003$ $[-0.007,+0.013]$ & $+0.016$ $[-0.029,+0.059]$\\
Prefix product of step conf. & $+0.029$ $[-0.041,+0.090]$ & $-0.005$ $[-0.026,+0.016]$ & $+0.085$ $[+0.031,+0.155]^{*}$\\
Post-hoc $(t/T,\,T)$, sees $T$ & $+0.046$ $[-0.010,+0.106]$ & $-0.023$ $[-0.043,-0.002]$ & $+0.095$ $[+0.048,+0.149]^{*}$\\
\tband{4}{HotpotQA replay, Gemini 3.1 Pro (Holm over six comparisons)}
Verbalized success prob. & $+0.213$ $[+0.131,+0.279]^{*}$ & $+0.054$ $[+0.031,+0.081]^{*}$ & $+0.089$ $[+0.019,+0.163]^{*}$\\
Prefix product of step conf. & $+0.216$ $[+0.133,+0.283]^{*}$ & $+0.053$ $[+0.029,+0.079]^{*}$ & $+0.109$ $[+0.037,+0.186]^{*}$\\
\bottomrule
\end{tabular}
\end{table}
\paragraph{A causal horizon baseline provides a meaningful comparison.}

Trajectory position alone is competitive on the calibration axis. The causal baseline uses only the elapsed step index, reaches a TC-ECE of $0.046$, and is better calibrated than every estimator here except the recalibrated verbalized signal. It is, however, nearly constant, with an AUROC of $0.562$; since calibration does not imply informativeness (Section~\ref{sec:metrics}), the comparisons report paired differences on all three metrics, drawn from the same nested bootstrap replicates rather than from marginal intervals (Table~\ref{tab:paired}).
The zero-shot verbalized estimate gives mixed results. It is nominally worse on both probability metrics and nominally better at ranking, and the prefix product shows the same pattern. Nested refitting flips both readings, turning a naive-bootstrap calibration advantage for the baseline into an interval that includes zero and a null naive AUROC difference into a detected advantage for the agent's signal, so the refitting procedure must be applied consistently across metrics.

One caveat governs every interval here. Twelve paired differences are computed, five exclude zero at the nominal $95\%$ level, the $\Delta\mathrm{AUROC}$ carrying the reading has the weakest exclusion in the family (lower bound $+0.002$), and under the same Holm adjustment that the ALFWorld and frontier analyses use, neither of the two differences carrying the reading survives. We therefore report the HotpotQA comparisons as descriptive rather than inferential: stated confidence ranks successful trajectories somewhat better than the step index while providing less accurate probabilities, and neither statement is established.
Because the baseline is fitted while the verbalized estimate is zero-shot, we also recalibrate the latter by isotonic regression under the same cross-fitting. Recalibration closes the pooled gap, leaving all three paired intervals through zero. Its apparent ranking cost (the paired $\Delta\mathrm{AUROC}$ falls from $+0.045$ for the zero-shot signal to $+0.016$ after recalibration) is a finite-sample artifact: each replicate refits the monotone map on roughly $63\%$ of trajectories, and the coarser isotonic blocks create ties, not evidence that recalibration reduces discrimination. Parity, moreover, holds only under pooled metrics, since the recalibrated estimator's worst-stratum discrepancy exceeds the null expectation. These results are specific to one configuration, and at $n=300$ with three-step horizons the intervals would not detect a modest advantage in either direction.

The post-hoc baseline shows what leaked length information buys. Adding the realized $T$ improves Brier from $0.228$ to $0.204$ and AUROC from $0.562$ to $0.645$, which would make the baseline look clearly stronger than the agent's signal, yet $T$ is unavailable to a live monitor and outcome-dependent here, so it is not a valid reference point. Evaluation should report paired improvement over a baseline restricted to causally available features, state those features, and pair the calibration gap with a proper score and a ranking metric.

The calibration profile in Figure~\ref{fig:tcece} is non-monotone across normalized progress, overconfident in the opening quartile, underconfident in the middle, and overconfident in the closing quartile, whose TC-ECE of $0.233$ is nearly twice the pooled value; the late overconfidence is where a monitor would act. The step-index and normalized-progress stratifications disagree in sign at the beginning ($g(1)=-0.166$ against opening-quartile overconfidence) because the $t/T$ quartile mainly collects opening checkpoints of the few long trajectories, which are disproportionately failures; the groups differ, not the estimates. We use the step-index stratification for the main reading, since $T$ is unknown during a run and outcome-dependent here, and retain Figure~\ref{fig:tcece} because normalized-progress profiles are commonly reported and the late overconfidence appears under both stratifications.
\par
\paragraph{A frontier API model on the same questions.}
To test whether these comparisons are an artifact of one mid-sized open-weight model, we replay the identical protocol, the same $300$ questions, prompts, and step budget at temperature $0$, with a frontier API model (Gemini~3.1~Pro, accessed through Vertex AI on 28 August 2026). The model is stronger on the task, with a success rate of $0.717$ against $0.657$, shorter trajectories of $2.7$ steps on average, and a parsable trajectory-success estimate at $99.8\%$ of steps.
Its stated reliability is nevertheless overconfident at every step index. The signed gap is $+0.245$ $[+0.194, +0.297]$ over early checkpoints and $+0.608$ $[+0.333, +0.843]$ over the late ones, which number only $13$ and should be read with that size in mind; the late-minus-early difference of $+0.363$ $[+0.084, +0.597]$ is positive in $99.1\%$ of trajectory resamples. Deterioration toward late-checkpoint overconfidence appears in every configuration in this section, while the sign at the early end varies by model rather than by task, underconfident for the two Qwen runs and overconfident here and for the Llama run below. Pooled over checkpoints, the frontier model is both more accurate and worse calibrated than the $32$B model on the same questions, with a verbalized TC-ECE of $0.248$ against $0.125$.
The baseline comparison produces a third distinct outcome. With trajectories this short and uniform, the step index carries almost no information, and the causal baseline's AUROC drops to $0.495$. The verbalized forecast and the prefix product both rank above it by margins that survive the Holm adjustment over this six-comparison family, while remaining far worse calibrated, with $\Delta$TC-ECE above $+0.2$ for both (Table~\ref{tab:paired}, lower block). Across the three configurations, which signal is informative flips. Content signals rank but position does not here, position ranks but the verbalized forecast does not on ALFWorld, and the two are nominally close on the $32$B HotpotQA run. Reporting a calibration gap, a proper score, and a ranking metric against a causally restricted baseline is what makes these reversals visible.

\paragraph{Measured correlation and composition.}
A companion experiment tests the formal claims of Section~\ref{sec:formal} on chains of $T=4$ independent QA items from GSM8K and TriviaQA \citep{cobbe2021training,joshi2017triviaqa}, $150$ chains per condition, answered either in one shared context or in separate contexts. The design provides exact step labels while removing every dependence channel except the shared context window, so it tests whether the estimators and the bound behave as predicted when labels are known rather than how strongly errors couple in real agents; Appendix~\ref{app:expdetails-chains} reports the full setup and per-condition results.
Three results carry forward. First, multiplication amplifies step-level miscalibration, turning step ECEs of $0.16$--$0.20$ into trajectory ECEs of $0.36$--$0.53$, while a dependence-blind learned aggregator over the same confidences repairs most of the error, so what the product rule spoils here is step calibration rather than composition. Second, the signed decomposition of Appendix~\ref{app:proofs-pathwise}, evaluated on the only condition with a measured positive step-error correlation, attributes $+0.468$ of the overestimate to compounded step overconfidence and $+0.048$, an order of magnitude less, to dependence. Third, the pairwise step-error correlations are near zero except in shared-context GSM8K ($\hat\rho=0.089$ $[0.023,0.159]$), a marginal exclusion among four uncorrected tests and expected under a design that removes the coupling channels; the marginal product tracks realized reliability when $\hat\rho\approx 0$ and understates it within Proposition~\ref{prop:positive}'s bound when $\hat\rho>0$. The step-dependence claim for real agents therefore rests on the ALFWorld measurement below and on the algebra, not on these chains.
\par
Two limitations affect the intervals reported above. The HotpotQA intervals include refitting within the nested bootstrap and therefore represent fitting and sampling variation, whereas the companion experiment's aggregator is fitted once and evaluated on held-out chains. No interval adjusts for the binning or finite-sample bias of the ECE estimator itself; where that bias materially affects a conclusion, as in the stratified comparison, we provide a calibrated null instead (Section~\ref{sec:eval-metrics}).
\par
\begin{table}[t]
\centering
\caption{ALFWorld replication summary for the two open-weight runs. Signed
gaps are stated confidence minus the realized rate; positive values are
overconfident. All intervals are game-bootstrap $95\%$ ranges.}
\label{tab:alfrep}
\small
\setlength{\tabcolsep}{4pt}
\renewcommand{\arraystretch}{1.15}
\begin{tabular}{@{}lcc@{}}
\toprule
 & \thc{Qwen3.5-9B} & \thc{Llama-3.1-8B}\\
\midrule
\tband{3}{Run}
Episodes (games) & 512 (134) & 416 (134)\\
Mean steps; reaching step 20 & 33.5; \; 67.8\% & 38.3; \; 74.8\%\\
Success rate & 0.514 & 0.317\\
Format compliance & 0.996 & 0.993\\
\tband{3}{Within-game prefix coupling, by step label}
Non-regressive & $+0.035$ $[+0.002,+0.067]$ & $+0.021$ $[-0.006,+0.047]$\\
Admissible action & $+0.178$ $[+0.123,+0.236]$ & $+0.267$ $[+0.202,+0.329]$\\
Strict progress & $+0.395$ $[+0.304,+0.478]$ & $+0.265$ $[+0.119,+0.404]$\\
\tband{3}{Trajectory forecast, signed gap}
Early ($t=1..10$) & $-0.117$ $[-0.180,-0.052]$ & $+0.301$ $[+0.240,+0.356]$\\
Late ($t=41..50$) & $+0.420$ $[+0.362,+0.480]$ & $+0.832$ $[+0.790,+0.867]$\\
Late $-$ early & $+0.536$ $[+0.456,+0.619]$ & $+0.530$ $[+0.463,+0.600]$\\
\tband{3}{Step-level signal, signed gap}
Early; late & $-0.225$; $-0.311$ & $-0.107$; $-0.133$\\
Late $-$ early & $-0.086$ $[-0.116,-0.058]$ & $-0.026$ $[-0.051,-0.003]$\\
\tband{3}{First error (episodes with an error: 201; 258)}
Position, fraction of episode & $0.354$ $[0.313,0.395]$ & $0.200$ $[0.175,0.229]$\\
Stated action conf.\ at that step & $0.856$ $[0.833,0.877]$ & $0.844$ $[0.818,0.867]$\\
Plan returns to pre-error length & $0.602$ $[0.512,0.694]$ & $0.632$ $[0.563,0.696]$\\
Episode still succeeds & $0.100$ $[0.061,0.145]$ & $0.078$ $[0.046,0.115]$\\
Success when first error early & $0.029$ $[0.000,0.066]$ & $0.056$ $[0.025,0.095]$\\
\bottomrule
\end{tabular}
\end{table}
\paragraph{Long-horizon validation on ALFWorld.}
The two experiments above leave the long-horizon regime untested, and the chained-QA design removes the coupling channels by construction. We therefore run the same elicitation protocol on ALFWorld \citep{shridhar2021alfworld}, a simulated household benchmark with a genuine action--observation loop.
A ReAct agent based on Qwen3.5-9B answers $512$ episodes of the unseen split, covering $134$ distinct games with about $3.8$ stochastic repeats per game at temperature $0.5$; the repeats over one game hold task difficulty fixed, which is what allows a within-game coupling estimate below. Episodes run under a $50$-step budget and are genuinely long; Table~\ref{tab:alfrep} summarizes both runs. The environment ships a hand-coded expert planner, and the length of its remaining plan from the current state is a distance to the goal. A step is labeled correct when it does not increase this distance, which provides an exact per-step label rather than the proxy available on HotpotQA. All intervals resample games rather than episodes, because repeats of one game share difficulty.

The coupling that Proposition~\ref{prop:positive} depends on is directly measurable here. Under this step label, the within-game prefix-coupling coefficient, in which the remaining correlation reflects error propagation rather than shared task difficulty, is small but positive, and stricter step labels give substantially stronger coupling (middle block of Table~\ref{tab:alfrep}); the pooled coefficient under the default label is $+0.079$. The identity in Proposition~\ref{prop:positive} holds exactly at every horizon, and the marginal product understates prefix reliability in precisely the direction the proposition predicts under positive coupling. At $T=20$ the product of step marginals is $0.490$ against a realized prefix reliability of $0.622$, a gap of $+0.132$ inside the bound of $0.233$; at the full $50$-step budget the product reaches $0.139$ against $0.274$, and the bound has grown to $0.757$, consistent with the degradation analysis in Appendix~\ref{app:proofs}. The near-zero correlations of the chained-QA design therefore reflect that design, and restoring the action--observation loop restores the coupling.

The trajectory forecast replicates the HotpotQA drift at ten times the horizon. Its signed gap, the mean stated success probability at step $t$ minus the realized success rate among episodes reaching $t$, moves from underconfident over the first ten step indices to overconfident over the last ten, and the late-minus-early difference is positive in every game-bootstrap resample (Table~\ref{tab:alfrep}). Early underconfidence followed by late overconfidence therefore appears on both tasks and both models, and it is largest exactly at the late checkpoints where a monitor would act.
The step-level signal behaves differently on the same trajectories. Its signed gap, the stated step confidence minus the realized step-correctness rate, is negative at every one of the $50$ step indices and deepens toward the end, with a late-minus-early difference that is negative in every resample (Table~\ref{tab:alfrep}). At late steps the two elicited signals are therefore miscalibrated in opposite directions, the step confidence too low and the trajectory forecast too high. This mirrors the confidence dichotomy of \citet{xuan2026confidence}, in which one agent's confidences about different decisions are miscalibrated in different directions, here between the step-level and trajectory-level signals, so neither signal's calibration can be inferred from the other's. In all cases a pooled score hides the drift.

Table~\ref{tab:alfworld} reports the pooled levels, and Table~\ref{tab:alfpaired} the paired differences against the same causal step-index baseline as before, now with $B=50$. The comparison reverses the HotpotQA result cleanly: the verbalized forecast is worse than the step-index baseline on all three metrics, each difference surviving the Holm adjustment. The step index is genuinely informative on this task because failing episodes run toward the budget while successful ones finish early; the easiest task type finishes in $16.3$ steps on average with success $0.967$, while the hardest averages $46.1$ steps with success $0.135$. The prefix product, the running product of step confidences, is indistinguishable from the causal baseline at ranking while significantly worse on both probability metrics. Isotonic recalibration of the verbalized forecast matches the baseline's calibration but remains significantly worse as a probability forecast and at ranking. Recalibration repairs calibration and cannot create discrimination; Section~\ref{sec:applications} later measures the same asymmetry as a function of the label budget (Figure~\ref{fig:costquality}). Together with the HotpotQA comparison, where the verbalized signal ranked nominally better than the step index, this shows that whether stated confidence adds information over trajectory position is itself task-dependent and must be measured per deployment.

A third open-weight model replicates these findings at full game coverage. Under the identical protocol, Llama-3.1-8B-Instruct ran $512$ scheduled episodes in $16$ parallel rounds of $32$ episodes each, covering all $134$ games at about $3.1$ retained stochastic repeats per game. The model is markedly weaker on the task than Qwen, with correspondingly longer episodes (Table~\ref{tab:alfrep}, top block). One property of this run tempers its scope. Whenever one call to the environment's expert planner exceeded a $120$-second watchdog, the whole round was abandoned and its $32$ in-flight episodes discarded; this happened to $3$ of the $16$ rounds, leaving the $416$ analyzed episodes. The rule biases the retained rounds away from states on which the expert planner grinds, although an earlier run of the same protocol saw an identical batch complete in under five minutes on one attempt and be killed on another, so the pathology is sampling-dependent rather than intrinsic to particular games. No episode is censored otherwise.
The core quantities replicate, and Tables~\ref{tab:alfrep} and~\ref{tab:alfpaired} place the two runs side by side. On the two outcome-tracking step labels, the within-game prefix coupling is positive with intervals that exclude zero. The lenient non-regressive label instead carries no stable signal; its interval includes zero here, and its Qwen counterpart excludes zero only marginally. A label whose marginal rate is $0.96$ is too weak to measure coupling; this is the labeling problem of Section~\ref{sec:formal} made concrete, and a reason to report every label convention rather than select one. The trajectory forecast is again overconfident exactly where a monitor would act, with a late-minus-early difference positive in every game resample; the step-level signal is again negative throughout and deepens more weakly than Qwen's. The verbalized forecast again loses to the causal step-index baseline on all three metrics and by larger margins than Qwen. Here the loss is not merely to the baseline: at an AUROC of $0.293$ the stated trajectory confidence ranks \emph{below chance}, so on this run it is negatively associated with episode success and a monitor that trusted it would be worse off than one that ignored it. Overconfidence and anti-ranking are separate failures, and only the second makes the signal actively misleading; a pooled calibration number reports neither. Here the prefix product is also significantly worse on all three metrics, so its ranking parity with the baseline is specific to the Qwen configuration, while isotonic recalibration again matches the baseline's calibration and again cannot repair ranking (Table~\ref{tab:alfpaired}, lower block).
That last row is worth reading against Section~\ref{sec:metrics} rather than as a success. Isotonic regression is monotone, so it cannot move AUROC except by creating ties; the rise from $0.293$ to $0.498$ is therefore the map collapsing towards a constant on a signal whose order it is required to preserve but cannot. The collapse is exact: the checkpoint-weighted success rate on this run is $0.110$, a constant forecast at that rate scores a Brier of $0.098$, and that is the value the recalibrated estimator attains, alongside a TC-ECE of $0.000$. This is precisely the case Section~\ref{sec:metrics} isolates, perfect calibration carrying no information about which episodes fail, and reporting the calibration gap alone would have ranked this degenerate estimator best of the four.

The exact step labels also locate each trajectory's \emph{first} error and what follows it, which gives the introduction's claim that agent errors arrive early and confidently its precise empirical form (Table~\ref{tab:alfrep}, bottom block). The first error falls about a third of the way through the episode for Qwen and a fifth of the way for Llama, and at that very step both models state action confidence above $0.84$. What the labels then show is sharper than irreversibility. The expert's remaining plan returns to its pre-error length in roughly six out of ten of these episodes for both models, yet the task subsequently succeeds in fewer than one in ten of them. Recovery is common and still not enough; what kills the trajectory is the budget spent undoing the damage. Episodes whose first error arrives early almost never succeed, while late first errors are directionally more survivable, although those strata are thin, with $n=18$ and $n=11$. This is the earliness requirement of Section~\ref{sec:control} in empirical form: by the time the damage is visible in the outcome, the budget to act on a warning is already gone.

\begin{table}[t]
\centering
\caption{ALFWorld (unseen split): pooled checkpoint levels under a $50$-step
budget, scored against episode success. The causal baseline is a logistic
model on $(t,\,t/B)$ with $B=50$, cross-fitted by episode. Paired differences
are in Table~\ref{tab:alfpaired}.}
\label{tab:alfworld}
\small
\setlength{\tabcolsep}{5pt}
\renewcommand{\arraystretch}{1.15}
\begin{tabular}{@{}lccc@{}}
\toprule
\thc{Estimator $\hat{R}(h_t)$} & \thc{TC-ECE} & \thc{Brier} & \thc{AUROC}\\
\midrule
\tband{4}{Qwen3.5-9B, $512$ episodes over $134$ games}
Verbalized success prob. & 0.209 & 0.300 & 0.533\\
\quad + isotonic recalibration & 0.008 & 0.196 & 0.549\\
Prefix product of step conf. & 0.193 & 0.221 & 0.732\\
Causal $(t,\,t/B)$, deployable & 0.021 & 0.168 & 0.755\\
\tband{4}{Llama-3.1-8B-Instruct, $416$ episodes over $134$ games}
Verbalized success prob. & 0.650 & 0.595 & 0.293\\
\quad + isotonic recalibration & $<$0.001 & 0.098 & 0.498\\
Prefix product of step conf. & 0.093 & 0.109 & 0.662\\
Causal $(t,\,t/B)$, deployable & 0.009 & 0.089 & 0.768\\
\bottomrule
\end{tabular}
\end{table}
\begin{table}[t]
\centering
\caption{ALFWorld paired differences against the causal baseline,
$\Delta =$ estimator $-$ baseline, with nominal $95\%$ game-bootstrap
intervals; $^{*}$ marks differences that survive the Holm adjustment over
each run's twelve comparisons. The post-hoc baseline sees the realized
length $T$. As in Table~\ref{tab:paired}, each difference is a mean over
bootstrap replicates in which both estimators are refitted, so it need not
equal the difference between the point estimates of
Table~\ref{tab:alfworld}, which come from models fitted once on the complete
sample; the isotonic rows are where the two diverge in sign
(Appendix~\ref{app:expdetails-bootstrap}).}
\label{tab:alfpaired}
\small
\setlength{\tabcolsep}{3pt}
\renewcommand{\arraystretch}{1.15}
\begin{tabular}{@{}lccc@{}}
\toprule
\thc{Estimator $\hat{R}(h_t)$} & \thc{$\Delta$TC-ECE} & \thc{$\Delta$Brier} & \thc{$\Delta$AUROC}\\
\midrule
\tband{4}{Qwen3.5-9B}
Verbalized success prob. & $+0.190$ $[+0.154,+0.226]^{*}$ & $+0.130$ $[+0.106,+0.155]^{*}$ & $-0.216$ $[-0.256,-0.170]^{*}$\\
\quad + isotonic recalibration & $+0.010$ $[-0.020,+0.059]$ & $+0.028$ $[+0.022,+0.035]^{*}$ & $-0.219$ $[-0.271,-0.174]^{*}$\\
Prefix product of step conf. & $+0.171$ $[+0.140,+0.205]^{*}$ & $+0.051$ $[+0.035,+0.071]^{*}$ & $-0.016$ $[-0.044,+0.014]$\\
Post-hoc $(t/T,\,T)$, sees $T$ & $-0.019$ $[-0.035,-0.006]^{*}$ & $-0.165$ $[-0.183,-0.148]^{*}$ & $+0.242$ $[+0.215,+0.276]^{*}$\\
\tband{4}{Llama-3.1-8B-Instruct}
Verbalized success prob. & $+0.642$ $[+0.611,+0.672]^{*}$ & $+0.507$ $[+0.472,+0.541]^{*}$ & $-0.469$ $[-0.536,-0.399]^{*}$\\
\quad + isotonic recalibration & $+0.007$ $[-0.012,+0.043]$ & $+0.010$ $[+0.007,+0.013]^{*}$ & $-0.295$ $[-0.341,-0.251]^{*}$\\
Prefix product of step conf. & $+0.084$ $[+0.065,+0.104]^{*}$ & $+0.020$ $[+0.013,+0.027]^{*}$ & $-0.099$ $[-0.151,-0.042]^{*}$\\
Post-hoc $(t/T,\,T)$, sees $T$ & $-0.005$ $[-0.014,+0.003]$ & $-0.083$ $[-0.100,-0.066]^{*}$ & $+0.217$ $[+0.169,+0.268]^{*}$\\
\bottomrule
\end{tabular}
\end{table}
\par
These experiments use three open-weight models, one frontier API model, three tasks with clear correctness labels, and horizons up to a $50$-step budget. They do not establish any estimator as state of the art. They also do not establish that verbalized confidence is equivalent to a position baseline. On the HotpotQA traces, verbalized confidence performs nominally better at ranking and nominally worse as a probability forecast, neither difference surviving a multiplicity adjustment over the twelve comparisons made, and the two become indistinguishable at $n=300$ after pooled recalibration; on ALFWorld the same comparison reverses, and the position baseline wins on every metric. Neither outcome should be read as a property of verbalized confidence in general. The experiments show that pooling hides a large signed drift on both tasks. For the trajectory forecast, late overconfidence replicates across all four configurations, while the sign of the early gap varies by model; the step-level signal on ALFWorld drifts in the opposite direction, so the two elicited signals must be audited separately. They do not show that the magnitude measures in Eq.~\eqref{eq:strat-ece} detect such a shift in general; on the HotpotQA traces they do not, and only the signed gap reveals it.
The short horizon is the main limitation of the first two experiments. The average HotpotQA trajectory has $3.2$ steps, and $201$ of the $300$ runs contain exactly three steps, so Proposition~\ref{prop:positive} predicts small composition error in that regime. The ALFWorld run addresses the long-horizon regime directly and finds positive coupling and a widening product gap exactly where the proposition predicts them, but it remains a single simulated benchmark, its step labels inherit the expert planner's notion of progress, and the long-horizon evidence comes from mid-sized open-weight checkpoints; the frontier replication is confined to the short-horizon task. The tasks also differ in model and decoding temperature, so cross-task differences are not attributable to the task alone. The experiments nevertheless establish several narrower results. The protocol can be computed on real traces at negligible cost. Refitting within the bootstrap, restricting the baseline to causally available features, stratifying by position, and comparing the stratified statistic with a calibrated null are its main design choices, and each one changes the resulting interpretation.

\subsection{What Is Still Missing}
\label{sec:eval-missing}

Section~\ref{sec:challenges} states ten open problems; three of them are evaluation gaps. Their implications for evaluation are summarized below.
No cross-domain benchmark currently provides resources comparable to those available for single-turn UQ \citep{vashurin2025benchmarking}. Existing resources focus on one domain, such as RAG, robotics, or finance, and use incompatible protocols.
Mid-trajectory labels are similarly scarce but are less essential for checkpoint evaluation than is often assumed. Definition~\ref{def:tcece} evaluates checkpoints against the final outcome, so completed trajectories with final labels are sufficient for computing this metric (Section~\ref{sec:eval-metrics}). Additional labels are needed for step-level calibration, failure attribution, and a low-variance target for $\Pr(Y=1 \mid h_t)$. Process supervision provides a practical source of these labels \citep{lightman2024lets,wang2024mathshepherd}.
When uncertainty informs control, evaluation must measure whether the resulting decisions improve outcomes. This requires the intervention advantage in Eq.~\eqref{eq:advantage}, rather than failure prediction alone \citep{zhang2026calibration}.

\takeaway[sec:eval]{No common protocol yet releases cross-domain agent trajectories.
Evaluation should compare against a horizon baseline restricted to causally
available features, resample whole trajectories while refitting every fitted
estimator within each replicate, and report a calibration gap together with
a proper scoring rule and a ranking metric, because a position baseline can
be well calibrated while providing little information. Calibration should be
stratified by trajectory position, and the stratified report should include
the \emph{signed} gap, since pooling can cancel horizon-dependent errors
with opposite signs that an absolute-value statistic cannot show; a
magnitude statistic such as $\mathrm{ECE}_{\max}$ needs a calibrated null
because a maximum over thin strata can exceed the pooled score even for a
calibrated predictor. Across the four configurations of
Section~\ref{sec:empirical}, the trajectory forecast's late overconfidence
replicates throughout, while which signal ranks above a position baseline
flips, so single-configuration conclusions about estimators should not be
generalized. Final-outcome trajectories suffice for these checkpoint
metrics; step-resolved labels are needed only for step-level calibration and
failure attribution. When uncertainty informs control, evaluation should
measure intervention outcomes.}

\section{Applications and a Practitioner's Guide}

\label{sec:applications}

Uncertainty estimates are useful when they inform concrete decisions.
Figure~\ref{fig:guide} provides a practical guide to the methods reviewed in this paper. It considers six deployment requirements and links each one to the method families in Table~\ref{tab:families} and to the relevant sections.
This section summarizes the main application patterns in the literature.

\par

Computational cost is a central constraint that grows with the trajectory. A $k$-sample estimator applied at every step of a $T$-step trajectory needs about $k \times T$ generations, before the pairwise comparisons of semantic clustering \citep{kuhn2023semantic,farquhar2024detecting}, and resampling complete trajectories may be infeasible outright, since repeating a run that sends a message or deletes a file changes the environment rather than measuring the same event \citep{ruan2024toolemu}. Deployed systems therefore favor low-cost signals such as verbalized confidence, hidden-state probes \citep{kossen2024semantic,azaria2023internal}, and generation-free scores \citep{zhu2026towards}, leaving sampling-intensive methods to offline evaluation.
\par
Four considerations cut across the six branches of Figure~\ref{fig:guide} and decide which of them a given deployment should follow. Model access decides the candidate families, since a closed API leaves the black-box pair \textcolor{axyellow!75!black}{B1}--\textcolor{axyellow!75!black}{B2} (Section~\ref{sec:verbalized}) while token probabilities and hidden states enable sequence scores and near-free per-step probes (Section~\ref{sec:tokenlevel}). The per-step budget separates $k\times$ sampling methods from the near-$1\times$ verbalized, probe, and generation-free signals (Table~\ref{tab:families}). What the estimate must trigger sets the required quality, because a score that only annotates a log tolerates miscalibration that a score gating irreversible actions cannot, and when actions differ in recoverability the intervention-advantage framing of Section~\ref{sec:control} applies. Finally, whether labeled trajectories from the deployment distribution exist decides between fitted calibrators with a pre-launch run of the protocol of Section~\ref{sec:eval} and zero-shot trust, which the evidence of Section~\ref{sec:empirical} says deserves a horizon-stratified check rather than a pooled one.

\begin{figure}[t]
  \centering
  \includegraphics[width=0.98\textwidth]{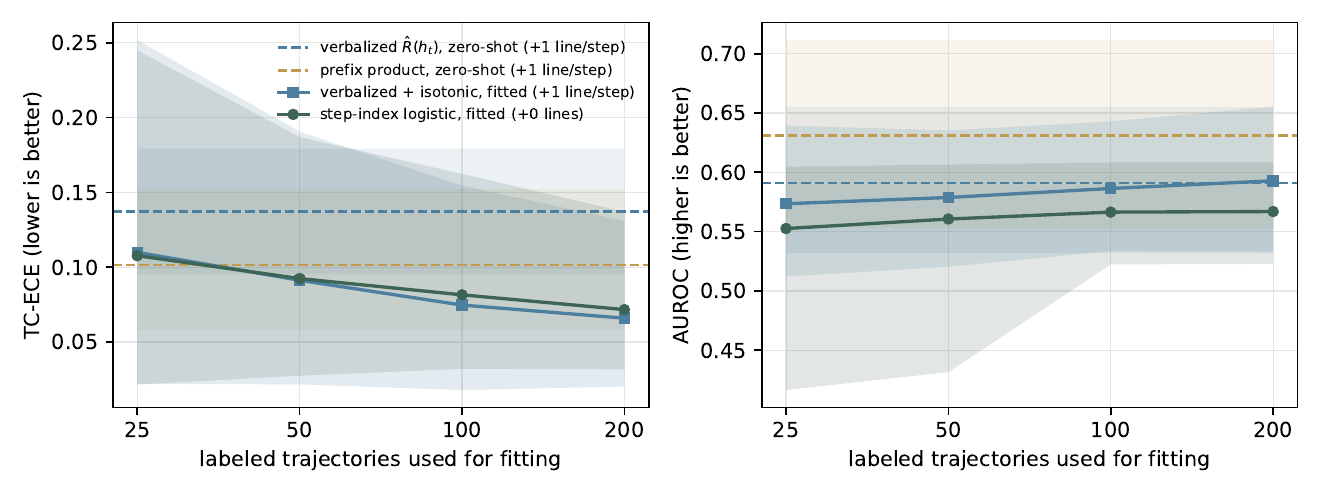}
  \caption{Measured cost--quality trade-off on the $300$ HotpotQA trajectories
  of Section~\ref{sec:empirical}. Curves show the two fitted estimators as a
  function of the number of labeled trajectories used for fitting, with bands
  giving the $[2.5,97.5]$ percentile range over $400$ disjoint fit and
  evaluation splits; dashed lines show the zero-shot estimators scored on the
  same evaluation sets. Legend entries state each estimator's exact
  elicitation overhead in generated lines per step. Labeled data buys
  calibration (left) but not discrimination (right). Sampling-based estimator
  families, which multiply generation cost by $k$ at every step, are not
  measurable on these traces.}
  \label{fig:costquality}
\end{figure}

\par
Figure~\ref{fig:costquality} makes the label-cost dimension concrete on the traces of Section~\ref{sec:empirical}. Labels buy calibration quickly, with both fitted estimators improving on the zero-shot verbalized TC-ECE in about three quarters of paired resamples at only $25$ labeled trajectories and almost always at $200$. Labels do not buy discrimination, since the step-index baseline's AUROC stays near $0.56$ at every fitting size and pooled isotonic recalibration only recovers the ranking quality the zero-shot signal already had; the best-ranking signal on these traces, the prefix product, is zero-shot at the same one-line-per-step cost. These numbers are specific to one model, task, and short horizon, and the supplementary material regenerates them from the released traces; sampling-based families would require new rollouts to place on the same axes.

\begin{figure}[t]
  \centering
  \resizebox{\textwidth}{!}{\begin{tikzpicture}[
  every node/.style={font=\footnotesize},
  q/.style={draw=axblue!55, fill=axblue!6, rounded corners=3pt, align=center,
            text width=4.7cm, inner sep=5pt},
  rec/.style={draw=axyellow!70!black, fill=axyellow!12, rounded corners=3pt,
              anchor=west, align=left, text width=9.7cm, inner sep=5pt},
  yesA/.style={-stealth, axaqua!55!black, line width=0.9pt},
  noA/.style={-stealth, gray!60, line width=0.9pt},
  lab/.style={font=\scriptsize\itshape, inner sep=1pt},
]

\node[q] (q1) at (2.9,0) {Do you need a \textbf{formal coverage guarantee}?};
\node[rec] (r1) at (6.3,0) {\textbf{Conformal methods}: prediction sets over actions, risk-controlled retrieval (Sections~\ref{sec:foundations}, \ref{sec:toolplanning}, \ref{sec:rag}). \emph{The guarantee is per step and marginal: composing $T$ of them costs $T\alpha$ under the union bound, and history dependence breaks exchangeability (Sections~\ref{sec:conformal}, \ref{sec:propagation}).} \, {\scriptsize\color{gray!48!black}\citealp{ren2023robots}; \citealp{rouzrokh2024conflare}; \citealp{quach2024conformal}}};

\node[q] (q2) at (2.9,-1.7) {Do you have \textbf{white-box access} to logits or hidden states?};
\node[rec] (r2) at (6.3,-1.7) {\textbf{Cheap internal signals}: token entropy, semantic-entropy probes, hidden-state classifiers (Section~\ref{sec:foundations}). \, {\scriptsize\color{gray!48!black}\citealp{kossen2024semantic}; \citealp{azaria2023internal}; \citealp{orgad2025llms}}};

\node[q] (q3) at (2.9,-3.4) {Does the \textbf{inference budget} rule out $k$-sampling at every step?};
\node[rec] (r3) at (6.3,-3.4) {\textbf{Single-pass signals only}: verbalized confidence, generation-free scores; spend sampling on a few pivotal checkpoints, never uniformly over the horizon (Sections~\ref{sec:foundations}, \ref{sec:applications}). \, {\scriptsize\color{gray!48!black}\citealp{zhu2026towards}; \citealp{tian2023just}; \citealp{wei2026longhorizon}}};

\node[q] (q4) at (2.9,-5.1) {Is \textbf{ambiguity in the request} the dominant risk?};
\node[rec] (r4) at (6.3,-5.1) {\textbf{Decompose, then clarify}: route aleatoric uncertainty to a question, not a guess (Section~\ref{sec:control}). \, {\scriptsize\color{gray!48!black}\citealp{matsnev2026uncertainty}; \citealp{edwards2026ask}; \citealp{kuhn2023clam}}};

\node[q] (q5) at (2.9,-6.8) {Is the \textbf{horizon long}, with compounding steps?};
\node[rec] (r5) at (6.3,-6.8) {\textbf{Propagate step signals}: situation-aware aggregation, early-warning dynamics (Section~\ref{sec:propagation}). \, {\scriptsize\color{gray!48!black}\citealp{zhao2024saup}; \citealp{duan2025uprop}; \citealp{darabi2026groundcontrol}}};

\node[q] (q6) at (2.9,-8.5) {Do \textbf{several agents} communicate?};
\node[rec] (r6) at (6.3,-8.5) {\textbf{Confidence-weighted communication}: discount unreliable messages, test consensus for correlated failure (Section~\ref{sec:multiagent}). \, {\scriptsize\color{gray!48!black}\citealp{yoffe2024debunc}; \citealp{huang2026counterfactual}}};

\node[rec, draw=axblue!55, fill=axblue!6] (fb) at (6.3,-10.2) {\textbf{Black-box default}: verbalized confidence plus sampling agreement, recalibrated on in-domain data (Section~\ref{sec:foundations}). \, {\scriptsize\color{gray!48!black}\citealp{tian2023just}; \citealp{lin2023generating}; \citealp{ulmer2024calibrating}}};

\foreach \i in {1,...,6} \draw[yesA] (q\i.east) -- node[lab, above, text=axaqua!45!black]{yes} (r\i.west);
\foreach \i/\j in {1/2,2/3,3/4,4/5,5/6} \draw[noA] (q\i.south) -- node[lab, right, text=gray!55!black]{no} (q\j.north);
\draw[noA] (q6.south) |- node[lab, pos=0.25, right, text=gray!55!black]{no} (fb.west);

\node[draw=axaqua!55!black, dashed, fill=axaqua!7, rounded corners=3pt, align=left,
      text width=15.1cm, inner sep=5pt, anchor=north west] at (0.45,-11.3)
  {\textbf{In every case}: evaluate at the trajectory level, not only with final-answer ECE; account for estimator cost at the deployed horizon before quality; and measure whether \emph{acting} on the estimate improves outcomes (Sections~\ref{sec:formal}, \ref{sec:eval}). \, {\scriptsize\color{gray!48!black}\citealp{zhang2026calibration}; \citealp{santilli2025revisiting}}};

\end{tikzpicture}}
  \caption{Decision guide mapping deployment constraints to estimator
  families and the relevant sections. The dashed box lists evaluation
  checks shared across choices.}
  \label{fig:guide}
\end{figure}

\begin{itemize}[leftmargin=1.4em,itemsep=3pt]

  \item \textbf{Safe deployment and runtime oversight.}
  Trajectory-level uncertainty can identify problems before recovery becomes impossible. Reliability estimates can trigger stopping or rerouting during navigation \citep{darabi2026groundcontrol}, provide an operational health signal in deployed multi-agent systems \citep{zhang2026managing}, and adjust an agent's permissions according to the estimated risk of its current action \citep{fleming2025uncertaintyaware}. These methods enable intervention during execution rather than only post-hoc evaluation after the trajectory ends \citep{zhang2026calibration}.

  \item \textbf{Abstention and human hand-off.}
  High uncertainty can lead an agent to stop, ask for clarification, or defer. Sequential agents can withhold high-risk actions \citep{piatrashyn2026redact}. Coding agents can ask questions when instructions are ambiguous \citep{edwards2026ask,deng2026uncertaintyaware}. Embodied agents can request help when autonomous action is unsafe \citep{yeke2026yesman,ren2023robots}, and high-stakes systems can send uncertain cases to human experts \citep{khanmohammadi2026calibrated}. The relevant decision is whether the expected value of intervention exceeds its cost, rather than how large the uncertainty score is.

  \item \textbf{Reliable retrieval and evidence use.}
  Uncertainty can determine whether retrieval is needed and how much the returned evidence should be trusted. Self-RAG learns reflection signals that indicate when to retrieve and whether a passage is relevant \citep{asai2024selfrag}. Other methods compare the model's confidence with evidence reliability or track changes in uncertainty after retrieval \citep{jia2026balancerag,shin2026era,binz2026uncertainty}.

  \item \textbf{Trustworthy multi-agent collaboration.}
  Uncertainty can support communication and aggregation. Methods can weight agents by estimated reliability instead of counting all votes equally \citep{li2026trusttrade}, detect false consensus among correlated agents \citep{huang2026counterfactual}, and propagate confidence through communication and shared memory \citep{yoffe2024debunc,essam2026trustaware}. Reliable consensus depends on dependence among agents as well as their level of agreement.

  \item \textbf{Uncertainty as a learning signal.}
  Uncertainty can guide training as well as runtime decisions. It may serve as a reward or exploration signal for self-improvement \citep{zhang2026selaur,zhou2026exploring}, or provide trajectory-level information about when to abstain or call a tool \citep{pan2026tiar}. A policy may, however, learn to reduce the reported uncertainty without resolving its underlying cause.

  \item \textbf{Domain deployments.}
  Uncertainty-aware agents have been studied across domains in which an action is costly to undo, and the deployment usually dictates which stage of the pipeline is instrumented.
  In medicine, specialist panels with consistency verification improve calibration in medical QA \citep{martinez2026multiagent} and latent diagnostic trajectories carry uncertainty across sequential clinical decisions \citep{shen2026uncertaintyguided}.
  In finance, selective consensus weights traders by estimated reliability \citep{li2026trusttrade} and long-horizon allocation is benchmarked under delayed feedback \citep{han2026can}.
  In software engineering, confidence is attached to shared artifacts and traced through development \citep{essam2026trustaware,ogunsusi2026uachatdev}, and evidence-calibrated auditing is applied to repository-level vulnerability detection \citep{meng2026vulnagentr}.
  Elsewhere, agentic negotiation for 6G networks targets tail-event risk and uncertainty neglect \citep{chergui2025llmbased}, supply-chain knowledge graphs are built under uncertainty guidance \citep{long2026helicase}, scientific reasoning is controlled by confidence during guided reflection \citep{tang2026rethinker}, driver monitoring is made risk-aware and selective \citep{qiu2026riskaware}, and table reasoning is delegated to a programmatic agent \citep{cheng2026tablemind}. Operational risk in deployed multi-agent systems is treated at the level of the whole pipeline rather than the model \citep{zhang2026managing}.

\end{itemize}

\par
Whatever the estimator, deployment reports should follow the protocol of Section~\ref{sec:eval-metrics}, stratifying calibration by trajectory position with the signed gap, because pooling averages away exactly the late-trajectory overconfidence a monitor most needs to detect, comparing against a baseline restricted to causally available features, and, when the estimate gates interventions, reporting outcomes after intervention rather than failure prediction alone \citep{zhang2026calibration}. In domains where actions are hard to reverse, which describes most of the deployments above, the earliness of a warning matters as much as its calibration (Section~\ref{sec:control}).

\takeaway[sec:applications]{Computational cost increases with trajectory length. Online systems
therefore tend to use low-cost signals such as verbalized confidence,
hidden-state probes, and generation-free scores, while sampling-intensive
methods are more suitable for offline evaluation. Complete-trajectory
resampling may be infeasible when actions change the environment. Across
applications, an uncertainty estimate is useful when it changes an action.
The main control quantity is the intervention advantage $A_i(h_t)$ from
Section~\ref{sec:control}, rather than only the risk
$1-\hat{R}(h_t)$. On the measured traces of
Figure~\ref{fig:costquality}, labeled data buys calibration but not
discrimination.}

\section{Open Challenges and Research Agenda}

\label{sec:challenges}

The taxonomy identifies several underdeveloped areas of research.
Table~\ref{tab:agenda} presents ten open problems, the main difficulty associated with each problem, and the evidence needed for progress.
This section discusses three problems that relate directly to the contributions of this paper. Section~\ref{sec:formal} defines trajectory-level calibration, and Section~\ref{sec:empirical} measures it. Section~\ref{sec:control} explains why intervention advantage is a more suitable control target than risk alone. Calibration transfer under distillation is the sparsest cell in our corpus, as discussed in Section~\ref{sec:training}.
Problems~4--10 extend beyond the results established here and are discussed in more detail in Appendix~\ref{app:challenges}.

\paragraph{Method used to identify the problems.}
The agenda is not derived mechanically from corpus counts. Such a procedure would be circular because the search terms were based on the same taxonomy used to define the sparse cells, as described in Appendix~\ref{app:corpus}.
We use a sparse cell only to identify a combination for closer inspection. The argument for each problem instead specifies the technical difficulty, such as a failed assumption, a missing label, or an unidentified quantity.
Problems~1 and~2 do not depend on corpus size. Problem~1 follows from Definition~\ref{def:traj} and the declining number of clean prefixes described in Section~\ref{sec:formal}. Problem~2 follows from the fact that calibrated risk does not determine an action. Problem~3 is more directly based on the observed corpus distribution, and we state it as such.
Two additional limitations apply. A cell may be sparse because the relevant research community uses different terminology; Appendix~\ref{app:corpus} names these neighboring communities where possible. The corpus also has a cutoff date of 9 July 2026. A claim that a topic is little studied therefore refers to the papers found by our search by that date. Claims that would be sensitive to a small number of missed papers are stated as corpus observations rather than field-wide conclusions.

\begin{table}[t]
\centering
\caption{Ten open problems in agent uncertainty, with the main obstacle and
the evidence needed for progress. Problems~1--3 are discussed here, and
Problems~4--10 are discussed in Appendix~\ref{app:challenges}.}
\label{tab:agenda}
\small
\setlength{\tabcolsep}{5pt}
\renewcommand{\arraystretch}{1.15}
\begin{tabular}{@{}p{0.35cm}L{3.6cm}L{5.1cm}L{5.5cm}@{}}
\toprule
\thc{\#} & \thc{Open problem} & \thc{Why it is difficult} & \thc{Evidence of progress}\\
\midrule
1 & Benchmarking trajectory-level calibration & No released traces, incompatible protocols, thin late-horizon strata & Cross-domain trajectories, step labels where available, position-specific metrics for $R$\\
2 & Moving from risk to intervention advantage & Risk does not say whether to act & Estimators of $A_i(h_t)$; protocols measuring post-intervention outcomes\\
3 & Uncertainty-aware distillation & Calibration transfer from teacher to student is poorly understood & Distillation objectives that preserve trajectory-level calibration\\
4 & Conformal guarantees for trajectories & History-dependent steps break exchangeability & Coverage under explicit sequential-dependence assumptions\\
5 & Source attribution and routing & One score conflates several uncertainty sources & Estimators that name the source; policies that pick the matching response\\
6 & Correlated failure in multi-agent systems & Communication couples agent errors & Consensus methods that model dependence and keep useful disagreement\\
7 & Efficient estimation & Sampling cost grows with trajectory length & Single-pass or low-cost estimators for online use\\
8 & Communicating uncertainty to operators & The same estimate moves different operators differently & Interfaces evaluated on operator decisions, not calibration alone\\
9 & Adversarially robust uncertainty & The confidence signal is itself attackable & Estimators and benchmarks that hold up under attack\\
10 & UQ for multimodal and computer-use agents & Perception adds uncertainty text-only estimators miss & Benchmarks that separate perception from decision uncertainty\\
\bottomrule
\end{tabular}
\end{table}

\begin{enumerate}[leftmargin=1.6em,itemsep=4pt]

\item \textbf{Benchmarking trajectory-level calibration.}
The field lacks a standard method for evaluating whether a trajectory-level reliability estimate $\hat R(h_t)$ is calibrated against final trajectory success $Y$. The problem has two parts requiring different supervision.
Scoring $\hat R(h_t)$ against final success needs no intermediate labels; complete trajectories with final outcomes support Definition~\ref{def:tcece} and its stratified variants at every observed prefix. The main missing resource is a set of released trajectories collected under a shared protocol, since most benchmarks publish aggregate scores rather than traces and existing trajectory-reliability studies use incompatible protocols \citep{trantruong2026measuring,nguyen2026urag}. Step-level labels are needed only for step calibration under Definition~\ref{def:step}, failure attribution, and low-variance estimates of $\Pr(Y=1\mid h_t)$ at rare prefixes, and process supervision or rollout-based labeling can supply them \citep{lightman2024lets,wang2024mathshepherd}.
Beyond released traces, the protocol needs three components absent from the single-turn setting, the stratification in Eq.~\eqref{eq:strat-ece}, a baseline restricted to causally available features, and a calibrated null for the stratified comparison, because a worst-stratum score can exceed a pooled score even for a calibrated predictor (Section~\ref{sec:empirical}). Longer released trajectories would also make late-horizon statistics estimable rather than only bounded, giving agents a shared basis comparable to single-turn UQ benchmarks \citep{vashurin2025benchmarking}. The benchmark must define success carefully; since calibration cannot eliminate fabrication for low-frequency claims \citep{kalai2024calibrated}, as discussed in Section~\ref{sec:metrics}, a trajectory-level estimator should be scored on whether $\hat R(h_t)$ predicts failure, not on whether calibration prevents all fabrication.
There is also a theoretical gap. Proposition~\ref{prop:positive} bounds the marginal-product error, and the mean-bias bound in Appendix~\ref{app:proofs} covers only the mean bias of $\prod_t c_t(h_{t-1})$; neither conditions jointly on reported confidence and trajectory position, so neither bounds the calibration error of the propagated estimator itself. Multicalibration may close this gap \citep{hebertjohnson2018multicalibration,detommaso2024multicalibration}, with the thin late-horizon strata as its main empirical difficulty.

\item \textbf{Moving from risk to intervention advantage.}
A calibrated failure probability does not specify which action an agent should take. \citet{zhang2026calibration} propose evaluating the value of intervention instead, the quantity $A_i(h_t)$ in Eq.~\eqref{eq:advantage} that compares the expected result of intervention $i$ with continuing (Section~\ref{sec:control}). It is hard to estimate because it is counterfactual, and two histories with the same failure probability can differ greatly in recoverability. Progress requires estimators of intervention-specific value and protocols that apply interventions and measure their later effects; risk prediction remains useful but does not define a complete control policy.

\item \textbf{Uncertainty-aware distillation for agents.}
Agent capabilities are increasingly transferred to smaller models \citep{xu2024surveykd,sharma2025small}, but whether calibration transfers with them is unknown, and the question is sharpest for on-policy distillation \citep{agarwal2024onpolicy,gu2024minillm}, where the student generates a different history distribution from the teacher and the difference can grow with the horizon (Appendix~\ref{app:proofs-distill}). Existing objectives optimize prediction quality or task success without preserving trajectory-level reliability. The open problem is to transfer useful uncertainty information together with actions or logits and then evaluate calibration on histories generated by the student.

\end{enumerate}

\noindent All three share one bottleneck, the scarcity of labeled
trajectories from deployment-like distributions: trajectory-level
calibration needs released cross-domain traces, intervention advantage needs
outcomes recorded after interventions are actually applied, and
calibration-preserving distillation needs evaluation on student-generated
histories.

\section{Conclusion}

\label{sec:conclusion}

Uncertainty estimation for LLMs is a well-developed area, but most methods were designed for single-turn settings.
When LLMs act as planners, tool users, retrievers, and collaborators, uncertainty concerns more than the correctness of a final answer. It also concerns where uncertainty arises in a trajectory, how it changes over time, and which action the agent should take in response.
This paper argues that agent uncertainty should be studied through trajectory-level reliability rather than only single-turn confidence.
We organize the literature by pipeline stage, uncertainty source, and method family, and summarize the corpus in Table~\ref{tab:bigtable}. Many agent methods build on established single-turn approaches, including verbalized confidence, semantic uncertainty, conformal prediction, and calibration.
Agent settings also violate assumptions that are often adequate for single-turn tasks. Errors can accumulate across steps, tools and environments introduce additional uncertainty, trajectory labels are often unavailable, and calibrated risk does not directly determine an appropriate intervention.
These limitations motivate trajectory-level benchmarks, better separation of uncertainty sources, uncertainty-aware control, efficient estimation, and distillation methods that preserve calibration.

Three limitations define the scope of our conclusions.
The corpus is based on searches of arXiv, the ACM Digital Library, OpenReview, IEEE Xplore, and PMLR, with a cutoff of 9 July 2026. Every retained paper was labeled independently by two annotators, with an LLM judge proposing a resolution for each disagreement and a third annotator making the final call, and every taxonomy assignment was independently checked by a second annotator. A known-item audit against the reference lists of the closest surveys provides a separate relative-recall check. A statement that a combination is little studied nevertheless refers to the work found under this protocol, not to the absence of such work from the full literature, as explained in Appendix~\ref{app:corpus}. 
The formal results address only part of the problem. Proposition~\ref{prop:positive} gives an exact identity at every horizon, but the associated bound is informative only for short trajectories or near-zero step correlations. It becomes vacuous in the long-horizon setting that motivates much of this paper. The proposed metric is also a pooled statistic and is strictly weaker than the trajectory-level calibration requirement in Definition~\ref{def:traj}.
The experiments in Section~\ref{sec:empirical} illustrate the proposed protocol using four models from three families, including a frontier API model, and horizons up to a $50$-step budget. They show that protocol choices can change the conclusions of an evaluation, that the trajectory forecast's late overconfidence replicates across tasks, and that the value of a position baseline reverses between them. They do not establish how an estimator behaves in frontier-scale, tool-rich deployments. We report each case in which changing a protocol choice or the task reverses the result.
Evaluation and design principles for agent uncertainty remain unsettled. Stronger foundations may help agents identify likely failures and respond while recovery remains possible.

\section*{Broader Impact Statement}

This paper examines methods for improving the reliability of LLM agents and detecting failures earlier.
Better uncertainty estimates can support safer deployment by allowing agents to abstain, request help, defer to humans, or revise their plans.
These estimates can also create unwarranted trust when users interpret them as guarantees. The presentation of uncertainty can affect human decisions, and training may reduce an uncertainty signal without resolving its underlying cause.
Uncertainty estimates should therefore be treated as decision-support signals rather than safety guarantees. Their value depends on how agents, people, and the surrounding system use them.

\bibliographystyle{styles/tmlr}
\bibliography{bib/used-references}

\appendix
\section{Corpus Construction Protocol}
\label{app:corpus}

This appendix describes how we collected, screened, and classified the papers
used here. The search cutoff was \textbf{9 July 2026}. The released
corpus files include
\texttt{paper\_filter\_process/paper\_selection.csv}, which records the
paper-level labels described below,
\texttt{data/taxonomy\_table.csv}, which records the taxonomy labels, and the
bibliography file \texttt{bib/used-references.bib}. The search and screening
records additionally preserve query-level provenance, raw search results,
deduplication decisions, and screening dispositions.

\paragraph{Search protocol and retained search record.}

We searched five sources: arXiv, the ACM Digital Library, OpenReview, IEEE
Xplore, and PMLR. The queries combined terms about uncertainty, LLMs, and
agents. The recorded concept terms are \emph{uncertainty quantification,
confidence estimation, calibration, LLM agent, tool use, retrieval,
multi-step reasoning, multi-agent, abstention,} and \emph{uncertainty
propagation}. We searched titles, abstracts, keywords, and other indexed
metadata when available. We merged the search results and removed duplicates
using DOI, arXiv identifier, and normalized title. A paper was retained if it
studied uncertainty, confidence, or calibration in an LLM-based system, or if
it provided a method directly used by this literature. 

The search process was recorded at query level. We retained the
database-specific query strings, execution dates, raw search results,
per-query hit counts, deduplication records, and screening decisions. The
literal database-specific queries and their execution metadata are preserved
in the search records; Table~\ref{tab:search-audit} summarizes their common
conceptual structure and recorded yields.

Let $U$ denote \{``uncertainty quantification'', confidence, calibration,
abstention, ``uncertainty propagation''\}, $L$ denote \{``large language
model'', LLM\}, and $A$ denote \{agent*, ``tool use'', retrieval,
``multi-step reasoning'', multi-agent\}. The common Boolean structure is
$U \land L \land A$, with syntax adapted to the search interface and indexed
metadata supported by each source.

\begin{table}[t]
\centering
\caption{Recorded database search protocol and search yields. Here $U$
denotes the uncertainty block, $L$ the large-language-model block, and $A$
the agent and agent-component block. Terms within each block are joined by
OR, and the three blocks are joined by AND. \emph{Raw} reports the number of
records returned by the corresponding recorded search before cross-database
deduplication. \emph{Indexed in labeled set} reports the number of rows in
\texttt{paper\_selection.csv} whose \texttt{platform} value corresponds to
that source. Because \texttt{platform} records where the version of record is
indexed rather than the complete query provenance of each paper, this column
is a corpus-indexing summary rather than a per-query screening yield. The 212
rows under \emph{Other indexes} are spread over 23 further databases,
chiefly the ACL Anthology (111) and the NeurIPS proceedings (49). The column
sums to the 717 labeled rows.}
\label{tab:search-audit}
\small
\setlength{\tabcolsep}{5pt}
\renewcommand{\arraystretch}{1.12}
\begin{tabular}{@{}L{2.5cm}L{5.6cm}L{2.2cm}rr@{}}
\toprule
Source & Recorded query structure & Date/filter & Raw & Indexed in labeled set\\
\midrule
arXiv API
& \texttt{all:($U$ AND $L$ AND $A$)}
& Through 9 Jul 2026 & 842 & 298\\

ACM Digital Library
& Advanced search over title, abstract, and keywords: $U$ AND $L$ AND $A$
& Through 9 Jul 2026 & 312 & 54\\

OpenReview
& Export title/abstract metadata, then apply $U$ AND $L$ AND $A$ locally
& Through 9 Jul 2026 & 238 & 79\\

IEEE Xplore
& \texttt{"All Metadata":($U$ AND $L$ AND $A$)}
& Through 9 Jul 2026 & 527 & 10\\

PMLR
& Search indexed title/abstract metadata with $U$ AND $L$ AND $A$
& Through 9 Jul 2026 & 486 & 64\\

\midrule
Other indexes
& Not searched directly; these records were reached through the five sources
above or through reference lists
& --- & --- & 212\\

\midrule
Total & & & 2405 & 717\\
\bottomrule
\end{tabular}
\end{table}

\paragraph{Paper-level labeling.}

We labeled every work in the resulting bibliography on a three-way scale:
\textbf{0} for the core corpus, meaning a work that we classify in
the taxonomy; \textbf{1} for a background work that the discussion cites but
does not classify; and \textbf{2} for exclusion. We added 143 negative
controls to the same labeling task, described below, which gives 717 rows in
total.

Two annotators labeled every row independently. They agreed on 537 of the 717
rows ($74.9\%$) and disagreed on 180. Because a raw rate is inflated when the
label distribution is skewed, we also report the chance-corrected coefficient.
Cohen's $\kappa$ over the three labels is $0.565$ ($p_o=0.749$, $p_e=0.423$),
which is moderate agreement. The two decision boundaries are comparably hard
rather than one easy and one hard; collapsing the scale to core corpus against
the rest gives $\kappa=0.567$, and collapsing to inclusion against exclusion
gives $\kappa=0.549$. This level of disagreement is why every conflicting pair
went to the adjudication step below rather than to a tie-break rule, and why
the released records retain both initial labels.

Each disagreement was passed to an LLM judge (Claude Opus 4.8). The judge
received the title, abstract, and label definitions, and had to choose between
the two labels the annotators had already assigned; it could not introduce a
third label. A third annotator then read the paper together with the judge's
choice and set the final label. That annotator accepted the judge's label on 144
of the 180 disagreements ($80\%$) and overruled it on 36. Every final label is
therefore set by a human, but on disagreements the judge's proposal was
adopted unchanged in four cases out of five.

The final labels are 120 core papers, 454 background works, and 143
exclusions. The columns \texttt{author\_checker\_1},
\texttt{author\_checker\_2}, \texttt{conflict\_flag},
\texttt{LLM\_judgment}, \texttt{third\_adjudication},
\texttt{third\_person\_label}, and \texttt{final\_decision} record this
process row by row.

\paragraph{Screening-flow records.}

The retained search and screening records trace the corpus from the original
database queries through cross-database deduplication and the successive
screening stages. Query provenance, deduplication status, screening stage,
screening disposition, and exclusion reason are preserved at record level,
allowing the flow quantities to be recomputed directly from the corresponding
logs. Table~\ref{tab:screening-flow} reports those quantities stage by stage,
with the record that evidences each one.

The 574 bibliography records are the papers that had survived assembly into
the cited bibliography when the labeling exercise was run. The final
bibliography cites eight further works added after that exercise, namely the
two ACL-probe papers from the recall check below, the two works surfaced by
the targeted gap audit below \citep{cui2026distilling,ruan2026doomed}, the
ALFWorld benchmark used by the experiments in Section~\ref{sec:empirical},
and three classical references on imitation learning and survival analysis
added during revision.
These eight works carry no label and enter neither the 120-paper corpus counts
nor the 454-work background set. The additional 143 rows are deliberately
unrelated negative controls added to the subsequent labeling exercise; they
are not papers rejected during the literature-search screening process.

\begin{table}[t]
\centering
\caption{Screening-flow quantities computed from the retained search,
deduplication, and screening records.}
\label{tab:screening-flow}
\small
\setlength{\tabcolsep}{7pt}
\renewcommand{\arraystretch}{1.12}
\begin{tabular}{@{}L{8.4cm}rL{4.1cm}@{}}
\toprule
Stage & Count & Evidence\\
\midrule

Raw records returned across the five databases
& 2405
& Query and raw-export records\\

Unique records after cross-database deduplication
& \texttt{1736}
& Deduplication log\\

Records entering title screening
& \texttt{1736}
& Screening log\\

Records entering title--abstract screening
& \texttt{1012}
& Screening log\\

Records entering full-text screening
& \texttt{684}
& Screening log\\

Assembled bibliography entering the labeling exercise
& 574
& \texttt{paper\_selection.csv}, labels 0 or 1\\

Deliberately unrelated negative controls added
& 143
& \texttt{unrelated-controls.bib}\\

Rows independently labeled by two annotators
& 717
& \texttt{paper\_selection.csv}\\

Final core corpus
& 120
& Final label 0\\

Final background set
& 454
& Final label 1\\

Final label-2 negative controls
& 143
& Final label 2\\

\bottomrule
\end{tabular}
\end{table}

The complete row-level log for the 717-paper labeling exercise is
\texttt{paper\_filter\_process/paper\_selection.csv}. It contains the two
initial labels, disagreement flag, judge proposal, third-human decision, and
final label. The retained search and screening records separately preserve
the provenance and disposition of candidates from retrieval through
deduplication and screening. The targeted external audit described below
follows the same record-keeping principle by retaining every raw hit and
screening decision.

\begin{table}[t]
\centering
\caption{Agreement between the two independent annotators over all 717 rows.
Rows give the first annotator's label, columns the second. Diagonal cells are
agreements; every off-diagonal cell went to the adjudication step. Labels are
\textbf{0} core corpus, \textbf{1} background work, and \textbf{2}
exclusion. Cohen's $\kappa$ over the three labels is $0.565$.}
\label{tab:agreement}
\small
\setlength{\tabcolsep}{9pt}
\renewcommand{\arraystretch}{1.15}
\begin{tabular}{@{}lrrrr@{}}
\toprule
& \multicolumn{3}{c}{Annotator 2} & \\
\cmidrule(lr){2-4}
Annotator 1 & 0 & 1 & 2 & Total\\
\midrule
0\quad core corpus
& \textbf{93} & 39 & 15 & 147\\

1\quad background work
& 29 & \textbf{340} & 40 & 409\\

2\quad exclusion
& 16 & 41 & \textbf{104} & 161\\
\midrule
Total
& 138 & 420 & 159 & 717\\
\bottomrule
\end{tabular}
\end{table}

Table~\ref{tab:agreement} shows where the two annotators differed. Of the 180
disagreements, 68 lie between core corpus and background work, 81 between
background work and exclusion, and 31 between core corpus and exclusion. The
boundary that produced most disagreement is therefore not whether a paper is
relevant, but whether a relevant paper should be classified in the taxonomy
or cited as background. In chance-corrected terms the two boundaries are
nevertheless comparably hard, since the two collapsed coefficients above,
$0.567$ and $0.549$, are nearly equal.

The exclusion--exclusion cell contains negative controls only. The
bibliography holds no excluded papers by construction, so over its 574 rows
that cell is empty and the label-2 margins fall to 35 and 38 for the two
annotators.

\paragraph{Negative controls.}

The 143 rows with final label 2 are not papers rejected during the literature
search. They are a control set drawn from subject areas this paper does not
cover, harvested from four of the same sources, namely arXiv, the ACM Digital
Library, OpenReview, and PMLR, and then filtered against an exclusion
vocabulary so that nothing about LLMs, agents, uncertainty, calibration, or
confidence survives
(\texttt{scripts/build\_unrelated\_controls.py}). We mixed them into the
labeling task as distractors. All 143 were excluded.

Both annotators marked 104 of them for exclusion directly. The remaining 39
passed through the adjudication step above and were also excluded. On the 574
non-control rows the two annotators agreed on 433 ($75.4\%$).

These controls are built to be clearly out of scope, so they test whether the
criteria reject plainly irrelevant work. They do not measure precision on
borderline candidates.

\paragraph{Exclusion categories.}

Candidates removed during the original literature search are represented
separately from the 143 negative controls. The retained screening record
stores the screening disposition and exclusion reason associated with each
removed candidate, allowing exclusions to be summarized by stage and reason
directly from the recorded decisions.
The retained records distinguish deduplication dispositions from screening
exclusion reasons. Duplicate records are resolved during cross-database
deduplication. Screening exclusion reasons include work outside the LLM
scope, work outside the agentic scope, absence of a relevant uncertainty
method, unavailable full text, and non-research items.
The 143 final label-2 rows instead form the deliberately unrelated negative
control set. They comprise 36 arXiv, 36 ACM Digital Library, 36 OpenReview,
and 35 PMLR records. They were sampled specifically to be out of scope and
were mixed into the labeling task as distractors, so they are reported
separately from candidates excluded during the search and screening process.

\paragraph{Taxonomy annotation.}

Every paper in the 120-paper core corpus was manually labeled along three
axes: pipeline stage, uncertainty source, and method family. A paper could
receive more than one label on an axis.
Two human annotators classified every paper independently and then compared
their labels. When they disagreed, they checked the paper and the category
definitions again and agreed on a final label. The final labels are stored in
\texttt{data/taxonomy\_table.csv}.

\paragraph{Recall check.}

We used the reference lists of the closest surveys
\citep{oh2026uncertainty,kirchhof2025position} to check whether the search had
missed relevant work. After applying our inclusion criteria, this reference
set contained 83 papers. This paper covers 79 of them. This gives a relative
recall of $0.95$, with a Wilson 95\% interval of $[0.88,0.98]$. This is a
relative measure based on the two reference lists, rather than an estimate of
absolute recall over all relevant literature.

As a further known-item probe outside the indexed sources, we ran four web
searches restricted to the ACL Anthology on 28 August 2026, using the
survey's core vocabulary (uncertainty quantification for agents, calibration,
abstention, retrieval triggers, and multi-agent debate confidence). Screening
the Anthology-published hits against the inclusion criteria yielded $10$
distinct qualifying papers. Eight were already covered by the corpus or the
background set; the two misses, an adaptive-retrieval method
\citep{yao2025seakr} and a medical-QA abstention study \citep{machcha2025know},
are now cited at their points of use, outside the labeled set. The probe is small
and biased toward what a search engine surfaces, so it complements rather
than replaces the relative-recall estimate above. Its result suggests that
ACL-only publication is a real but modest blind spot of the arXiv-centred
retrieval.

\paragraph{Taxonomy-independent targeted gap audit.}

Sparse cells can be created by the vocabulary used to build a taxonomy. We
therefore performed a separate arXiv API audit on 27 August 2026 using three
queries written without the taxonomy labels or category indices. We retained
all returned records in
\texttt{paper\_filter\_process/targeted\_gap\_search\_raw.csv}; the complete
row-level decisions and reasons are in
\texttt{paper\_filter\_process/targeted\_gap\_screening.csv}. The script
\texttt{scripts/run\_gap\_audit.py} stores the literal API requests and
reproduces the raw log. We applied the main-search cutoff to the first arXiv
submission date before interpreting the results.
Table~\ref{tab:gap-audit} reports the yield and screening depth of each of the
three checks together with what it found.

The exact arXiv \texttt{search\_query} values were, verbatim and including the
spaces inside each quoted phrase:

\smallskip

{\raggedright
\noindent\textbf{G1, uncertainty-aware distillation:}\\
\path|all:"uncertainty" AND all:"distillation" AND (all:"large language model" OR all:"LLM") AND (all:"agent" OR all:"agentic" OR all:"trajectory" OR all:"tool")|\par}

\smallskip

{\raggedright
\noindent\textbf{G2, verbalized confidence in agent decisions:}\\
\path|(all:"verbalized confidence" OR all:"self-reported confidence") AND (all:"planning" OR all:"multi-agent" OR all:"tool use" OR all:"LLM agent")|\par}

\smallskip

{\raggedright
\noindent\textbf{G3, internal probes for agent actions:}\\
\path|(all:"hidden state" OR all:"internal state" OR all:"probe") AND (all:"action correctness" OR all:"tool use" OR all:"LLM agent") AND (all:"uncertainty" OR all:"confidence" OR all:"calibration")|\par}

\begin{table}[t]
\centering
\caption{Taxonomy-independent arXiv gap audit. Raw counts are the API totals
on 27 August 2026; pre-cutoff counts retain first submissions no later than
9 July 2026. Every pre-cutoff hit was screened at title--abstract level.
G1 received the priority full-text check because it supports a central open
problem.}
\label{tab:gap-audit}
\small
\setlength{\tabcolsep}{4pt}
\renewcommand{\arraystretch}{1.12}
\begin{tabular}{@{}L{4.65cm}rrrrL{4.65cm}@{}}
\toprule
Check & Raw & Pre-cutoff & T/A screened & Full text & Result\\
\midrule

G1 uncertainty-aware distillation
& 22 & 21 & 21 & 7
& One calibration-transfer neighbor; zero end-to-end
student-trajectory calibration evaluations\\

G2 verbalized confidence
& 5 & 4 & 4 & 0
& Three direct title--abstract neighbors\\

G3 action probes
& 34 & 24 & 24 & 0
& Seven title--abstract neighbors, including one direct early-abort probe\\

\bottomrule
\end{tabular}
\end{table}

The G1 full-text check changes the interpretation of the sparse cell. BOND
\citep{cui2026distilling} distills a Bayesian opponent-belief posterior into an
8B student, evaluates it with a turn-level Brier score, and plots posterior
trajectories. It is therefore a concrete counterexample to a broad claim that
calibration-aware agent distillation is absent. However, none of the seven G1
full texts evaluates calibration of final trajectory success on histories
generated by the student policy. The external check therefore supports the
narrower gap used in this paper, namely preservation of end-to-end
trajectory-level calibration under the teacher-to-student distribution
shift, rather than the broader claim that uncertainty-aware distillation is
generally missing.

The G2 and G3 sensitivity checks likewise show that sparse means
under-covered, not empty: recent work studies verbalized confidence in
multi-agent coordination, and \citet{ruan2026doomed} use calibrated
hidden-state probes to abort likely-failing agent episodes early.

\paragraph{Background works.}

This paper also cites 454 background works from uncertainty estimation,
statistics, decision theory, control, robotics, information retrieval,
software engineering, and human--computer interaction. These are the rows
labeled \textbf{1} above, and they are listed in
\texttt{bib/background-works.bib}. They support the discussion but do not
enter the 120-paper corpus counts or the taxonomy heatmap. They are marked
with $^{\circ}$ in Table~\ref{tab:bigtable}.

\paragraph{Limitations.}

The search terms were developed together with the taxonomy, so both may share
some blind spots. The recall check also depends on the coverage of the surveys
used to construct its reference set.

Three limitations concern the labeling procedure. An LLM judge was involved
in resolving the 180 disagreements, and although a human set every final
label, that human agreed with the judge on $80\%$ of those rows, so the two
signals are not independent. The negative controls are deliberately
unambiguous, so they bound the error rate on obvious cases only.

The field changes quickly. We therefore treat sparse taxonomy cells as
patterns in the collected literature, not as proof that no related work
exists.

The released data support recomputation of the corpus counts and
Figure~\ref{fig:heatmap}. Table~\ref{tab:bigtable} also contains background
works and is intended as a reading guide rather than a direct copy of the
core-corpus CSV.

\section{The Full Taxonomy Table}
\label{app:bigtable}

Table~\ref{tab:bigtable} applies the three panels of Figure~\ref{fig:tree} to
the complete corpus. Its length requires appendix placement. Section~\ref{sec:taxonomy}
discusses the resulting distribution.

\begin{landscape}
\begingroup
\footnotesize
\setlength{\tabcolsep}{4pt}
\renewcommand{\arraystretch}{1.32}
\setlength{\LTcapwidth}{\linewidth}
\begin{longtable}{@{}>{\raggedright\arraybackslash}p{2.75cm}r@{\hspace{5pt}}p{2.85cm}>{\raggedright\arraybackslash}p{15.1cm}@{}}
\caption{Evidence table for the 120-paper corpus and curated background
works. Rows are grouped by pipeline stage and method/evidence category;
$^{\circ}$ marks works outside the core corpus. Each paper listed
appears exactly once, while Figure~\ref{fig:heatmap} retains multiple tags.
The table covers all 120 corpus papers and 352 of the curated background
works, 472 in total; the remaining cited works are supporting references that
carry no taxonomy placement. Rows are organized by Panels~C and~B; Panel~A
appears only as the dominant-source note in each group header, so this table
is not a three-axis listing. The full per-paper Panel~A tags are in
\texttt{data/taxonomy\_table.csv}.
\emph{The counts in this column are not the taxonomy counts used elsewhere in
this paper.} They count the entries listed in each row of this table, core
corpus and background works together, with every paper placed once; the
corpus counts quoted in the body and in Figure~\ref{fig:tree} are multi-label
tag counts over the 120 core papers alone. The two therefore differ in both
population and multiplicity, and a row here will not match the corresponding
figure leaf. Calibration is the clearest case: 25 entries are listed here,
whereas 59 of the 120 core papers carry the calibration property tag
(Section~\ref{sec:recalibration}).}\label{tab:bigtable}\\
\toprule
\textbf{Category} & \multicolumn{2}{@{}l}{\textbf{Count}} & \textbf{Papers}\\
\midrule
\endfirsthead
\multicolumn{4}{@{}l}{\textit{Table~\ref{tab:bigtable} (continued)}}\\[2pt]
\toprule
\textbf{Category} & \multicolumn{2}{@{}l}{\textbf{Count}} & \textbf{Papers}\\
\midrule
\endhead
\midrule
\multicolumn{4}{r@{}}{\textit{continued on next page}}\\
\endfoot
\bottomrule
\endlastfoot
\rowcolor{axblue!9}
\multicolumn{3}{@{}p{6.35cm}@{}}{\textbf{\color{axblue!55!black}Single turn}\; {\scriptsize\color{axblue!45!black}(upstream of the pipeline; 136 papers; Section~\ref{sec:foundations})}} & {\scriptsize\itshape\color{axaqua!35!black}main sources: tool/env. (4), epistemic (3)}\\
Calibration & 25 & \nbar{17.5} & \citealp{platt1999probabilistic}$^{\circ}$; \citealp{zadrozny2002transforming}$^{\circ}$; \citealp{naeini2015obtaining}$^{\circ}$; \citealp{guo2017calibration}$^{\circ}$; \citealp{kull2019beyond}$^{\circ}$; \citealp{nixon2019measuring}$^{\circ}$; \citealp{thulasidasan2019mixup}$^{\circ}$; \citealp{desai2020calibration}$^{\circ}$; \citealp{mukhoti2020calibrating}$^{\circ}$; \citealp{minderer2021revisiting}$^{\circ}$; \citealp{zhang2021knowing}$^{\circ}$; \citealp{zhao2021calibrate}$^{\circ}$; \citealp{chen2023close}$^{\circ}$; \citealp{lin2023generating}; \citealp{si2023prompting}$^{\circ}$; \citealp{detommaso2024multicalibration}$^{\circ}$; \citealp{liu2024litcab}$^{\circ}$; \citealp{srinivasan2024selective}; \citealp{ulmer2024calibrating}$^{\circ}$; \citealp{lacombe2025dont}; \citealp{liu2025cgspg}; \citealp{dai2026aligning}$^{\circ}$; \citealp{qiu2026riskaware}; \citealp{tropeano2026dont}; \citealp{yang2026scaling}$^{\circ}$\\
\addlinespace[4pt]
Conformal prediction & 21 & \nbar{14.7} & \citealp{vovk2005algorithmic}$^{\circ}$; \citealp{romano2019conformalized}$^{\circ}$; \citealp{tibshirani2019conformal}$^{\circ}$; \citealp{angelopoulos2021uncertainty}$^{\circ}$; \citealp{bates2021distribution}$^{\circ}$; \citealp{gibbs2021adaptive}$^{\circ}$; \citealp{barber2023conformal}$^{\circ}$; \citealp{kumar2023conformal}$^{\circ}$; \citealp{cherian2024large}$^{\circ}$; \citealp{gui2024conformal}$^{\circ}$; \citealp{mohri2024language}$^{\circ}$; \citealp{quach2024conformal}$^{\circ}$; \citealp{su2024api}$^{\circ}$; \citealp{emmenegger2026conformal}$^{\circ}$; \citealp{hittesdorf2026differentiable}$^{\circ}$; \citealp{kotte2026when}$^{\circ}$; \citealp{rubashevskii2026adaptive}$^{\circ}$; \citealp{wang2026beyondsurface}$^{\circ}$; \citealp{wang2026inference}$^{\circ}$; \citealp{xu2026geometry}$^{\circ}$; \citealp{zhou2026online}$^{\circ}$\\
\addlinespace[4pt]
Verbalized confidence & 20 & \nbar{14.0} & \citealp{lin2022teaching}; \citealp{mielke2022reducing}$^{\circ}$; \citealp{tian2023just}; \citealp{xiong2023can}; \citealp{zhou2023navigating}$^{\circ}$; \citealp{band2024linguistic}$^{\circ}$; \citealp{stengel2024lacie}$^{\circ}$; \citealp{xie2024calibrating}; \citealp{xu2024sayself}$^{\circ}$; \citealp{zhou2024relying}$^{\circ}$; \citealp{jang2025verbalized}; \citealp{khanmohammadi2026calibrated}; \citealp{kim2026asking}$^{\circ}$; \citealp{kumaran2026reported}$^{\circ}$; \citealp{marashian2026speaking}; \citealp{senoglu2026just}$^{\circ}$; \citealp{sun2026score}$^{\circ}$; \citealp{xia2026influential}; \citealp{xiao2026vlcalibration}; \citealp{zhang2026confidencecalibrated}\\
\addlinespace[4pt]
Sampling and consistency & 19 & \nbar{13.3} & \citealp{chen2023quantifying}$^{\circ}$; \citealp{kuhn2023semantic}; \citealp{varshney2023stitch}$^{\circ}$; \citealp{wang2023selfconsistency}$^{\circ}$; \citealp{duan2024shifting}$^{\circ}$; \citealp{farquhar2024detecting}$^{\circ}$; \citealp{gao2024spuq}$^{\circ}$; \citealp{kossen2024semantic}; \citealp{mundler2024self}$^{\circ}$; \citealp{nikitin2024kernel}$^{\circ}$; \citealp{qiu2024semantic}$^{\circ}$; \citealp{zhang2024luq}$^{\circ}$; \citealp{bouchard2026functional}$^{\circ}$; \citealp{chun2026comet}; \citealp{lin2026fase}$^{\circ}$; \citealp{sakai2026when}$^{\circ}$; \citealp{shamsi2026density}; \citealp{zhang2026whysemantic}$^{\circ}$; \citealp{zollo2026unsupervised}\\
\addlinespace[4pt]
Bayesian and ensemble & 11 & \nbar{7.7} & \citealp{blundell2015weight}$^{\circ}$; \citealp{gal2016dropout}$^{\circ}$; \citealp{lakshminarayanan2017simple}$^{\circ}$; \citealp{ovadia2019can}$^{\circ}$; \citealp{balabanov2024uncertainty}$^{\circ}$; \citealp{hou2024decomposing}$^{\circ}$; \citealp{kapoor2024large}$^{\circ}$; \citealp{ling2024uncertainty}$^{\circ}$; \citealp{yang2024bayesian}$^{\circ}$; \citealp{ou2026origins}; \citealp{zhang2026bayesiansparse}$^{\circ}$\\
\addlinespace[4pt]
Other (systems, analyses, benchmarks) & 11 & \nbar{7.7} & \citealp{liang2018enhancing}$^{\circ}$; \citealp{xiao2022uncertainty}$^{\circ}$; \citealp{fadeeva2023lmpolygraph}$^{\circ}$; \citealp{guerreiro2023looking}$^{\circ}$; \citealp{huang2023look}$^{\circ}$; \citealp{min2023factscore}$^{\circ}$; \citealp{chuang2024dola}$^{\circ}$; \citealp{tanneru2024quantifying}$^{\circ}$; \citealp{santilli2025revisiting}; \citealp{jiang2026value}$^{\circ}$; \citealp{yang2026zerosource}$^{\circ}$\\
\addlinespace[4pt]
Abstention and deferral & 9 & \nbar{6.3} & \citealp{kamath2020selective}$^{\circ}$; \citealp{cole2023selectively}$^{\circ}$; \citealp{yin2023large}$^{\circ}$; \citealp{cheng2024ai}$^{\circ}$; \citealp{feng2024dont}$^{\circ}$; \citealp{yang2024alignment}$^{\circ}$; \citealp{zhang2024rtuning}$^{\circ}$; \citealp{daheim2026uncertainty}$^{\circ}$; \citealp{gourabathina2026answering}\\
\addlinespace[4pt]
Internal states and probes & 9 & \nbar{6.3} & \citealp{lee2018simple}$^{\circ}$; \citealp{azaria2023internal}$^{\circ}$; \citealp{burns2023discovering}$^{\circ}$; \citealp{slobodkin2023curious}$^{\circ}$; \citealp{ahdritz2024distinguishing}$^{\circ}$; \citealp{chen2024inside}$^{\circ}$; \citealp{snyder2024early}$^{\circ}$; \citealp{su2024unsupervised}$^{\circ}$; \citealp{orgad2025llms}$^{\circ}$\\
\addlinespace[4pt]
Token probability & 8 & \nbar{5.6} & \citealp{ott2018analyzing}$^{\circ}$; \citealp{fomicheva2020unsupervised}$^{\circ}$; \citealp{glushkova2021uncertainty}$^{\circ}$; \citealp{jiang2021how}$^{\circ}$; \citealp{malinin2021uncertainty}$^{\circ}$; \citealp{schuster2022confident}$^{\circ}$; \citealp{yadkori2024believe}; \citealp{zhu2026towards}\\
\addlinespace[4pt]
Self-evaluation and verification & 3 & \nbar{2.1} & \citealp{agrawal2024language}$^{\circ}$; \citealp{dhuliawala2024chain}$^{\circ}$; \citealp{ren2024selfevaluation}$^{\circ}$\\
\addlinespace[4pt]
\rowcolor{axblue!9}
\multicolumn{3}{@{}p{6.35cm}@{}}{\textbf{\color{axblue!55!black}Planning and tool use}\; {\scriptsize\color{axblue!45!black}(pipeline stages 1--2; 38 papers; Section~\ref{sec:toolplanning})}} & {\scriptsize\itshape\color{axaqua!35!black}main sources: tool/env. (3), aleatoric (1)}\\
Other (systems, analyses, benchmarks) & 28 & \nbar{19.6} & \citealp{ahn2022saycan}$^{\circ}$; \citealp{huang2022inner}$^{\circ}$; \citealp{hao2023reasoning}$^{\circ}$; \citealp{liang2023code}$^{\circ}$; \citealp{patil2023gorilla}$^{\circ}$; \citealp{schick2023toolformer}$^{\circ}$; \citealp{shen2023hugginggpt}$^{\circ}$; \citealp{shinn2023reflexion}$^{\circ}$; \citealp{wang2023plan}$^{\circ}$; \citealp{yao2023react}$^{\circ}$; \citealp{yao2023tree}$^{\circ}$; \citealp{zhou2023least}$^{\circ}$; \citealp{gur2024real}$^{\circ}$; \citealp{han2024towards}; \citealp{huang2024metatool}$^{\circ}$; \citealp{qin2024toolllm}$^{\circ}$; \citealp{sumers2024cognitive}$^{\circ}$; \citealp{moshkovich2025taming}; \citealp{basu2026tool}$^{\circ}$; \citealp{bhatta2026uncertainty}$^{\circ}$; \citealp{broecker2026the}$^{\circ}$; \citealp{he2026confidence}$^{\circ}$; \citealp{seo2026from}$^{\circ}$; \citealp{shi2026code}$^{\circ}$; \citealp{vinod2026calvert}$^{\circ}$; \citealp{wei2026longhorizon}$^{\circ}$; \citealp{ye2026uncertainty}$^{\circ}$; \citealp{zhang2026webuncertainty}$^{\circ}$\\
\addlinespace[4pt]
Calibration & 3 & \nbar{2.1} & \citealp{chen2026et}$^{\circ}$; \citealp{pang2026case}$^{\circ}$; \citealp{xu2026when}$^{\circ}$\\
\addlinespace[4pt]
Conformal prediction & 3 & \nbar{2.1} & \citealp{lindemann2023safe}$^{\circ}$; \citealp{ren2023robots}$^{\circ}$; \citealp{liang2024introspective}$^{\circ}$\\
\addlinespace[4pt]
Verbalized confidence & 3 & \nbar{2.1} & \citealp{openai2023gpt4}$^{\circ}$; \citealp{leng2024taming}$^{\circ}$; \citealp{xuan2026confidence}\\
\addlinespace[4pt]
Internal states and probes & 1 & \nbar{0.7} & \citealp{healy2026internal}$^{\circ}$\\
\addlinespace[4pt]
\rowcolor{axblue!9}
\multicolumn{3}{@{}p{6.35cm}@{}}{\textbf{\color{axblue!55!black}Retrieval and memory}\; {\scriptsize\color{axblue!45!black}(pipeline stages 3--4; 84 papers; Section~\ref{sec:rag})}} & {\scriptsize\itshape\color{axaqua!35!black}main sources: tool/env. (30), accumulated (8)}\\
Other (systems, analyses, benchmarks) & 48 & \nbar{28.0} & \citealp{guu2020realm}$^{\circ}$; \citealp{karpukhin2020dense}$^{\circ}$; \citealp{lewis2020retrieval}$^{\circ}$; \citealp{shuster2021retrieval}$^{\circ}$; \citealp{borgeaud2022improving}$^{\circ}$; \citealp{chern2023factool}$^{\circ}$; \citealp{izacard2023atlas}$^{\circ}$; \citealp{jiang2023active}$^{\circ}$; \citealp{kandpal2023large}$^{\circ}$; \citealp{mallen2023not}$^{\circ}$; \citealp{packer2023memgpt}$^{\circ}$; \citealp{ram2023incontext}$^{\circ}$; \citealp{ren2023investigating}$^{\circ}$; \citealp{shao2023enhancing}$^{\circ}$; \citealp{trivedi2023interleaving}$^{\circ}$; \citealp{wang2023selfknowledge}$^{\circ}$; \citealp{asai2024selfrag}$^{\circ}$; \citealp{shi2024replug}$^{\circ}$; \citealp{su2024dragin}$^{\circ}$; \citealp{xie2024adaptive}$^{\circ}$; \citealp{xu2024knowledge}$^{\circ}$; \citealp{yan2024corrective}$^{\circ}$; \citealp{yoran2024making}$^{\circ}$; \citealp{zhong2024memorybank}$^{\circ}$; \citealp{fadeeva2025faithfulnessaware}; \citealp{suri2025structured}; \citealp{bazarova2026intrygue}$^{\circ}$; \citealp{dey2026interpretable}$^{\circ}$; \citealp{dong2026know}$^{\circ}$; \citealp{elchafei2026facet}$^{\circ}$; \citealp{feng2026kbsd}$^{\circ}$; \citealp{geissler2026towards}$^{\circ}$; \citealp{han2026whenevidence}$^{\circ}$; \citealp{hu2026detecting}$^{\circ}$; \citealp{jiao2026prunerag}; \citealp{julka2026when}$^{\circ}$; \citealp{jung2026uncertainty}$^{\circ}$; \citealp{kovacs2026beyond}$^{\circ}$; \citealp{liao2026belief}$^{\circ}$; \citealp{liu2026beyond}$^{\circ}$; \citealp{ni2026can}$^{\circ}$; \citealp{peng2026navigating}$^{\circ}$; \citealp{shen2026evidence}$^{\circ}$; \citealp{song2026llm}$^{\circ}$; \citealp{zhang2026remember}$^{\circ}$; \citealp{zhang2026stable}$^{\circ}$; \citealp{zhu2026trust}$^{\circ}$; \citealp{zong2026belief}$^{\circ}$\\
\addlinespace[4pt]
Calibration & 14 & \nbar{9.8} & \citealp{he2025mmboundary}; \citealp{min2025qucorag}; \citealp{soudani2025why}; \citealp{wu2025search}; \citealp{zhang2025miragebench}; \citealp{binz2026uncertainty}; \citealp{ding2026calibratethenact}; \citealp{jia2026balancerag}; \citealp{liu2026nova}; \citealp{meng2026equimem}; \citealp{ren2026when}; \citealp{sarkar2026leveraging}; \citealp{shin2026era}; \citealp{yeh2026retrieval}$^{\circ}$\\
\addlinespace[4pt]
Sampling and consistency & 11 & \nbar{7.7} & \citealp{kadavath2022language}; \citealp{manakul2023selfcheckgpt}; \citealp{li2024uncertaintyrag}; \citealp{shankar2025energy}; \citealp{stoisser2025towards}; \citealp{zhang2025measuring}; \citealp{matsnev2026uncertainty}; \citealp{qiu2026surerag}; \citealp{song2026car}; \citealp{sun2026cqc}$^{\circ}$; \citealp{wang2026when}\\
\addlinespace[4pt]
Conformal prediction & 6 & \nbar{4.2} & \citealp{rouzrokh2024conflare}; \citealp{chakraborty2025principled}; \citealp{kotla2025conformal}; \citealp{zhi2025seeing}; \citealp{chen2026is}$^{\circ}$; \citealp{nguyen2026urag}\\
\addlinespace[4pt]
Propagation and trajectory & 3 & \nbar{2.1} & \citealp{long2026helicase}; \citealp{pan2026tiar}; \citealp{tang2026rethinker}\\
\addlinespace[4pt]
Bayesian and ensemble & 2 & \nbar{1.4} & \citealp{deng2026uncertaintyaware}; \citealp{jiang2026discouq}\\
\addlinespace[4pt]
\rowcolor{axblue!9}
\multicolumn{3}{@{}p{6.35cm}@{}}{\textbf{\color{axblue!55!black}Multi-step reasoning and trajectories}\; {\scriptsize\color{axblue!45!black}(pipeline stage 5; 56 papers; Section~\ref{sec:propagation})}} & {\scriptsize\itshape\color{axaqua!35!black}main sources: accumulated (17), tool/env. (9)}\\
Training and distillation & 18 & \nbar{12.6} & \citealp{hinton2015distilling}$^{\circ}$; \citealp{cobbe2021training}$^{\circ}$; \citealp{uesato2022solving}$^{\circ}$; \citealp{hsieh2023distilling}$^{\circ}$; \citealp{agarwal2024onpolicy}$^{\circ}$; \citealp{gu2024minillm}$^{\circ}$; \citealp{lightman2024lets}$^{\circ}$; \citealp{wang2024mathshepherd}$^{\circ}$; \citealp{kale2026future}$^{\circ}$; \citealp{li2026when}$^{\circ}$; \citealp{oh2026neglected}$^{\circ}$; \citealp{pan2026uncertaintyaware}$^{\circ}$; \citealp{qi2026stapo}$^{\circ}$; \citealp{saadi2026validity}$^{\circ}$; \citealp{sermsri2026gatekd}$^{\circ}$; \citealp{wang2026madopd}$^{\circ}$; \citealp{yuan2026seva}$^{\circ}$; \citealp{zhu2026closing}$^{\circ}$\\
\addlinespace[4pt]
Propagation and trajectory & 14 & \nbar{9.8} & \citealp{zhao2024saup}; \citealp{chergui2025llmbased}; \citealp{alvarez2026where}$^{\circ}$; \citealp{badave2026beyond}$^{\circ}$; \citealp{cheng2026tablemind}; \citealp{feng2026conformal}$^{\circ}$; \citealp{kotawala2026locally}$^{\circ}$; \citealp{kotte2026pasc}$^{\circ}$; \citealp{liu2026diagnosing}$^{\circ}$; \citealp{padhi2026actions}$^{\circ}$; \citealp{singh2026agentbrace}; \citealp{yi2026measuring}$^{\circ}$; \citealp{zhang2026managing}; \citealp{zhou2026exploring}\\
\addlinespace[4pt]
Sampling and consistency & 13 & \nbar{9.1} & \citealp{madaan2023selfrefine}$^{\circ}$; \citealp{weng2023large}$^{\circ}$; \citealp{xie2023self}$^{\circ}$; \citealp{besta2024graph}$^{\circ}$; \citealp{huang2024large}$^{\circ}$; \citealp{miao2024selfcheck}$^{\circ}$; \citealp{tyen2024llms}$^{\circ}$; \citealp{duan2025uprop}; \citealp{darabi2026groundcontrol}; \citealp{essam2026trustaware}; \citealp{nguyen2026beyond}$^{\circ}$; \citealp{ogunsusi2026uachatdev}; \citealp{sethi2026dont}$^{\circ}$\\
\addlinespace[4pt]
Calibration & 6 & \nbar{4.2} & \citealp{mao2026confidence}$^{\circ}$; \citealp{meng2026vulnagentr}; \citealp{morandi2026sequential}; \citealp{yan2026denoiseflow}; \citealp{zhang2026agentic}; \citealp{zhang2026selaur}\\
\addlinespace[4pt]
Other (systems, analyses, benchmarks) & 3 & \nbar{2.1} & \citealp{lathkar2026anchored}$^{\circ}$; \citealp{pandey2026selfdoubt}$^{\circ}$; \citealp{shin2026the}$^{\circ}$\\
\addlinespace[4pt]
Bayesian and ensemble & 2 & \nbar{1.4} & \citealp{donaldson2026bayesian}; \citealp{shen2026uncertaintyguided}\\
\addlinespace[4pt]
\rowcolor{axblue!9}
\multicolumn{3}{@{}p{6.35cm}@{}}{\textbf{\color{axblue!55!black}Multi-agent coordination}\; {\scriptsize\color{axblue!45!black}(pipeline stage 6; 42 papers; Section~\ref{sec:multiagent})}} & {\scriptsize\itshape\color{axaqua!35!black}main sources: inter-agent (4), epistemic (1)}\\
Other (systems, analyses, benchmarks) & 21 & \nbar{14.7} & \citealp{irving2018ai}$^{\circ}$; \citealp{li2023camel}$^{\circ}$; \citealp{park2023generative}$^{\circ}$; \citealp{wu2023autogen}$^{\circ}$; \citealp{zheng2023judging}$^{\circ}$; \citealp{chen2024agentverse}$^{\circ}$; \citealp{hong2024metagpt}$^{\circ}$; \citealp{qian2024chatdev}$^{\circ}$; \citealp{chang2026cascadedebate}$^{\circ}$; \citealp{chen2026every}; \citealp{itkin2026delayed}$^{\circ}$; \citealp{jamshidi2026collective}$^{\circ}$; \citealp{jamshidi2026hallucination}$^{\circ}$; \citealp{li2026march}$^{\circ}$; \citealp{li2026source}$^{\circ}$; \citealp{liu2026game}$^{\circ}$; \citealp{rodrigues2026hallucination}$^{\circ}$; \citealp{sedoc2026trust}$^{\circ}$; \citealp{wang2026budgeted}$^{\circ}$; \citealp{xie2026ears}$^{\circ}$; \citealp{yadav2026legalhallulens}$^{\circ}$\\
\addlinespace[4pt]
Sampling and consistency & 13 & \nbar{9.1} & \citealp{du2024improving}$^{\circ}$; \citealp{khan2024debating}$^{\circ}$; \citealp{liang2024encouraging}$^{\circ}$; \citealp{smit2024should}$^{\circ}$; \citealp{baba2026argument}$^{\circ}$; \citealp{huang2026counterfactual}; \citealp{keramati2026confident}$^{\circ}$; \citealp{keramati2026early}$^{\circ}$; \citealp{li2026trusttrade}; \citealp{ma2026concat}$^{\circ}$; \citealp{martinez2026multiagent}; \citealp{tang2026the}$^{\circ}$; \citealp{zhu2026demystifying}$^{\circ}$\\
\addlinespace[4pt]
Calibration & 6 & \nbar{4.2} & \citealp{chen2024reconcile}$^{\circ}$; \citealp{yoffe2024debunc}$^{\circ}$; \citealp{armstrong2026margin}$^{\circ}$; \citealp{edwards2026ask}; \citealp{liu2026calibrating}$^{\circ}$; \citealp{wang2026orchestrating}$^{\circ}$\\
\addlinespace[4pt]
Conformal prediction & 2 & \nbar{1.4} & \citealp{wang2026from}$^{\circ}$; \citealp{zhang2026commcp}$^{\circ}$\\
\addlinespace[4pt]
\rowcolor{axblue!9}
\multicolumn{3}{@{}p{6.35cm}@{}}{\textbf{\color{axblue!55!black}Control: abstention and deferral}\; {\scriptsize\color{axblue!45!black}(acts on every stage; 31 papers; Section~\ref{sec:control})}} & {\scriptsize\itshape\color{axaqua!35!black}main sources: tool/env. (4), accumulated (3)}\\
Deferral, clarification, and oversight & 31 & \nbar{21.7} & \citealp{chow1970optimum}$^{\circ}$; \citealp{elyaniv2010foundations}$^{\circ}$; \citealp{amodei2016concrete}$^{\circ}$; \citealp{geifman2017selective}$^{\circ}$; \citealp{madras2018predict}$^{\circ}$; \citealp{min2020ambigqa}$^{\circ}$; \citealp{mozannar2020consistent}$^{\circ}$; \citealp{hendrycks2021unsolved}$^{\circ}$; \citealp{chen2023frugalgpt}$^{\circ}$; \citealp{kuhn2023clam}$^{\circ}$; \citealp{ding2024hybrid}$^{\circ}$; \citealp{greenblatt2024ai}$^{\circ}$; \citealp{ong2024routellm}$^{\circ}$; \citealp{yadkori2024mitigating}$^{\circ}$; \citealp{yue2024large}$^{\circ}$; \citealp{zhang2024clarify}$^{\circ}$; \citealp{fleming2025uncertaintyaware}; \citealp{lu2025auditing}; \citealp{baidya2026passiveqa}$^{\circ}$; \citealp{dixon2026adaptive}$^{\circ}$; \citealp{guo2026routenlp}$^{\circ}$; \citealp{iyer2026capability}$^{\circ}$; \citealp{luo2026agentic}$^{\circ}$; \citealp{madhusudhan2026knowing}$^{\circ}$; \citealp{piatrashyn2026redact}; \citealp{rajesh2026teaching}$^{\circ}$; \citealp{xiaohu2026know}$^{\circ}$; \citealp{yeke2026yesman}; \citealp{zhang2026calibration}; \citealp{zhao2026grace}$^{\circ}$; \citealp{zhou2026adaptive}$^{\circ}$\\
\addlinespace[4pt]
\rowcolor{axblue!9}
\multicolumn{3}{@{}p{6.35cm}@{}}{\textbf{\color{axblue!55!black}Evaluation and benchmarks}\; {\scriptsize\color{axblue!45!black}(measures every stage; 40 papers; Section~\ref{sec:eval})}} & {\scriptsize\itshape\color{axaqua!35!black}main sources: tool/env. (2), accumulated (1)}\\
Benchmarks and metrics & 40 & \nbar{28.0} & \citealp{rajpurkar2018know}$^{\circ}$; \citealp{chen2021evaluating}$^{\circ}$; \citealp{hendrycks2021measuring}$^{\circ}$; \citealp{lin2022truthfulqa}$^{\circ}$; \citealp{deng2023mind2web}$^{\circ}$; \citealp{golovneva2023roscoe}$^{\circ}$; \citealp{li2023apibank}$^{\circ}$; \citealp{li2023halueval}$^{\circ}$; \citealp{mialon2023gaia}$^{\circ}$; \citealp{prasad2023receval}$^{\circ}$; \citealp{srivastava2023beyond}$^{\circ}$; \citealp{zhuang2023toolqa}$^{\circ}$; \citealp{chen2024benchmarking}$^{\circ}$; \citealp{es2024ragas}$^{\circ}$; \citealp{jimenez2024swebench}$^{\circ}$; \citealp{kinniment2024evaluating}$^{\circ}$; \citealp{koh2024visualwebarena}$^{\circ}$; \citealp{liu2024agentbench}$^{\circ}$; \citealp{niu2024ragtruth}$^{\circ}$; \citealp{ruan2024toolemu}$^{\circ}$; \citealp{saadfalcon2024ares}$^{\circ}$; \citealp{xie2024osworld}$^{\circ}$; \citealp{yao2024tau}$^{\circ}$; \citealp{ye2024benchmarking}$^{\circ}$; \citealp{zhou2024webarena}$^{\circ}$; \citealp{mavi2025selfevaluating}; \citealp{vashurin2025benchmarking}$^{\circ}$; \citealp{chen2026tracesafe}$^{\circ}$; \citealp{han2026can}; \citealp{kargi2026uncertain}$^{\circ}$; \citealp{kc2026babeljudge}$^{\circ}$; \citealp{kirmayr2026carbench}; \citealp{kumar2026uncertainty}$^{\circ}$; \citealp{liu2026agenthallu}$^{\circ}$; \citealp{liu2026halluworld}$^{\circ}$; \citealp{oh2026uncertainty}; \citealp{tong2026does}$^{\circ}$; \citealp{trantruong2026measuring}; \citealp{wang2026a}$^{\circ}$; \citealp{yang2026whencalibration}$^{\circ}$\\
\addlinespace[4pt]
\rowcolor{axblue!9}
\multicolumn{3}{@{}p{6.35cm}@{}}{\textbf{\color{axblue!55!black}Related surveys and positions}\; {\scriptsize\color{axblue!45!black}(background; 45 papers; Section~\ref{sec:related})}} & {\scriptsize\itshape\color{axaqua!35!black}main sources: tool/env. (11), epistemic (3)}\\
Other (systems, analyses, benchmarks) & 26 & \nbar{18.2} & \citealp{abdar2021review}$^{\circ}$; \citealp{baan2023uncertainty}$^{\circ}$; \citealp{gao2023retrieval}$^{\circ}$; \citealp{gawlikowski2023survey}$^{\circ}$; \citealp{ji2023survey}$^{\circ}$; \citealp{mialon2023augmented}$^{\circ}$; \citealp{xi2023rise}$^{\circ}$; \citealp{zhang2023sirens}$^{\circ}$; \citealp{anwar2024foundational}$^{\circ}$; \citealp{du2024survey}; \citealp{guo2024large}$^{\circ}$; \citealp{he2024emerged}; \citealp{qin2024toollearning}$^{\circ}$; \citealp{shorinwa2024survey}; \citealp{wang2024survey}$^{\circ}$; \citealp{zhang2024surveymemory}$^{\circ}$; \citealp{huang2025how}; \citealp{huang2025survey}$^{\circ}$; \citealp{mohammadi2025evaluation}; \citealp{sarkar2025survey}; \citealp{xia2025survey}; \citealp{yu2025survey}; \citealp{papamarkou2026position}$^{\circ}$; \citealp{wang2026agent}; \citealp{yang2026reliable}; \citealp{zhang2026from}$^{\circ}$\\
\addlinespace[4pt]
Calibration & 8 & \nbar{5.6} & \citealp{wang2023calibration}; \citealp{geng2024survey}$^{\circ}$; \citealp{cheng2025empowering}; \citealp{hu2025survey}; \citealp{kirchhof2025position}; \citealp{sharma2025small}; \citealp{chen2026five}; \citealp{sengupta2026how}$^{\circ}$\\
\addlinespace[4pt]
Abstention and deferral & 3 & \nbar{2.1} & \citealp{wen2024know}; \citealp{liu2025uncertainty}; \citealp{su2025survey}\\
\addlinespace[4pt]
Bayesian and ensemble & 3 & \nbar{2.1} & \citealp{hullermeier2021aleatoric}$^{\circ}$; \citealp{he2023survey}; \citealp{huang2024survey}\\
\addlinespace[4pt]
Conformal prediction & 3 & \nbar{2.1} & \citealp{shafer2008tutorial}$^{\circ}$; \citealp{angelopoulos2023gentle}$^{\circ}$; \citealp{campos2024conformal}\\
\addlinespace[4pt]
Training and distillation & 2 & \nbar{1.4} & \citealp{xu2024surveykd}$^{\circ}$; \citealp{chang2025survey}\\
\end{longtable}
\endgroup

\end{landscape}

\section{Proofs and Extended Analysis for the Formal Lens}
\label{app:proofs}

This appendix collects the proof of Proposition~\ref{prop:positive} and the extended analysis that Section~\ref{sec:formal} summarizes. It covers the pathwise decomposition of the observed composition error, the mean-bias bound for the reported-confidence product, the sign analysis of the two error terms, the long-horizon behavior of the bound, and an entropy-based decomposition of trajectory uncertainty. Notation follows Section~\ref{sec:formal}.

\subsection{Proof of Proposition~\ref{prop:positive}}
\label{app:proofs-prop2}

The displayed equality follows from the telescoping identity $\prod_k q_k - \prod_k p_k = \sum_k (\prod_{t<k} q_t)(q_k - p_k)(\prod_{t>k} p_t)$ together with Eq.~\eqref{eq:hazard} and $q_1 = p_1$. The bound follows because all factors multiplying $q_k-p_k$ lie in $[0,1]$.
For the closed form, let $S_{k-1} = \prod_{s<k} Y_s$. Then $q_k - p_k = \mathrm{Cov}(Y_k, S_{k-1}) / \mathbb{E}[S_{k-1}]$, where $\mathbb{E}[S_{k-1}] = R_{k-1}$. Expanding the covariance gives $\mathrm{Cov}(Y_k, S_{k-1}) = \rho_k \sqrt{p_k(1-p_k)} \sqrt{R_{k-1}(1-R_{k-1})}$, since $S_{k-1}$ is Bernoulli with mean $R_{k-1}$, and dividing by $R_{k-1}$ gives the closed form in Proposition~\ref{prop:positive}. $\square$

\subsection{The Pathwise Decomposition of the Observed Composition Error}
\label{app:proofs-pathwise}

The remark after Proposition~\ref{prop:positive} states that the observed composition error separates into two mechanisms. Adding and subtracting the marginal product makes the separation explicit:
\begin{equation}
\underbrace{\textstyle\prod_{t} c_t \;-\; R}_{\text{observed composition error}}
\;=\;
\underbrace{\Big(\textstyle\prod_{t} c_t \;-\; \prod_{t} p_t\Big)}_{\text{reported-confidence--marginal gap}}
\;-\;
\underbrace{\Big(R \;-\; \textstyle\prod_{t} p_t\Big)}_{\text{dependence, Proposition~\ref{prop:positive}}}
\end{equation}
This identity holds pathwise. Only the dependence term is controlled by Proposition~\ref{prop:positive}, which bounds it and gives a closed form that can be estimated from held-out trajectories, whereas Definition~\ref{def:step} does not constrain the reported-confidence term, which therefore requires separate empirical estimation. Appendix~\ref{app:proofs-meanbias} derives a companion bound on the mean bias of $\prod_t c_t$ whose additional term depends on correlations among the \emph{reported} confidences, and explains why neither bound establishes calibration of the product.
For the chained-QA traces in Section~\ref{sec:empirical}, the reported-confidence term is substantially larger in the only condition with a measured positive step-error correlation, with an empirical mean of $\approx +0.47$ against $\approx +0.05$ for the dependence term, and the product continues to overestimate reliability after the dependence term is taken into account. On the long-horizon ALFWorld traces of the same section, the dependence term alone reaches $+0.13$ by a $20$-step horizon.

\subsection{The Mean Bias of the Reported-Confidence Product}
\label{app:proofs-meanbias}

Step-level calibration gives $\mathbb{E}[c_t] = \mathbb{E}\big[\Pr(Y_t=1\mid c_t)\big] = p_t$, so the reported confidence $c_t$ is unbiased for the step marginal even though a realized value $c_t(h_{t-1})$ need not equal $p_t$ (Section~\ref{sec:formal}).
Assume $c_t > 0$. Define $u_k = \mathbb{E}\big[\prod_{t \le k} c_t\big] \big/ \mathbb{E}\big[\prod_{t<k} c_t\big]$. Then $\mathbb{E}\big[\prod_t c_t\big] = \prod_k u_k$ and $u_1 = p_1$. Applying the same telescoping identity twice gives
\begin{equation}
\Big|\, \mathbb{E}\Big[\textstyle\prod_{t} c_t\Big] - R \,\Big|
\;\le\;
\underbrace{\sum_{k=2}^{T} \lvert u_k - p_k \rvert}_{\text{dependence among the \emph{reported} confidences}}
\;+\;
\underbrace{\sum_{k=2}^{T} \lvert q_k - p_k \rvert}_{\text{Proposition~\ref{prop:positive}}}
\end{equation}
where $u_k - p_k = \mathrm{Cov}\big(c_k,\, \prod_{t<k} c_t\big) \big/ \mathbb{E}\big[\prod_{t<k} c_t\big]$. This has the same form as the expression for $q_k - p_k$, with the clean-prefix indicator replaced by the confidence prefix.
The result has two scope limitations.
The bound controls the \emph{mean bias} of the naive propagation estimator; it does not establish calibration. A calibrated product would require $\Pr\big(Y=1 \mid \prod_t c_t = r\big) = r$. Neither term above controls this conditional relation. The term involving $u_k$ depends on correlations within the estimator and is not bounded by step-level calibration. An estimator can satisfy Definition~\ref{def:step} at every step even when its reported confidences are strongly correlated along the trajectory. Such correlations can arise when the $c_t$ values are based on shared context.
A corresponding multicalibration-style result would condition on both confidence and trajectory position and directly bound the calibration error of $\prod_t c_t$, rather than only its mean bias. No such result is established here; Section~\ref{sec:challenges} lists it as an open problem.

\subsection{Sign of the Two Terms in the Composition Error}
\label{app:proofs-sign}

Appendix~\ref{app:proofs-pathwise} displays the pathwise identity that separates the observed composition error $\prod_t c_t - R$ into a reported-confidence--marginal gap and the dependence term of Proposition~\ref{prop:positive}. Since $R$ and $\prod_t p_t$ are constants, the identity also holds term by term in expectation. These mechanisms can act in opposite directions.
The reported-confidence term is positive if $c_t > p_t$ at every step, and such deviations can accumulate across steps. By Proposition~\ref{prop:positive}, the dependence term is positive if $\rho_k > 0$ for every relevant $k$, and it enters the identity with a minus sign. Positive outcome dependence then raises $R$ relative to the marginal product and reduces $\prod_t c_t-R$. Under these conditions, the marginal product is conservative.
Only the dependence term is controlled by Proposition~\ref{prop:positive}, which bounds it and gives a closed form that can be estimated from held-out trajectories. The reported-confidence term measures the difference between the reported-confidence product and the product of step marginals. Definition~\ref{def:step} does not constrain it, so it requires separate empirical estimation. Neither proposition determines which term is larger.
Under nonnegative clean-prefix correlations, a large overestimate of trajectory reliability cannot be explained by positive cross-step dependence alone. It instead motivates separate checks of the calibration and dependence structure of the reported confidences $c_t$.

\subsection{Degradation of the Bound at Long Horizons}
\label{app:proofs-horizon}

The bound in Proposition~\ref{prop:positive} has an important limitation at long horizons. The dependence-induced discrepancy can persist, but the bound may then be uninformative. Under a constant hazard, let $q=1-\lambda$. The coefficients then grow geometrically with the horizon, with $R_{k-1}=q^{k-1}$ and $\sqrt{(1-R_{k-1})/R_{k-1}}\sim q^{-(k-1)/2}$. With $q=0.9$, $p_k=q$, and $T=20$, the final summand alone reaches $\approx 0.76\,\lvert\rho_{20}\rvert$. The coefficients in the sum total $\approx 7.97$, so the sum exceeds $1$ when the correlation is constant at about $0.13$. Since $\lvert R-\prod_t p_t\rvert\le 1$ holds trivially, the bound is mainly informative for short horizons or for long horizons when correlations are near zero. In these settings, the bound can assess the accuracy of the product approximation. It can upper-bound small discrepancies but becomes vacuous when its right-hand side exceeds the trivial bound of $1$. A tighter long-horizon bound would need to control joint dependence directly rather than sum step-specific terms; deriving such a result remains open. The signed identity displayed in Appendix~\ref{app:proofs-pathwise}, however, remains valid at every horizon.

\subsection{An Entropy View of Trajectory Uncertainty}
\label{app:proofs-entropy}

The chain rule for entropy gives a view of trajectory uncertainty that is related to, but weaker than, trajectory reliability. Conditioned on the initial state, $H(a_{1:T}, o_{1:T} \mid s_0) = \sum_t [\,H(a_t \mid h_{t-1}) + H(o_t \mid h_{t-1}, a_t)\,]$ \citep{han2024towards}. This decomposition separates uncertainty in the agent's actions from uncertainty introduced by tools and the environment. The action-only version $H(a_{1:T}\mid s_0)=\sum_t H(a_t\mid h_{t-1})$ is invalid because the histories $h_{t-1}$ include observations that are marginalized out on the left.
Even the full decomposition does not provide trajectory reliability. Definition~\ref{def:traj} concerns $\Pr(Y=1)$. It depends on whether the trajectory succeeds, not on how uncertain its actions are. Policy entropy may also reflect benign diversity among equally valid actions \citep{kuhn2023semantic}. A high-entropy step can be safe, while a low-entropy step can be confidently wrong. Summing step entropies is well defined but does not estimate trajectory success.

\subsection{The Calibration-Transfer Gap under Distillation}
\label{app:proofs-distill}

Suppose the teacher $\pi_T$ has a reliability estimator $\hat{R}$ calibrated
on its own trajectory distribution, and an on-policy student $\pi_S$ is
trained on its own trajectories by minimizing
\begin{equation}
\mathbb{E}_{\tau \sim \pi_S}
\left[\sum_t D\bigl(\pi_T(\cdot \mid h_t)\,\|\,\pi_S(\cdot \mid h_t)\bigr)\right].
\end{equation}
The relevant question is whether calibration is preserved under the change
from histories generated by $\pi_T$ to those generated by $\pi_S$. The
student-side calibration gap is
\begin{equation}
\Delta(\pi_S) \;=\; \sup_{c}\;
\Bigl|\, \mathbb{E}_{\tau \sim \pi_S}
\bigl[\,Y \mid \hat R(h_t) = c\,\bigr] \;-\; c \,\Bigr|.
\end{equation}
This gap is zero for the teacher by assumption but is otherwise unconstrained
for the student; Section~\ref{sec:training} discusses the mechanisms that can
enlarge it and the absence of evaluations that measure it.

\section{Relation to Classical Decision Theory: POMDPs and the Value of Information}
\label{app:pomdp}

Several concepts discussed in Section~\ref{sec:background} have direct counterparts in classical decision theory. POMDPs formalize how an agent should act when the true state of the environment cannot be observed directly \citep{astrom1965optimal,smallwood1973optimal,kaelbling1998planning}. The agent instead maintains a \emph{belief}, defined as a probability distribution over possible states conditioned on the interaction history, and plans over this belief space. Approximate solution methods include Monte Carlo tree search \citep{silver2010montecarlo} and trajectory optimization \citep{platt2010belief}. \citet{kochenderfer2015decision} provide a standard introduction. Subjective logic supplies a complementary formalism for beliefs with explicit uncertainty \citep{josang2016subjective}.
This framework helps interpret several quantities used in this paper. The history-conditioned reliability $\hat{R}(h_t)$ in Definition~\ref{def:traj} summarizes the agent's current belief that the trajectory will succeed. The hazard $\lambda_t(h_{t-1})$ in Eq.~\eqref{eq:hazard} gives the probability of failure at the next step conditioned on the current history. The connection also applies to active information gathering. Re-querying a tool or consulting another source may reduce uncertainty that a passive predictor cannot resolve. In a POMDP, such behavior is represented as an information-gathering action.
Intervention and control provide another connection. The intervention advantage in Eq.~\eqref{eq:advantage} compares the expected utility of intervening with that of continuing and is related to the classical value of information \citep{howard1966information}. Asking a clarifying question because its answer is expected to reduce uncertainty can similarly be viewed as Bayesian experimental design \citep{lindley1956measure,chaloner1995bayesian}. Asking for information, taking actions to verify evidence, and planning under uncertainty therefore extend established ideas from classical decision theory. Engineering decision analysis and safety assessment distinguish epistemic from aleatoric or subjective uncertainty and connect those distinctions to probability, sensitivity, and risk \citep{kiureghian2009aleatory,faber2005treatment,helton1997uncertainty,apostolakis1990concept}.
\par
Despite these connections, POMDPs appear infrequently in the literature collected for this paper. Differences in terminology provide one explanation. Our corpus was retrieved mainly from research on LLM uncertainty and agents (Appendix~\ref{app:corpus}), whereas work on POMDPs and belief-space planning generally uses different terminology. The limited overlap therefore reflects a separation between research communities, at least in part, rather than an absence of relevant theory.
A technical difference also remains. Classical POMDP methods generally assume an explicit model of the environment. The transition and observation dynamics are known, while the state is hidden. The agent can therefore maintain and update a belief distribution using Bayes' rule.
LLM agents generally lack such an explicit model. Their internal state is represented implicitly through model parameters and context rather than by an explicit posterior distribution. Their action spaces are also less structured because actions are often expressed through language or tool calls. A further source of uncertainty concerns the correctness of the agent's own reasoning and actions. Classical POMDP formulations normally assume specified transition and observation models, whereas an LLM agent may be uncertain about the reliability of its own decisions.
Classical methods therefore cannot be transferred directly without additional assumptions. The central task is to estimate quantities such as reliability, failure probability, and the value of additional information without the explicit probabilistic model assumed in classical decision theory. From this perspective, the methods in Sections~\ref{sec:propagation} and \ref{sec:control} provide partial approximations to the classical framework. Some estimate a useful belief statistic, while others estimate whether information gathering or intervention is beneficial.
Safe reinforcement learning and runtime shielding provide a related comparison \citep{garcia2015comprehensive,alshiekh2018safe,pecka2014safe}.
A shield is a reactive monitor synthesized from a formal safety specification and an environment model. It evaluates each proposed action and either permits it or substitutes a safe alternative, ensuring that the specification is not violated regardless of the action proposed by the learned policy \citep{alshiekh2018safe}. Constrained-MDP formulations impose a related constraint in expectation rather than as an absolute guarantee by maximizing return subject to a bound on expected cost \citep{garcia2015comprehensive}. Both approaches resemble the capability gating described in Section~\ref{sec:control}: they restrict the action space before execution and do not depend on the quality of the policy's own confidence estimate. Their protection therefore remains available when the step-level uncertainty estimator is inaccurate.
The approaches differ in what they assume and what they must estimate. A shield requires an environment model and a safety specification that can be checked against it; in return, it provides a hard guarantee. An LLM agent typically has neither. Its environment model is implicit, and the desired property that its reasoning is correct cannot generally be expressed as a state predicate. The available alternative is often a learned reliability estimate $\hat{R}(h_t)$ without a formal guarantee. The approaches therefore address complementary failure classes. Shields cover cases in which unsafe states or actions can be specified in advance, including irreversible tool calls and permission boundaries. Uncertainty-based oversight addresses cases such as an incorrect answer presented with high confidence. A combined system could use a specification-based shield to limit worst-case damage and an uncertainty estimate to trigger the softer interventions in Section~\ref{sec:control}. Our corpus contains little work on this combination.
The same difference applies to the relation with POMDP planning. Classical methods obtain guarantees by assuming an environment model, whereas agent UQ attempts to estimate quantities that such a model would otherwise provide. The formal guarantees do not transfer automatically, but the relevant quantities and their decision-theoretic motivation remain applicable.

\section{Taxonomy Conventions and Count Accounting}
\label{app:taxnotes}

This appendix records the labeling conventions behind the taxonomy of Section~\ref{sec:taxonomy} and the exact scope of the counts displayed in Figure~\ref{fig:tree}. Section~\ref{sec:taxonomy} summarizes both; the full statements are given here so that the corpus counts can be audited against \texttt{data/taxonomy\_table.csv}.

\paragraph{Differences in levels of abstraction within panels.}
The labels within a panel do not always describe concepts at the same level of abstraction.
\textcolor{axaqua!65!black}{Panel A} places two classical uncertainty \emph{types}, aleatoric \textcolor{axaqua!65!black}{A1} and epistemic \textcolor{axaqua!65!black}{A2}, beside three labels that describe where uncertainty enters or how it moves. Tool and environment uncertainty \textcolor{axaqua!65!black}{A3} describes an exogenous origin, accumulated uncertainty \textcolor{axaqua!65!black}{A4} describes propagation, and inter-agent uncertainty \textcolor{axaqua!65!black}{A5} describes a transport channel.
These three labels can overlap with the classical pair. Tool uncertainty is epistemic when the agent does not know a tool's reliability and aleatoric when the tool is intrinsically noisy. Accumulated uncertainty describes how either type changes over a horizon rather than defining another uncertainty type.
We retain all five labels in one panel because the corpus distinguishes them and because they lead to different responses in Table~\ref{tab:sources}. The classical distinction alone does not separate actions such as re-querying an API and rolling back several steps. Table~\ref{tab:sources} marks the conceptual difference by placing \textcolor{axaqua!65!black}{A1}--\textcolor{axaqua!65!black}{A2} and \textcolor{axaqua!65!black}{A3}--\textcolor{axaqua!65!black}{A5} in separate blocks. Section~\ref{sec:sources} discusses cases in which this boundary is unclear.

\textcolor{axyellow!75!black}{Panel B} also combines labels at different levels. Categories \textcolor{axyellow!75!black}{B1}--\textcolor{axyellow!75!black}{B5}, together with propagation \textcolor{axyellow!75!black}{B6}, are estimation mechanisms. Abstention and deferral \textcolor{axyellow!75!black}{B7} is a decision and control mechanism, while uncertainty-aware training \textcolor{axyellow!75!black}{B8} is a learning procedure. Panel B should therefore be read as describing how uncertainty is \emph{estimated or used}.
An alternative taxonomy could place \textcolor{axyellow!75!black}{B7} and \textcolor{axyellow!75!black}{B8} on a separate response axis. This would yield four dimensions: uncertainty type, system component, estimator, and response. We do not adopt that structure for two reasons. A response axis would contain only two families, leaving most papers unlabeled and adding a largely empty dimension to each display. Estimation, control, and training are also difficult to separate in practice. A conformal abstention rule \textcolor{axyellow!75!black}{B4} is both an estimator and a decision rule, and uncertainty-aware training often changes the estimator being trained.
We instead record the difference within Panel B. The lower block of Table~\ref{tab:families} contains the agent-native families. Its last two entries describe what is done with an estimate rather than how the estimate is produced.

\textcolor{axblue!65!black}{Panel C} combines pipeline components, including planning, tool use, retrieval, and memory, with an interaction regime, multi-step reasoning, and an architectural setting, multi-agent systems. The roadmap of Section~\ref{sec:roadmap} therefore marks the interaction regime separately, with the badges introduced in Section~\ref{sec:taxonomy}. A method tagged \textcolor{axblue!65!black}{C5} or \textcolor{axblue!65!black}{C6} is classified by its setting rather than by a component in the same loop as retrieval.
The four-dimensional interpretation can still be recovered from the taxonomy. Uncertainty type corresponds to \textcolor{axaqua!65!black}{A1}--\textcolor{axaqua!65!black}{A2}. Origin and system component correspond to \textcolor{axaqua!65!black}{A3}--\textcolor{axaqua!65!black}{A5} together with \textcolor{axblue!65!black}{Panel C}. Estimators correspond to \textcolor{axyellow!75!black}{B1}--\textcolor{axyellow!75!black}{B6}, and response or control corresponds to \textcolor{axyellow!75!black}{B7}--\textcolor{axyellow!75!black}{B8}.

\paragraph{Scope of the leaf counts in Figure~\ref{fig:tree}.}
The leaves report corpus tag counts, but they do not cover every paper on every panel. They also do not partition the corpus.
On \textcolor{axaqua!65!black}{Panel A}, $32$ of the $120$ papers have no source tag. Most are benchmarks or analyses whose methods do not assume a particular uncertainty source. Thus, about one quarter of the corpus does not specify its scope on this axis.
On \textcolor{axyellow!75!black}{Panel B}, the most common corpus tag is not shown as a leaf. A total of $59$ papers report calibration, and $4$ more apply post-hoc recalibration. Calibration is intentionally treated as a property rather than a family, as explained in Section~\ref{sec:taxonomy}. The panel therefore omits a label carried by about half of the corpus.
On \textcolor{axblue!65!black}{Panel C}, the $25$ single-turn papers and the $7$ agent-general papers have no leaf because \textcolor{axblue!65!black}{C1}--\textcolor{axblue!65!black}{C6} denote agentic stages.
Multi-tagging increases the totals. The $120$ papers receive $206$ stage tags, $210$ family tags, and $136$ source tags. A leaf count is therefore a tag count rather than a paper count, and the leaves sum to more than the corpus size.
Memory provides a clear example. It has $31$ papers, a count exceeded by only one stage, but only $4$ are tagged as memory alone and $20$ are also tagged as retrieval. Memory is often a secondary tag for retrieval-augmented work. Section~\ref{sec:rag} therefore describes methods that specifically address memory uncertainty as scarce despite the larger stage count.

\section{Single-Turn Estimator Families in Detail}
\label{app:singleturn}

Section~\ref{sec:foundations} summarizes the single-turn estimator families of Table~\ref{tab:families} together with the model access they require and the limitations that matter for agents. This appendix reviews the individual methods within each family. Dedicated surveys treat these estimator internals in still greater depth \citep{geng2024survey,huang2024survey,shorinwa2024survey,xia2025survey,liu2025uncertainty}.

\subsection{Verbalized Confidence, Sampling, and Consistency}
\label{app:singleturn-blackbox}

\citet{lin2022teaching} fine-tune GPT-3 to append a stated probability $v(x)$ to each answer, using empirical accuracy over groups of similar questions as the supervision target. The training objective minimizes the squared error between stated confidence and correctness. With squared loss, this objective is the Brier score, whose pointwise minimizer is the true conditional accuracy $\Pr(\hat{y}\ \text{correct} \mid x)$. Under this interpretation, training for ``calibrated verbal confidence'' applies a proper scoring rule to a confidence value expressed in language (Section~\ref{sec:metrics}).
At inference time, the elicitation format affects calibration. For an RLHF-tuned model, eliciting a numeric score or a ranked list of guesses with probabilities can yield better calibration than using raw conditional token probabilities, whose calibration may degrade during alignment tuning \citep{tian2023just,openai2023gpt4}. Prompting alone does not consistently remove overconfidence, although combining sampling with verbalized confidence can help. Calibration also remains sensitive to prompt wording \citep{xiong2023can,si2023prompting}.
Recent work further tests whether verbalized reports recover a model's internal answer distribution, whether reasoning models express confidence more faithfully, and where model confidence diverges from human judgments \citep{yang2024verbalized,yona2024large,kirchhof2025selfreflect,yoon2025reasoning,steyvers2024large}.
Long-form benchmarks and human studies examine how linguistic uncertainty is produced, calibrated, communicated, and interpreted \citep{yang2025uncle,chaudhry2024finetuning,belem2024perceptions,huang2024calibrating,vanderbles2019communicating}.
Training-based methods can further improve confidence estimates through decision-theoretic objectives, listener-aware preference tuning, self-reflective rationales, post-hoc calibration, and calibration-aware self-improvement \citep{band2024linguistic,stengel2024lacie,xu2024sayself,ulmer2024calibrating,liu2024litcab,huang2025accuracy}. Multicalibration provides group-conditional guarantees \citep{detommaso2024multicalibration}.

For sampling-based estimators, the information-theoretic starting point is the predictive entropy $H(Y \mid x)=-\sum_y p_\theta(y\mid x)\log p_\theta(y \mid x)$. Exact computation over the sequence space is intractable, so Monte Carlo estimators use sampled sequence log-likelihoods. Scores are usually normalized by length to avoid assigning higher uncertainty solely to longer outputs \citep{jiang2021how,malinin2021uncertainty,fomicheva2020unsupervised}.
The concern that distinct strings can express one meaning predates LLMs. It appears in work on machine-translation uncertainty and hallucination \citep{ott2018analyzing,glushkova2021uncertainty,guerreiro2023looking}, structured prediction \citep{malinin2021uncertainty}, and predictive uncertainty for hallucination detection in conditional generation \citep{xiao2021hallucination}. \citet{baan2023uncertainty} also examine the interpretation of distributions over strings.
Semantic entropy \citep{kuhn2023semantic} addresses this issue by partitioning $k$ samples into semantic-equivalence classes $\mathcal{C}$ using bidirectional NLI entailment, aggregating probability mass within each class, and computing entropy over the classes:
\begin{equation}
\mathrm{SE}(x) \;=\; - \sum_{C \in \mathcal{C}} p(C \mid x)\, \log p(C \mid x),
\qquad
p(C \mid x) \;=\; \sum_{s \in C} p(s \mid x)
\end{equation}
When token likelihoods are unavailable, a discrete variant uses empirical cluster frequencies within the sample \citep{farquhar2024detecting}.
This procedure typically uses $k \approx 10$ full generations and up to $O(k^2)$ NLI calls per input. The resulting clusters inherit errors from the entailment model, especially when semantic equivalence is ambiguous \citep{zhang2026whysemantic,tomov2025illusion}. \citet{kossen2024semantic} reduce these costs by training probes that predict semantic entropy from a single forward pass.
Several refinements reweight semantically important tokens \citep{duan2024shifting}, replace hard clusters with similarity kernels that represent graded entailment \citep{nikitin2024kernel}, or measure the density of an output in semantic space \citep{qiu2024semantic}. Black-box variants use embeddings or entailment graphs without model likelihoods \citep{lin2023generating,gao2024spuq}. Internal states provide another signal \citep{chen2024inside,snyder2024early}.
Related sampling methods combine resampling with self-reflection \citep{chen2023quantifying}, detect self-contradiction \citep{mundler2024self}, or verify uncertain spans during generation \citep{varshney2023stitch}. Other methods assess consistency without semantic clustering \citep{manakul2023selfcheckgpt} or score long-form outputs claim by claim \citep{min2023factscore,zhang2024luq}. Standardized implementations provide a common interface for many of these methods \citep{fadeeva2023lmpolygraph,vashurin2025benchmarking}.
Confidence-guided reasoning methods turn these scores into a selection rule over reasoning traces. Probabilistic confidence ranks candidate chains \citep{leang2026picsar} and response-wise estimates are improved by conditioning on the chain of thought \citep{zhang2025cot}; token-level estimates score reasoning steps \citep{zhang2025tokur}; self-certainty makes best-of-$N$ selection scalable \citep{kang2025scalable} and confidence gates how much test-time computation a problem receives \citep{fu2025deep}; the same signal guides code generation \citep{zhu2025uncertainty}.
Four further lines bear on the estimators above rather than adding one. Two studies re-audit the empirical picture, revisiting calibration and uncertainty estimation across current LLMs \citep{tao2025revisiting} and decomposing uncertainty by source to select a model and metric adaptively \citep{guo2025uncertainty}. Two apply uncertainty to detecting false content, through token-level fact checking \citep{fadeeva2024fact} and focus-weighted hallucination detection \citep{zhang2023enhancing}. Three concern the estimators' formal footing: a minimum-Bayes-risk view connecting confidence to consistency \citep{vashurin2025uncertainty}, an analysis of information-theoretic uncertainty measures \citep{schweighofer2024information}, and a derivation of predictive uncertainty from Bayesian risk estimation \citep{kotelevskii2024risk}. Three document failure modes and scope: epistemic uncertainty collapses in implicit ensembles of large models \citep{kirsch2024implicit}, uncertainty transfers unevenly from source to summary \citep{kolagar2024aligning}, and \citet{ulmer2024uncertainty} surveys the treatment of uncertainty in NLP as a whole. These developments build on broader accounts of uncertainty sources in supervised machine learning \citep{gruber2023sources}.

\subsection{Token Probabilities, Internal States, and Ensembles}
\label{app:singleturn-whitebox}

P(True)-style self-evaluation has been found to be better calibrated at larger model scales, especially when alternative sampled answers are included in the prompt for comparison \citep{kadavath2022language}.
\citet{yadkori2024believe} construct a pseudo-joint distribution over answers through \emph{iterated prompting}, in which earlier answers are added to the context. They show that the mutual information of this distribution lower-bounds epistemic uncertainty. This mutual information is zero when the responses are conditionally independent samples from a fixed ground-truth belief, which supports an abstention rule that is less sensitive to inherently ambiguous questions.
Among probe-based methods, supervised probes predict truthfulness from intermediate activations \citep{azaria2023internal}. Contrast-consistent search finds a truth direction without supervision by requiring a statement and its negation to receive complementary predictions \citep{burns2023discovering}. Semantic-entropy probes learn an inexpensive linear map from single-pass hidden states to a quantity that would otherwise require $k$ samples \citep{kossen2024semantic}.
Hidden states can also reveal error type \citep{orgad2025llms} and encode unanswerability before a refusal is produced \citep{slobodkin2023curious}. They support hallucination detection before or during generation \citep{snyder2024early,su2024unsupervised,chen2024inside} and can separate epistemic from aleatoric uncertainty without labels \citep{ahdritz2024distinguishing}. These findings suggest that internal states can encode reliability information that is not expressed in model outputs.
Bayesian approximations use variational weight posteriors, Monte Carlo dropout, and deep ensembles \citep{blundell2015weight,gal2016dropout,lakshminarayanan2017simple}, with deep ensembles showing strong calibration under distribution shift \citep{ovadia2019can}. Posterior predictive uncertainty can be decomposed into aleatoric and epistemic components \citep{kendall2017what,depeweg2018decomposition,hullermeier2021aleatoric}, and this decomposition can be used to study how uncertainty propagates through an agent pipeline. Because maintaining a posterior over billions of parameters is impractical, LLM studies use ensembles over low-rank adapters \citep{balabanov2024uncertainty,yang2024bayesian}, a variational posterior over adapter weights \citep{lin2026bayesian}, input clarifications \citep{hou2024decomposing}, in-context configurations \citep{ling2024uncertainty}, or fine-tuned uncertainty heads \citep{kapoor2024large}.

\subsection{Conformal Prediction, Recalibration, and Self-Knowledge}
\label{app:singleturn-guarantees}

In split conformal prediction, a nonconformity function $s(x,y)$, such as $1-p_\theta(y\mid x)$, is evaluated on a held-out calibration set $\{(x_i,y_i)\}_{i=1}^{n}$. Let $\hat{q}$ be the $\lceil(n+1)(1-\alpha)\rceil/n$ empirical quantile of the resulting scores. The prediction set
\begin{equation}
\mathcal{C}(x_{n+1})
=
\{y:s(x_{n+1},y)\leq \hat{q}\}
\end{equation}
satisfies $\Pr\big(y_{n+1}\in\mathcal{C}(x_{n+1})\big)\geq 1-\alpha$ when the calibration and test examples are exchangeable. If the nonconformity scores are continuous, the coverage is also bounded above by $1-\alpha+\frac{1}{n+1}$ \citep{vovk2005algorithmic,shafer2008tutorial,angelopoulos2023gentle}.
Language generation is more difficult because its output space is extremely large. Multiple-choice QA restricts the output to a finite answer set \citep{kumar2023conformal}. Conformal language modeling uses calibration to determine when to stop sampling and which samples to retain, so that the resulting set contains an acceptable generation with high probability \citep{bates2021distribution,quach2024conformal}. Conformal factuality removes low-confidence claims from a long response until the remaining claims satisfy a calibrated factuality criterion \citep{mohri2024language}. Other work studies conditional validity, certified acceptance rules, and the broader conformal prediction landscape \citep{cherian2024large,gui2024conformal,campos2024conformal}.
Because raw LLM confidence is often unreliable, selective-prediction methods combine it with a calibrator fitted for domain shift or with embedding-based out-of-distribution detection. Repeated sampling can also provide a more reliable abstention signal than likelihood on ambiguous questions \citep{chow1970optimum,elyaniv2010foundations,kamath2020selective,ren2023outofdistribution,cole2023selectively,rajpurkar2018know}.
Standard post-hoc calibration methods fitted on held-out data include Platt scaling \citep{platt1999probabilistic}, isotonic regression \citep{zadrozny2002transforming}, temperature scaling \citep{guo2017calibration}, and Dirichlet calibration for the full simplex \citep{kull2019beyond}. Training-time interventions pursue the same goal without a separate fitting stage \citep{thulasidasan2019mixup,mukhoti2020calibrating}. The need for recalibration depends on the model and dataset. \citet{guo2017calibration} found that modern networks were substantially more miscalibrated than their smaller predecessors, later work reported different patterns for newer architectures \citep{minderer2021revisiting}, and calibration also varies across pre-trained transformers and evaluation settings \citep{desai2020calibration}.
``Knowing what you don't know'' benchmarks categorize questions as known or unknown, or as answerable or unanswerable, and evaluate whether the model responds appropriately. \citet{yin2023large} measure self-aware refusal. \citet{kadavath2022language} train a P(IK) head to predict whether the model will answer correctly before generation. \citet{lin2022truthfulqa} study imitative falsehoods, for which the training data can favor incorrect answers.
Training methods can improve self-knowledge by learning refusal from the model's error patterns \citep{zhang2024rtuning,cheng2024ai}, optimizing directly for honesty \citep{yang2024alignment}, using collaboration to expose knowledge gaps \citep{feng2024dont}, or applying self-verification \citep{dhuliawala2024chain,agrawal2024language}. Comparative studies examine how these signals vary across models, prompts, and tasks \citep{huang2023look,si2023prompting,desai2020calibration,zhao2021calibrate,wen2024know}.

\section{Metric and Bootstrap Details for the Empirical Illustration}
\label{app:expdetails}

This appendix records the finite-sample behavior of the stratified calibration metrics defined in Section~\ref{sec:eval-metrics} and the resampling and refitting procedure behind every interval reported in Section~\ref{sec:empirical}.

\subsection{Finite-Sample Behavior of the Stratified Metrics}
\label{app:expdetails-metrics}

$\mathrm{ECE}_{\max}$ dominates the pooled score. Within a confidence bin, the pooled gap is a size-weighted average of the per-stratum gaps, so the triangle inequality bounds the pooled score by the size-weighted mean of the $\mathrm{ECE}_g$, which is at most their maximum. The same argument gives $\mathrm{ECE}_{\max} \geq \overline{\mathrm{ECE}}$.
The unweighted $\overline{\mathrm{ECE}}$ carries no such guarantee. It averages strata without weighting them, so a large badly calibrated stratum beside a small well calibrated one can pull it below the pooled score. We report it because it is more stable than $\mathrm{ECE}_{\max}$ at the sample sizes typical of trajectory data, where late strata are thin, and it should be read together with the group sizes rather than as a lower bound on pooled error. Neither statistic lets opposite-signed errors in different strata cancel, since both aggregate non-negative per-stratum values, although the usual binning limitations remain within each stratum.
The minimum-size filter excludes the strata that Definition~\ref{def:traj} is most likely to detect as problematic. The definition requires calibration at every checkpoint index, including late indices. Dropping small groups restricts evaluation to the surviving prefix and provides no information about the tail. In Section~\ref{sec:empirical}, the filter removes $t=5$ and $t=6$.
$\mathrm{ECE}_{\max}$ is also the maximum of positively biased estimates from groups of unequal size. Binned ECE has an upward finite-sample bias that increases as group size decreases \citep{kumar2019verified,roelofs2022mitigating}, and taking the maximum across groups increases this effect. A direct comparison between $\mathrm{ECE}_{\max}$ and the pooled score therefore reflects both the estimator's bias profile and the predictor's behavior, which is why the protocol of Section~\ref{sec:eval-metrics} requires the calibrated null.
When the data are sufficient, a stronger multicalibration-style criterion is appropriate. It requires $\Pr\big(Y=1 \mid \hat{R}(h_t)=r,\, g\big)=r$ jointly over confidence level and trajectory position rather than marginally over each \citep{hebertjohnson2018multicalibration,detommaso2024multicalibration}. Section~\ref{sec:eval-metrics} reports Eq.~\eqref{eq:strat-ece} instead because joint conditioning divides an already small sample along both dimensions.

\subsection{Bootstrap and Refitting Details}
\label{app:expdetails-bootstrap}

\paragraph{Nested bootstrap.}
Three estimators require fitting, making refitting within each bootstrap replicate important. Resampling the \emph{already cross-fitted} predictions represents sampling variation but omits variation from refitting. It therefore understates uncertainty for the fitted estimators. All intervals in Section~\ref{sec:empirical} use a \emph{nested} bootstrap. Each replicate resamples the $300$ trajectories and refits the isotonic map and both logistic baselines using the same two-fold cross-fitting procedure. Folds are assigned by distinct original trajectory, so duplicated draws never appear on both sides of the split.

\paragraph{Paired differences versus point estimates.}
The paired differences and the estimates in Table~\ref{tab:tcece} are computed differently. Each reported paired difference is the mean across nested bootstrap replicates, with both estimators refitted on the resampled trajectories. Table~\ref{tab:tcece} reports point estimates from models fitted once on the complete sample. These quantities need not agree. Each bootstrap replicate contains roughly $63\%$ of the distinct trajectories, which reduces the training data available to fitted estimators and changes comparisons involving them. Because the causal baseline is fitted, every comparison against it is affected. For example, subtracting the point estimates in Table~\ref{tab:tcece} gives $\Delta\mathrm{AUROC}=0.029$ for the verbalized signal, whereas the paired bootstrap mean is $+0.045$. The reported interval corresponds to the paired bootstrap mean.

\paragraph{Scope of the intervals.}
These intervals represent sampling and fitting variation for the selected ten-bin estimator while preserving within-trajectory dependence. They do not adjust for binning or finite-sample bias and should not be interpreted as confidence intervals for a binning-invariant population calibration error. The nested procedure may also be mildly conservative because each replicate contains roughly $63\%$ of the distinct trajectories and therefore fits each map on a smaller effective sample than the complete dataset.

\paragraph{Choice of resampling unit.}
At the short average horizon in this experiment, trajectory-level and checkpoint-level resampling produce similar intervals, $\pm0.032$ and $\pm0.027$. Trajectory-level resampling remains appropriate because checkpoints within a trajectory share the outcome $Y^{(j)}$, and this dependence increases with the horizon.

\subsection{Reference Tables for Section~\ref{sec:eval}}
\label{app:expdetails-tables}

Table~\ref{tab:benchmarks} inventories the evaluation resources that
Section~\ref{sec:eval-benchmarks} discusses, and Table~\ref{tab:drift}
reports the per-step signed gaps behind the drift analysis of
Section~\ref{sec:empirical}.

\begin{table}[t]
\centering
\caption{Evaluation resources grouped into capability benchmarks, single-turn
uncertainty benchmarks, and agent-specific uncertainty benchmarks. Only the
agent-specific group focuses on uncertainty across agent trajectories.}
\label{tab:benchmarks}
\small
\setlength{\tabcolsep}{4.5pt}
\renewcommand{\arraystretch}{1.12}
\begin{tabular}{@{}L{5.1cm}L{3.9cm}L{5.6cm}@{}}
\toprule
\thc{Resource} & \thc{Setting} & \thc{What it measures}\\
\midrule
\tband{3}{Capability benchmarks: task success only}
AgentBench \citep{liu2024agentbench} & Eight interactive environments & End-to-end task success\\
WebArena \citep{zhou2024webarena} & Realistic websites & Task success on the web\\
Mind2Web \citep{deng2023mind2web} & Generalist web tasks & Cross-site generalization\\
OSWorld \citep{xie2024osworld} & Real operating systems & Open-ended computer tasks\\
SWE-bench \citep{jimenez2024swebench} & GitHub issues & Repository-level code fixes\\
GAIA \citep{mialon2023gaia} & General assistance & Tool use, browsing, multimodality\\
$\tau$-bench \citep{yao2024tau} & Tool-agent-user dialogs & Success and run-to-run consistency\\
ToolEmu \citep{ruan2024toolemu} & Emulated sandbox & Risky agent behaviors\\
Autonomous tasks \citep{kinniment2024evaluating} & Open-ended real tasks & Dangerous autonomous capability\\
\tband{3}{Single-turn uncertainty and honesty}
TruthfulQA \citep{lin2022truthfulqa} & Single-turn QA & Imitative falsehoods\\
SQuAD 2.0 \citep{rajpurkar2018know} & Reading comprehension & Unanswerable-question abstention\\
AmbigQA \citep{min2020ambigqa} & Open-domain QA & Ambiguity handling\\
PopQA \citep{mallen2023not} & Entity-centric QA & When parametric memory fails\\
RGB \citep{chen2024benchmarking} & RAG stress tests & Noise robustness, conflict, rejection\\
LM-Polygraph \citep{vashurin2025benchmarking} & UQ method suite & Standardized single-turn UQ comparison\\
\tband{3}{Early evaluations of agent uncertainty}
URAG \citep{nguyen2026urag} & Retrieval-augmented QA & How retrieval shifts uncertainty and reliability\\
MIRAGE-Bench \citep{zhang2025miragebench} & Agentic RAG & Where agent hallucinations arise\\
CAR-bench \citep{kirmayr2026carbench} & Tool-use agents & Consistency and limit awareness\\
Yes-Man Syndrome \citep{yeke2026yesman} & Embodied robots & Abstention under physical uncertainty\\
CFO benchmark \citep{han2026can} & Long-horizon allocation & Delayed-feedback reliability\\
Markov-chain reliability \citep{trantruong2026measuring} & Agent trajectories & Trajectory-level reliability estimation\\
Stepwise self-evaluation \citep{mavi2025selfevaluating} & Multi-step tasks & Per-step confidence quality\\
Computer-use agent UQ \citep{kumar2026uncertainty} & VLM agents, GUI grounding & Uncertainty across computer-use stacks\\
\bottomrule
\end{tabular}
\end{table}

\begin{table}[t]
\centering
\caption{Signed gap by step index on the HotpotQA traces: $g(t)$ is mean
stated reliability minus observed success among trajectories reaching $t$;
positive values are overconfident. Last column: after pooled isotonic
recalibration.}
\label{tab:drift}
\small
\setlength{\tabcolsep}{5pt}
\renewcommand{\arraystretch}{1.15}
\begin{tabular}{@{}rrccrr@{}}
\toprule
\thc{$t$} & \thc{$n_t$} & \thc{Stated} & \thc{Observed} & \thc{$g(t)$} & \thc{$g(t)$, recal.}\\
\midrule
1 & 300 & 0.490 & 0.657 & $-0.166$ & $-0.072$\\
2 & 299 & 0.634 & 0.659 & $-0.025$ & $-0.027$\\
3 & 256 & 0.842 & 0.680 & $+0.163$ & $+0.014$\\
4 & 55 & 0.624 & 0.345 & $+0.278$ & $+0.263$\\
5 & 29 & 0.559 & 0.276 & $+0.283$ & $+0.316$\\
6 & 15 & 0.453 & 0.133 & $+0.320$ & $+0.414$\\
\bottomrule
\end{tabular}
\end{table}

\subsection{The Chained-QA Companion Experiment in Detail}
\label{app:expdetails-chains}

Section~\ref{sec:empirical} summarizes the companion experiment; this
appendix reports its design and per-condition results in full.

We construct chains of $T=4$ independent QA items from GSM8K
\citep{cobbe2021training} and TriviaQA \citep{joshi2017triviaqa}, using $150$
chains per condition, answered either within one shared context or in
separate contexts. The design provides an exact correctness label for every
step, which real agent traces generally do not contain and which is necessary
for estimating both the pairwise step-error correlations and the
prefix-coupling coefficients $\rho_k$ of Proposition~\ref{prop:positive}. It
also removes several mechanisms that create dependence in agent trajectories.
There is no action--observation loop, no tool output that a later step must
consume, and no conditioning on the preceding step's \emph{answer}; the items
are independent by construction, leaving the shared context window as the
only channel of dependence. The measured correlations are therefore a lower
bound on dependence in interactive agents, and values near zero are expected.

\begin{table}[t]
\centering
\caption{Chained-QA composition experiment: $T=4$ independent QA items per
chain, $150$ chains per condition, Qwen2.5-32B-Instruct. Trajectory ECE
scores the prefix product and the learned aggregator against chain success;
$\hat\rho$ is the mean pairwise correlation of step errors, with a $95\%$
trajectory-bootstrap interval.}
\label{tab:chains}
\small
\setlength{\tabcolsep}{4pt}
\renewcommand{\arraystretch}{1.15}
\begin{tabular}{@{}lcccccc@{}}
\toprule
& \multicolumn{2}{c}{Step level} & & \multicolumn{2}{c}{Trajectory ECE} & \\
\cmidrule(lr){2-3}\cmidrule(lr){5-6}
\thc{Condition} & \thc{Acc.} & \thc{ECE} & \thc{Success $R$} & \thc{Product} & \thc{Learned} & \thc{$\hat\rho$ [95\% CI]}\\
\midrule
GSM8K, shared context & 0.782 & 0.175 & 0.420 & 0.420 & 0.145 & $0.089$ $[+0.023,+0.159]$\\
GSM8K, separate contexts & 0.792 & 0.184 & 0.380 & 0.528 & 0.039 & $-0.000$ $[-0.072,+0.086]$\\
TriviaQA, shared context & 0.740 & 0.159 & 0.293 & 0.361 & 0.132 & $0.001$ $[-0.069,+0.086]$\\
TriviaQA, separate contexts & 0.730 & 0.203 & 0.287 & 0.466 & 0.175 & $0.019$ $[-0.056,+0.095]$\\
\bottomrule
\end{tabular}
\end{table}

The step estimator is poorly calibrated, with stated confidence between
$0.90$ and $0.98$ against the step accuracies in Table~\ref{tab:chains}, and
multiplication roughly doubles to triples the error at the trajectory level,
by a factor of $2.3$ to $2.9$ across the four conditions: the product
predicts an average reliability of $0.65$--$0.91$ against realized chain
success of $0.29$--$0.42$. The learned aggregator is a logistic regression on
five summary statistics of the step-confidence vector (mean, minimum,
maximum, product, and length); it models no dependence between steps, so what
it repairs is step-level miscalibration rather than the composition rule.
For the shared-context GSM8K chains, the only condition with a measured
positive correlation, the step marginals give $\prod_t p_t = 0.372$ against a
realized $R = 0.420$ and a mean reported product of $0.840$, so the
step-calibration term of the decomposition in
Appendix~\ref{app:proofs-pathwise} is $+0.468$ while the dependence term is
$+0.048$. When the measured correlation is near zero, the product of marginal
accuracies tracks realized reliability ($0.392$ predicted, $0.380$ observed
for independent GSM8K chains); when it is positive, realized reliability
exceeds the marginal product by $+0.048$, within the proposition's bound of
$0.095$.

The positive correlation has several limitations. Its interval excludes zero
only narrowly, the lower bound moves between $0.016$ and $0.023$ across
bootstrap seeds, one of four tested conditions is significant with no
multiplicity correction, and the design confounds task identity with the
presence of transferable reasoning content. It supports only the hypothesis
that dependence reflects information transferred across reasoning steps
rather than context sharing alone; testing that hypothesis requires a design
in which a later step must \emph{consume} an earlier output, such as a
multi-hop question whose later hop uses the entity a preceding hop produced,
with more than $150$ chains. The aggregator's intervals do not include
refitting; it is fitted once and evaluated on held-out chains.

\section{The Remaining Open Problems}
\label{app:challenges}

Table~\ref{tab:agenda} presents an agenda of ten problems. Section~\ref{sec:challenges} develops Problems 1--3 because the results of this paper bear on them directly. The remaining seven problems are discussed below using the numbering from that table.

\begin{enumerate}[start=4,leftmargin=1.6em,itemsep=4pt]

\item \textbf{Conformal guarantees under non-exchangeability.}
Conformal prediction generally assumes exchangeability, whereas agent trajectories are sequential and history dependent. Each action changes the future state, and tool responses connect later decisions to earlier ones. Applying split conformal prediction independently at each step therefore provides no trajectory-level guarantee.
Conformal methods have been adapted to language-model generation \citep{kumar2023conformal,quach2024conformal,mohri2024language}. General methods allow bounded departures from exchangeability \citep{tibshirani2019conformal,gibbs2021adaptive,barber2023conformal}, and recent work studies guarantees and attribution in structured agent settings \citep{kotte2026when,feng2026conformal,padhi2026actions}. Several adjacent research directions may provide useful foundations but have not yet been adopted in this literature. Conformal risk control replaces set coverage with a bound on the expected value of any monotone loss \citep{angelopoulos2024conformal}. This objective is closer to a trajectory-level guarantee than set membership. Time-uniform confidence sequences provide validity under continuous monitoring \citep{howard2021timeuniform}, which matches the operating conditions of a step-wise certificate. Time-series conformal methods relax exchangeability because observations arrive in sequence and their distribution may change over time. They recover validity by adapting the level online or aggregating bootstrapped residuals rather than assuming an i.i.d.\ calibration set \citep{xu2021conformal,zaffran2022adaptive,gibbs2024conformal}. Agent trajectories are more difficult because the agent's own actions generate the distribution shift, but the statistical structure is related and this transfer has not yet been studied. Progress requires explicit assumptions about dependence between steps and guarantees for coverage or risk under those assumptions.

\item \textbf{Disentangling and routing uncertainty sources.}
The five sources in Table~\ref{tab:sources} require different responses. Most methods nevertheless compress them into one score, and methods that separate uncertainty by source remain uncommon \citep{donaldson2026bayesian,matsnev2026uncertainty}. A useful estimator should identify the dominant source and associate it with an appropriate action. Attribution would then support routing decisions rather than serve only as a measurement.

\item \textbf{Correlated failure in multi-agent systems.}
Agreement is informative only when agent errors are sufficiently independent. Agents that share a base model, context, or prior messages may fail in correlated ways, causing a shared error to appear as confident consensus \citep{huang2026counterfactual,yoffe2024debunc,smit2024should}. Aggregation rules should therefore estimate inter-agent dependence explicitly. Adding agents with similar failure patterns may reinforce a shared error rather than correct it.

\item \textbf{Efficient uncertainty estimation.}
Methods that require many samples become computationally expensive when repeated at every step. This limits the use of semantic entropy \citep{kuhn2023semantic,farquhar2024detecting} and trajectory-level propagation \citep{zhao2024saup,duan2025uprop} for real-time control. The objective is a reliability estimate that adds little overhead beyond the agent's existing computation. Candidate approaches include hidden-state probes \citep{kossen2024semantic,azaria2023internal}, single-pass propagation, and generation-free scores \citep{zhu2026towards}.

\item \textbf{Interfaces that humans can act on.}
Verbal and numerical confidence affect trust in different ways \citep{zhou2023navigating,zhou2024relying,mielke2022reducing}. Depending on its presentation, the same estimate may produce over-reliance, unnecessary escalation, or alarm fatigue. The interface is therefore part of the decision system. Uncertainty communication should be evaluated through operator outcomes, including successful interventions, missed failures, and false alarms, rather than only through calibration of the underlying score.

\item \textbf{Adversarially robust uncertainty estimation.}
Verbalized confidence is generated from the same context that may contain malicious instructions. An indirect prompt injection \citep{greshake2023not} can therefore influence both the selected action and its reported confidence. Sampling does not necessarily prevent this failure because injected content may affect every sample and create artificial agreement. Internal-state estimators avoid dependence on generated confidence reports \citep{azaria2023internal,kossen2024semantic}. Benchmarks should measure how much the performance of each estimator family degrades under attack, treating robustness as part of uncertainty estimation rather than as a later addition.

\item \textbf{Uncertainty for multimodal and computer-use agents.}
Perceptual uncertainty arises before reasoning. An agent may misread a screen or ground an object incorrectly, and once this error enters $h_t$, a well-calibrated text-side estimator may reason confidently from an incorrect input. Section~\ref{sec:surfaces} reviews the current evidence, which remains limited to early benchmarks and studies of vision-language calibration, navigation monitoring, and multimodal abstention \citep{kumar2026uncertainty,xiao2026vlcalibration,he2025mmboundary,darabi2026groundcontrol,madhusudhan2026knowing,yeke2026yesman}. The open problem is to build methods that distinguish grounding uncertainty from decision uncertainty and benchmarks that provide labels for both.

\end{enumerate}

\section{Use of Large Language Models}
\label{app:llm-use}

Large language models were used to polish the writing of this paper: grammar,
word choice, and sentence-level clarity on text the authors had already
written. They were not used to generate claims, to select or summarize papers,
to build the taxonomy, or to run the experiments, and the authors checked every
citation against the cited work. The one place an LLM enters the method rather
than the prose is the adjudication step of Appendix~\ref{app:corpus}, where a
judge model proposed a label on annotator disagreements and a human set every
final label. The authors take full responsibility for the content of this
paper.

\end{document}